\documentclass[11pt]{article}
\usepackage[margin=1in]{geometry}

\usepackage{graphicx}
\usepackage{amssymb,amsmath,amsfonts}
\usepackage{amsthm}
\usepackage{algorithm}
\usepackage{algpseudocode}
\usepackage{algpascal}
\usepackage{bm}
\usepackage{siunitx}
\usepackage{hyperref}
\usepackage[caption=false,font=footnotesize]{subfig}
\usepackage{xcolor}
\interdisplaylinepenalty=2500

\usepackage{tikz}
\usetikzlibrary{arrows,patterns}
\usetikzlibrary{positioning}
\usetikzlibrary{backgrounds}
\usetikzlibrary{arrows.meta}
\usepackage{cite}

\newcommand{\minimize}{\mathop{\text{minimize}}}

\newcommand{\R}{\mathbb{R}}
\newcommand{\eg}{\textit{e.g.}}
\newcommand{\ie}{\textit{i.e.}}
\newcommand{\defeq}{:=}

\newcommand{\thesetof}[2]{\left\{ #1 \;\middle\vert\; #2 \right\}}
\newcommand{\norm}[1]{\left\Vert #1 \right\Vert}
\newcommand{\abs}[1]{\left\vert #1 \right\vert}
\newcommand{\T}{\top}
\newcommand{\argmin}{\mathop{\operatorname{argmin}}}
\newcommand{\zeros}{\bm{0}}
\newcommand{\ones}{\bm{1}}

\newcommand{\xstar}{x^{\star}}
\newcommand{\Nin}{N^{\text{in}}}
\newcommand{\Nout}{N^{\text{out}}}

\newcommand{\xbar}{\overline{x}}

\newcommand{\Pbar}{\overline{\bm{P}}}

\newcommand{\Ptilde}[1]{\widetilde{\bm{P}}_{_{ #1 }}}

\newcommand{\Graph}{\mathcal{G}}
\newcommand{\Edges}{\mathcal{E}}
\newcommand{\edge}[2]{(v_{#2} \leftarrow v_{#1})}
\newcommand{\vedge}[4]{(v^{(#4)}_{#3} \leftarrow v^{(#2)}_{#1})}
\newcommand{\Vertices}{\mathcal{V}}
\newcommand{\tautranmax}{\overline{\tau}^{\text{msg}}}
\newcommand{\tauratemax}{\overline{\tau}^{\text{proc}}}
\newcommand{\taumax}{\overline{\tau}}
\newcommand{\tautran}[3]{\tau^{\text{msg}}_{ #2 #3 }[ #1 ]}
\newcommand{\taurate}[2]{\tau^{\text{proc}}_{ #2 }[ #1 ]}
\newcommand{\Gset}[1]{\mathcal{A}[ #1 ]}

\newcommand{\Grad}[1]{\nabla \bm{\overline{F}}[ #1 ]}
\newcommand{\Eye}[1]{\bm{I}_{_{#1 \times #1}}}

\newtheorem{assumption}{Assumption}
\newtheorem*{assumption*}{Assumption}
\newtheorem{theorem}{Theorem}
\newtheorem{definition}{Definition}

\newtheorem{lemma}{Lemma}
\newtheorem{corollary}{Corollary}[theorem]
\newtheorem{remark}{Remark}
\newtheorem*{remark*}{Remark}
\newtheorem*{remarkth*}{Remark}

\begin{document}	
\title{Asynchronous Gradient-Push}
\author{Mahmoud Assran and Michael Rabbat%
\thanks{The authors are with Facebook AI Research, Montr\'{e}al, Qu\'{e}bec, Canada, and the Department of Electrical and Computer Engineering, McGill University, Montr\'{e}al, Qu\'{e}bec, Canada. Email: mahmoud.assran@mail.mcgill.ca, mikerabbat@fb.com.
}}
\maketitle
 
\begin{abstract}
We consider a multi-agent framework for distributed optimization where each agent has access to a local smooth strongly convex function, and the collective goal is to achieve consensus on the parameters that minimize the sum of the agents' local functions.  We propose an algorithm wherein each agent operates asynchronously and independently of the other agents. When the local functions are strongly-convex with Lipschitz-continuous gradients, we show that the iterates at each agent converge to a neighborhood of the global minimum, where the neighborhood size depends on the degree of asynchrony in the multi-agent network. When the agents work at the same rate, convergence to the global minimizer is achieved. Numerical experiments demonstrate that Asynchronous Gradient-Push can minimize the global objective faster than state-of-the-art synchronous first-order methods, is more robust to failing or stalling agents, and scales better with the network size.
\end{abstract}

\section{Introduction}

We propose and analyze an asynchronous distributed algorithm to solve the optimization problem
\begin{equation} \label{eq:problem}
\begin{array}{l l}
	\minimize_{x \in \R^d} & F(x) \defeq \sum_{i=1}^n f_i(x)
\end{array}
\end{equation}
where each $f_i:\R^d \to \R$ is smooth and strongly convex. We focus on the multi-agent setting, in which there are $n$ agents and information about the function $f_i$ is only available at the $i^{\text{th}}$ agent. Specifically, only the $i^{\text{th}}$ agent can evaluate $f_i$ and gradients of $f_i$. Consequently, the agents must cooperate to find a minimizer of $F$.

Many multi-agent optimization algorithms have been proposed, motivated by a variety of applications including distributed sensing systems, the internet of things, the smart grid, multi-robot systems, and large-scale machine learning. In general, there have been significant advances in the development of distributed methods with theoretical convergence guarantees in a variety of challenging scenarios such as time-varying and directed graphs (see~\cite{nedic2017network} for a recent survey). However, the vast majority of this literature has focused on \emph{synchronous} methods, where all agents perform updates at the same rate.

This paper studies asynchronous distributed algorithms for multi-agent optimization.
Our interest in this setting comes from applications of multi-agent methods to solve large-scale optimization problems arising in the context of machine learning, where each agent may be running on a different server and the agents communicate over a wired network. Hence, agents may receive multiple messages from their neighbours at any given time instant, and may perform a drastically different number of gradient steps over any time interval. In distributed computing systems, communication delays may be unpredictable; communication links may be unreliable; 
and each processor may be shared for other tasks while at the same time cooperating with other processors in the context of some computational task~\cite{bertsekas1989parallel}. High performance computing clusters fit this model of distributed computing quite nicely~\cite{tsianos2012communication}, especially since node and link failures may be expected in such systems \cite{tsianos2012consensus, kempe2003gossip, dean2013tail}. When a synchronous algorithm is run in such a setting, the rate of progress of the entire system is hampered by the slowest node or communication link; asynchronous algorithms are largely immune to such issues~\cite{rabbat2014asynchronous, bertsekas1989parallel, Lian2018asynchronous, assran2017empirical, cannelli2017asynchronous, hale2017asynchronous, kumar2017asynchronous, aytekin2017asynchronous, wu2016decentralized}.

\subsection{Asynchronous Gradient-Push}

Practical implementations of multi-agent communication---using the Message Passing Interface (MPI)~\cite{gropp1996high} or other message passing standards---often have the notion of a \emph{send-buffer} and a \emph{receive-buffer}.  A send-buffer is a data structure containing the messages sent by an agent, but not yet physically transmitted by the underlying communication system.  A receive-buffer is a data structure containing the messages received by an agent, but not yet processed by the application. 

Using this notion of send- and receive-buffers, the individual-agent pseudocode for running the asynchronous gradient-push method is shown in Algorithm~\ref{alg:agp}. The method repeats a two-step procedure consisting of \textbf{Local Computation} followed by \textbf{Asynchronous Gossip}. During the \textbf{Local Computation} phase, agents update their estimate of the minimizer by performing a local (sub)gradient-descent step. During the \textbf{Asynchronous Gossip} phase, agents copy all outgoing messages into their local send-buffer and subsequently process (sum) all messages received (buffered) in their local receive-buffer while the agent was busy performing the preceding \textbf{Local Computation}. The underlying communication system begins transmitting the messages in the local send-buffer once they are copied there; thereby freeing the agent to proceed to the next step of the algorithm without waiting for the messages to reach their destination.

\begin{figure}[!t]
\alglanguage{pseudocode}
\begin{algorithm}[H]
\caption{Asynchronous Gradient-Push (Pseudocode) for agent $v_i$} \label{alg:agp}
\begin{algorithmic}[1]
\State Initialize $x_i \in \R^d$ \Comment{Push-sum numerator}
\State Initialize $y_i \gets 1$ \Comment{Push-sum weight}
\State Initialize $\alpha_i > 0$ \Comment{Step-size}
\State $\Nout_i \gets \text{number of out-neighbours of } v_i$
\While{stopping criterion not satisfied}{}
	\State \textbf{Begin: Local Computation}
	    \State $z_i \gets x_i / y_i$ \Comment{De-biased consensus estimate}
	    \State $x_i \gets x_i - \alpha_i \nabla f_i(z_i)$ \label{lst:gradient_update}
		\State Update step-size $\alpha_i$
	\State \textbf{Begin: Asynchronous Gossip}
    	\State Copy message $(x_i / \Nout_i, y_i / \Nout_i)$ to local send-buffer
	\State $x_i \gets x_i/\Nout_i + \sum_{(x', y') \in \text{receive-buffer}} x'$
	\State $y_i \gets y_i/\Nout_i + \sum_{(x', y') \in \text{receive-buffer}} y'$
\EndWhile
\end{algorithmic}
\end{algorithm}
\end{figure}

Fig.~\ref{fig:ssp_diag} illustrates the agent update procedure in the synchronous case: agents must wait for all network communications to be completed before moving-on to the next iteration, and, as a result, some agents may experience idling periods. Fig.~\ref{fig:asp_diag} illustrates the agent update procedure in the asynchronous case: at the beginning of each local iteration, agents make use of their message buffers by copying all outgoing messages into their local send-buffers, and by retrieving all messages from their local receive-buffers. The underlying communication systems subsequently transmit the messages in the send-buffers while the agents proceed with their computations.

\begin{figure}[!t]
\centering
\subfloat[\label{fig:ssp_diag} ]{\includegraphics[width=\columnwidth]{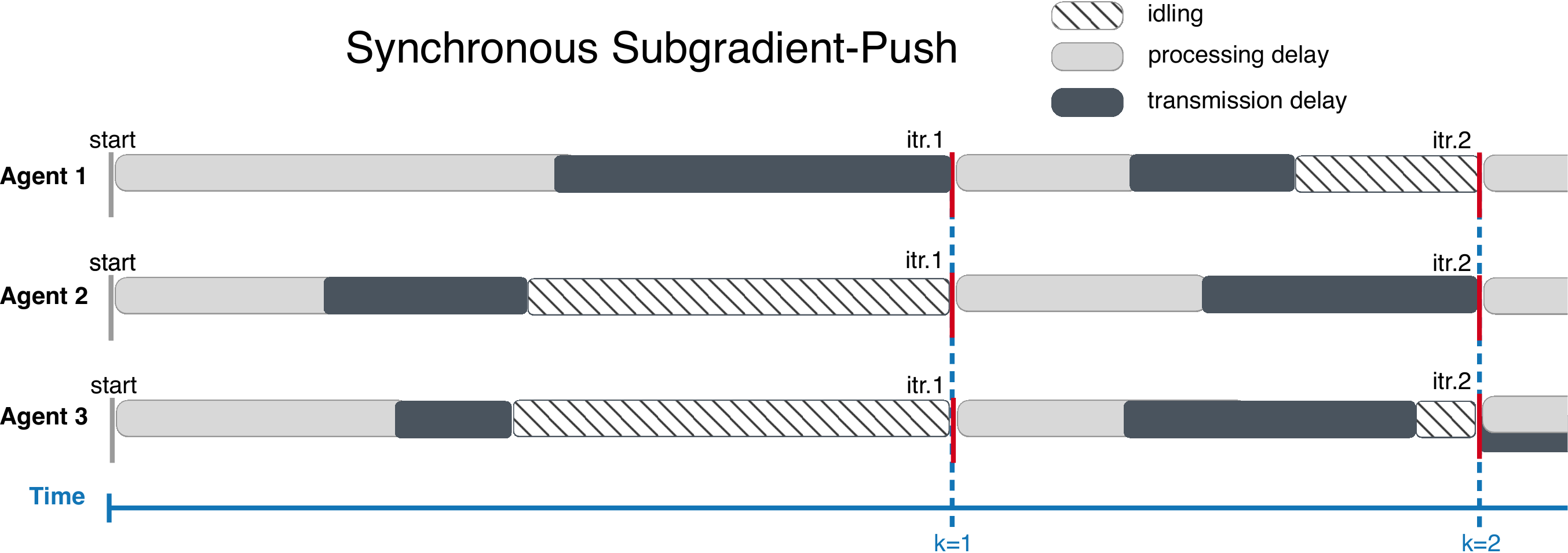}} \\
\subfloat[\label{fig:asp_diag}]{\includegraphics[width=\columnwidth]{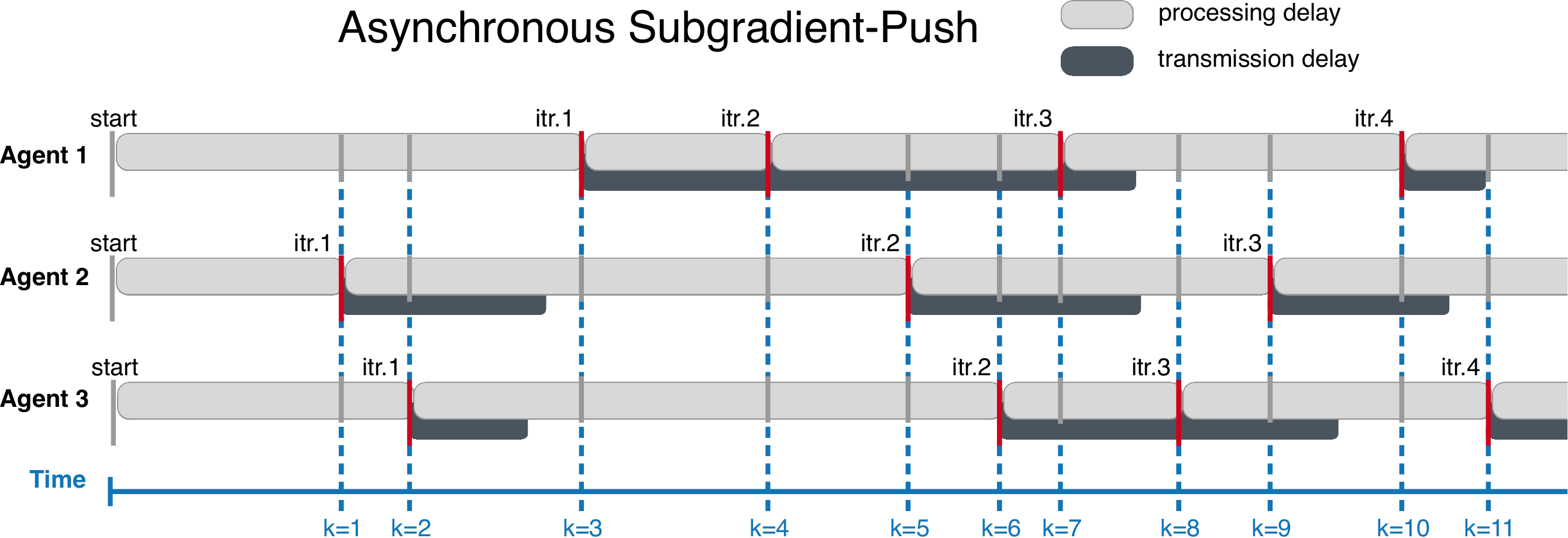}}
\caption{Example of agent updates in synchronous and asynchronous Gradient-Push implementations. Processing delays correspond to the time required to perform a local iteration. Transmission delays correspond to the time required for all outgoing message to arrive at their destination buffers. Even though a message arrives at a destination agent's receive-buffer after some real (non-integer valued) delay, that message is only processed when the destination agents performs its next update.}
\end{figure}

\subsection{Related Work}
\label{sec:related-work}

Most multi-agent optimization methods are built on distributed averaging algorithms~\cite{nedic2007rate}. For synchronous methods operating over static, undirected networks, it is possible to use doubly stochastic averaging matrices. However, it turns out that averaging protocols which rely on doubly stochastic matrices may be undesirable for a variety of reasons~\cite{tsianos2012consensus}. The Push-Sum approach to distributed averaging, introduced in~\cite{kempe2003gossip}, eliminates the need for doubly stochastic consensus matrices. The seminal work on Push-Sum~\cite{kempe2003gossip} analyzed convergence for complete network topologies (all pairs of agents may communicate directly). The analysis was extended in~\cite{benezit2010weighted} for general connected graphs. Further work has provided convergence guarantees in the face of the other practical issues, such as communication delays and dropped messages~\cite{charalambous2015distributed, hadjicostis2014average}. In general, Push-Sum is attractive for implementations because it can easily handle directed communication topologies, and thus avoids incidents of deadlock that may occur in practice when using undirected communication topologies~\cite{tsianos2012consensus}.

\subsubsection*{Multi-Agent Optimization with Column Stochastic Consensus Matrices}
The first multi-agent optimization algorithm using Push-Sum for distributed averaging was proposed in~\cite{tsianos2012push}. Nedi\'{c} and Olshevsky~\cite{nedich2015distributed} continue this line of work by introducing and analyzing the Subgradient-Push method; the analysis in~\cite{nedich2015distributed} focuses on minimizing (weakly) convex, Lipschitz functions, for which diminishing step-sizes are required to obtain convergence. Xi and Khan~\cite{xi2017dextra} propose DEXTRA and Zeng and Yin~\cite{zeng2015extrapush} propose Extra-Push, both of which use the Push-Sum protocol in conjunction with gradient tracking to achieve geometric convergence for smooth, strongly convex objectives over directed graphs. Nedi\'{c}, Olshevsky, and Shi~\cite{nedic2017achieving} propose the Push-DIGing algorithm, which achieves a geometric convergence rate over directed and time-varying communication graphs. Push-DIGing and DEXTRA/Extra-Push are considered to be state-of-the-art synchronous methods, and the Subgradient-Push algorithm is a multi-agent analog of classical gradient descent. It should be noted that all of these algorithms are \emph{synchronous} in nature.

\subsubsection*{Asynchronous Multi-Agent Optimization}
The seminal work on asynchronous distributed optimization of Tsitsiklis \textit{et al.}~\cite{tsitsiklis1986distributed} considers the case where each agent holds one component of the optimization variable (or the entire optimization variable), and can locally evaluate a descent direction with respect to the global objective. Convergence is proved for a distributed gradient algorithm in that setting. However that setting is inherently different from the proposed problem formulation in which each agent does not necessarily have access to the global objective. Li and Basar~\cite{li1987asymptotic} study distributed asynchronous algorithms and prove convergence and asymptotic agreement in a stochastic setting, but assume a similar computation model to that of Tsitsiklis \textit{et al.}~\cite{tsitsiklis1986distributed} in which each agent updates a portion of the parameter vector using an operator which produces contractions with respect to the global objective.

Recently, several asynchronous multi-agent optimization methods have been proposed, such as:~\cite{wu2016decentralized}, which requires doubly-stochastic consensus over undirected graphs;~\cite{eisen2017decentralized,Lian2018asynchronous}, which require push-pull consensus over undirected graphs; and~\cite{mansoori2017superlinearly}, which assumes a model of asynchrony in which agents become activated according to a Poisson point process, and an active agent finishes its update before the next agent becomes activated.
In general, many of the asynchronous multi-agent optimization algorithms in the literature make restrictive assumptions regarding the nature of the agent updates ($\eg$, sparse Poisson point process~\cite{mansoori2017superlinearly}, randomized single activation~\cite{boyd2006randomized, dimakis2010gossip}, randomized multi-activation~\cite{nedic2011asynchronous, iutzeler2013asynchronous, wei20131, bianchi2016coordinate}).

\subsection{Contributions and Paper Organization}

We study an asynchronous implementation of the Subgradient-Push algorithm. Since we focus on problems with continuously differentiable objectives, we refer to the method as \emph{asynchronous Gradient-Push} (AGP). This paper draws motivation from our previous work~\cite{assran2017empirical} in which we empirically studied AGP and observed that it converges faster than state-of-the-art synchronous multi-agent algorithms. In this paper we provide theoretical convergence guarantees: when the local objective functions are strongly convex with Lipschitz-continuous gradients, we show that the iterates at each agent achieve consensus and converge to a neighborhood of the global minimum, where the size of the neighborhood depends on the degree of asynchrony. We consider a model of asynchrony which allows for heterogenous, bounded computation delays and communication delays. When the agents work at the same rate, convergence to the global minimizer is achieved. Moreover, if agents have knowledge of one another's potentially time-varying update rates, then they can set their step-sizes to achieve convergence to the global minimizer. In general, we relate the asymptotic worst-case error to the degree of asynchrony, as quantified by a bound on the delay. Agents do not need to know the delay bounds to execute the algorithm; the bounds only appear in the analysis.

Our analysis is based on several novel aspects: whereas previous work has used graph augmentation to model communication delays in consensus algorithms, here we augment with virtual nodes to model the effects of both communication \emph{and} computation delays on message passing in optimization algorithms.
Combining the graph augmentation with a (possibly time-varying) binary-valued activation function that is unique to each agent and directly multiplies its step-size, we are able to model the effect of heterogeneous update rates on the optimization procedure.
In contrast to previous work that makes additional assumptions on the agents' update rates, our problem formulation only assumes that the time-interval between an agents' consecutive activations is bounded.
Specifically, this formulation readily allows us to characterize the limit point as a \emph{deterministic function} of the agents' update rates, and to bound the rate of convergence when running AGP with constant or diminishing step-sizes.
Since synchronous gradient-push is a special case of AGP (with zero communication delay and unit computation delays), we obtain the first theoretical convergence guarantees for gradient-push with constant step-size.

We also develop peripheral results concerning an asynchronous version of the Push-Sum protocol used for consensus averaging that may be of independent interest. In particular, we show that agents running the Push-Sum protocol asynchronously converge to the average of the network geometrically fast, even in the presence of exogenous perturbations at each agent, where the constant of geometric convergence depends on the consensus-matrices' degree of ergodicity~\cite{hajnal1958weak} and a measure of asynchrony in the network.

In Sec.~\ref{sec:system-model} we describe the model of asynchrony considered in this paper. In Sec.~\ref{sec:asynchronous_push_sum} we expound the Asynchronous Perturbed Push-Sum consensus averaging protocol and give the associated convergence results. In Sec.~\ref{sec:asynchronous_subgradient_push} we formally describe the AGP optimization algorithm and present our main convergence results for both the constant and diminishing step-size cases. Sec.~\ref{sec:analysis} is devoted to the proof of the main results, and in Sec.~\ref{sec:experiments} we report numerical experiments on a high performance computing cluster. Finally, in Sec.~\ref{sec:conclusion}, we conclude and discuss extensions for future work.

\section{System Model}
\label{sec:system-model}

\subsection{Communication}
The multi-agent communication topology is represented by a directed graph $\Graph(\Vertices, \Edges)$, where
\begin{align*}
	\Vertices \defeq& \thesetof{v_i}{ i = 1, \ldots, n}, \\
	\Edges \defeq& \thesetof{ \edge{i}{j} }{ v_i \text{ can send messages to}\ v_j },
\end{align*}
are the set of agents and edges respectively. We refer to $\Graph(\Vertices, \Edges)$ as the \emph{reference graph} for reasons that will become apparent when we augment the graph with virtual agents. Let
\begin{align*}
	\Nin_j &\defeq \mathbf{card} \left(\thesetof{v_i}{ \edge{i}{j} \in \Edges} \right) \\
	\Nout_j &\defeq \mathbf{card} \left(\thesetof{v_i}{ \edge{j}{i} \in \Edges} \right)
\end{align*}
denote the cardinality of the \emph{in-neighbor} set and \emph{out-neighbor} set of agent $v_j$, respectively; we adopt the convention that $\edge{i}{i} \in \Edges$ for all $i$, \ie, every agent can send messages to itself.

\subsection{Discrete event sequence}

Without any loss of generality we can describe and analyze asynchronous algorithms as discrete sequences since all events of interest, such as message transmissions/receptions and local variable updates, may be indexed by a discrete-time variable~\cite{tsitsiklis1986distributed}.
We adopt notation and terminology for analyzing asynchronous algorithms similar to that developed in~\cite{tsitsiklis1986distributed}. Let $t[0] \in \R_{+}$ denote the time at which the agents begin optimization. We assume that there is a set of times $T = \{t[1], t[2], t[3], \ldots, \}$ at which one or more agents become \emph{activated}; i.e., completes a \textbf{Local Computation} and begins \textbf{Asynchronous Gossip}.
Let $T_i \subseteq T$ denote the subset of times at which agent $v_i$ in particular becomes activated.
Let $\Gset{k} \defeq \thesetof{v_i}{ t[k] \in T_i}$ denote the \emph{activation set} at time-index $k \in \mathbb{N}$, which is the set of agents that are activated at time $t[k]$.
For convenience, we also define the functions $\pi_i(k) \defeq \max \thesetof{k^\prime \in \mathbb{N}}{k^\prime < k,\ v_i \in \Gset{k^\prime}}$ for all $i$, which return the most recent time-index --- up to, but not including, time-index $k$ --- when agent $v_i$ was in the activation set.\footnote{To handle the corner-case at $k = 1$, we let $\pi_i(1)$ equal $0$ for all $i$.}

\subsection{Delays}
Recall that $t[k] \in T_i$ denotes a time at which agent $v_i$ becomes \emph{activated}: it completes a \textbf{Local Computation} (i.e., performs an update) and begins \textbf{Asynchronous Gossip} (i.e., sends a message to its neighbours by copying the outgoing message into its local send-buffer). For analysis purposes, messages are sent with an \emph{effective delay} such that they arrive right when the agent is ready to process the messages. That is, a message that is sent at time $t[k]$ and processed by the receiving agent at time $t[k^\prime]$, where $k' > k$, is treated as having experienced a time delay $t[k^\prime] - t[k]$ for the purpose of analysis, or equivalently a time-index delay $k^\prime - k$, even if the message actually arrives before $t[k^\prime]$ and waits in the receive-buffer.

Let $\taurate{k}{i} \defeq k - \pi_i(k)$ (defined for all $k$ such that $t[k] \in T_i$) denote the time-index processing delay experienced by agent $v_i$ at time $t[k]$. In words, if agent $v_i$ performs an update at some time $t[k]$, then it performed its last update at time $t[k - \taurate{k}{i}]$.
We assume that there exists a constant $\tauratemax < \infty$ independent of $i$ and $k$ such that $1 \leq \taurate{k}{i} \leq \tauratemax$.

Similarly, let $\tautran{k}{j}{i}$ (defined for all $k$ such that $t[k] \in T_j$) denote the time-index message delay experienced by a message sent from agent $v_i$ to agent $v_j$ at time $t[k]$. In words, if agent $v_i$ sends a message to agent $v_j$ at time $t[k]$, then agent $v_j$ will begin processing that message at time $t[k + \tautran{k}{j}{i}]$. We assume that there exists a constant $\tautranmax < \infty$ independent of $i$, $j$, and $k$, such that $\tautran{k}{j}{i} \leq \tautranmax$. In addition, we use the convention that $\tautran{k}{i}{i}= 0$ for all $i$ and $k \in \mathbb{N}$, meaning that agents always have immediate access to their most recent local variables. Thus $0 \leq \tautran{k}{j}{i} \leq \tautranmax$.

Since all agents enter the activation set (\ie, complete an update and initiate a message transmission to all their out-neighbors) at least once every $\tauratemax - 1$ time-indices, and because all messages are processed within at most $\tautranmax$ time-indices from when they are sent, it follows that each agent is guaranteed to process at least one message from each of its in-neighbors every $\taumax \defeq \tautranmax + \tauratemax - 1$ time-indices.

\subsection{Augmented Graph}
To analyze the AGP optimization algorithm we augment the reference graph by adding $\tautranmax$ virtual agents for each non-virtual agent. Similar graph augmentations have been used in~\cite{charalambous2015distributed, hadjicostis2014average} for synchronous averaging with transmission delays. One novel aspect of the augmentation described here is the use of virtual agents to model the effects of computation delays on message passing. To state the procedure concisely: for each non-virtual agent, $v_j$, we add $\tautranmax$ virtual agents, $v^{(1)}_j, v^{(2)}_j, \ldots, v^{(\tautranmax)}_j$, where each $v^{(r)}_j$ contains the messages to be received by agent $v_j$ in $r$ time-indices. As an aside, we may interchangeably refer to the non-virtual agents, $v_j$, as $v^{(0)}_j$ for the purpose of notational consistency. The virtual agents associated with agent $v_j$ are daisy-chained together with edges $\vedge{j}{r}{j}{r - 1}$, such that at each time-index $k$, and for all $r = 1, 2, \ldots, \tautranmax$, agent $v^{(r)}_j$ forwards its summed messages to agent $v^{(r - 1)}_j$. In addition, for each edge $\vedge{i}{0}{j}{0}$ in the reference graph (where $j \neq i$), we add the edges $\vedge{i}{0}{j}{r}$ in the augmented graph. This augmented model simplifies the subsequent analysis by enabling agent $v_i$ to send a message to agent $v^{(r)}_j$ with delay zero, rather than send a message to agent $v_j$ with delay $r$.\footnote{It is worth pointing out that we have not changed our definitions for the edge and vertex sets $\Edges$ and $\Vertices$ respectively; they are still solely defined in-terms of the non-virtual agents.} See Fig.~\ref{fig:aug_graph} for an example.

\begin{figure}[!t]
\centering

\tikzstyle{edge} = [very thick, -{Latex}]
\tikzstyle{vedge} = [very thick, cyan, dashdotted, -{Latex}]
\tikzstyle{non-virtualnode} = [draw, circle, fill=white, very thick]
\tikzstyle{virtualnode} = [draw, circle, cyan, fill=cyan!5, very thick, dashdotted]

% Diagram
\resizebox{0.6\columnwidth}{!}{
\begin{tikzpicture}

% Nodes
\node[non-virtualnode] (topcircle) {$v_1$};
\node[non-virtualnode] (leftcircle) [below left = 1cm of topcircle] {$v_4$};
\node[non-virtualnode] (rightcircle) [below right = 1cm of topcircle] {$v_2$};
\node[non-virtualnode] (bottomcircle) [below right = 1cm of leftcircle] {$v_3$};

% Virtual Nodes
\node[virtualnode] (v_topcircle1) [above left = 0.5cm of topcircle] {\footnotesize{$v^{(1)}_1$}};
\node[virtualnode] (v_topcircle2) [left = 0.5cm of v_topcircle1] {\footnotesize{$v^{(2)}_1$}};
% --
\node[virtualnode] (v_rightcircle1) [above right = 0.5cm of rightcircle] {\footnotesize{$v^{(1)}_2$}};
\node[virtualnode] (v_rightcircle2) [above = 0.5cm of v_rightcircle1] {\footnotesize{$v^{(2)}_2$}};
% --
\node[virtualnode] (v_bottomcircle1) [below right = 0.5cm of bottomcircle] {\footnotesize{$v^{(1)}_3$}};
\node[virtualnode] (v_bottomcircle2) [right = 0.5cm of v_bottomcircle1] {\footnotesize{$v^{(2)}_3$}};
% --
\node[virtualnode] (v_leftcircle1) [below left = 0.5cm of leftcircle] {\footnotesize{$v^{(1)}_4$}};
\node[virtualnode] (v_leftcircle2) [below = 0.5cm of v_leftcircle1] {\footnotesize{$v^{(2)}_4$}};
% --

% Lines
\draw[edge] (leftcircle) -- (topcircle);
\draw[edge] (bottomcircle) -- (leftcircle);
\draw[edge] (rightcircle) -- (bottomcircle);
\draw[edge] (topcircle) -- (rightcircle);
\draw[edge] (bottomcircle) -- (topcircle);

\draw[edge] (topcircle) to[loop above, looseness=15] (topcircle);
\draw[edge] (leftcircle) to[loop left, looseness=15] (leftcircle);
\draw[edge] (rightcircle) to[loop right, looseness=15] (rightcircle);
\draw[edge] (bottomcircle) to[loop below, looseness=15] (bottomcircle);

% Virtual Lines
% --
\draw[vedge] (leftcircle) to[bend left=10] (v_topcircle1);
\draw[vedge] (leftcircle) to[bend left=15] (v_topcircle2);
% |
\draw[vedge] (bottomcircle) to[bend right=5] (v_topcircle1);
\draw[vedge] (bottomcircle) to[bend right=5] (v_topcircle2);
% |
\draw[vedge] (v_topcircle1) to[bend left=5] (topcircle);
\draw[vedge] (v_topcircle2) to[bend left=5] (v_topcircle1.west);
% --
\draw[vedge] (topcircle) to[bend left=10] (v_rightcircle1);
\draw[vedge] (topcircle) to[bend left=20] (v_rightcircle2);
% |
\draw[vedge] (v_rightcircle1) to[bend left=5] (rightcircle);
\draw[vedge] (v_rightcircle2) to[bend left=5] (v_rightcircle1);
% --
\draw[vedge] (rightcircle) to[bend left=10] (v_bottomcircle1);
\draw[vedge] (rightcircle) to[bend left=15] (v_bottomcircle2);
% |
\draw[vedge] (v_bottomcircle1) to[bend left=5] (bottomcircle);
\draw[vedge] (v_bottomcircle2) to[bend left=5] (v_bottomcircle1);
% --
\draw[vedge] (bottomcircle) to[bend left=10] (v_leftcircle1);
\draw[vedge] (bottomcircle) to[bend left=15] (v_leftcircle2);
% |
\draw[vedge] (v_leftcircle1) to[bend left=5] (leftcircle);
\draw[vedge] (v_leftcircle2) to[bend left=5] (v_leftcircle1);
% --
\end{tikzpicture}
} 
\caption{Sample augmented graph of a $4$-agent reference network with a maximum time-index message transmission delay of $\tautranmax = 2$ time-indices. Solid lines correspond to non-virtual agents and edges. Dashed lines correspond to virtual agents and edges.} \label{fig:aug_graph}
\end{figure}

To adapt the augmented graph model for optimization we formulate the equivalent optimization problem
\begin{equation} \label{eq:aug_problem}
    \minimize\  \overline{F}(x) \defeq \sum^{\tautranmax}_{r=0} \sum_{i=1}^{n} f^{(r)}_i(x),
\end{equation}
where 
\[
f^{(r)}_i(x) = \begin{cases} f_i(x) &\text{ if } r=0,\\ 0 &\text{ otherwise.}\end{cases}
\]
In words, each of the non-virtual agents, $v^{(0)}_i$, maintains its original objective function $f_i( \cdot )$, and all the virtual agents are simply given the zero objective. Clearly $\overline{F}(x)$ defined in~\eqref{eq:aug_problem} is equal to $F(x)$ defined in~\eqref{eq:problem}.
We denote the state of a variable $x$ at time $t[k]$ with an augmented state matrix $\bm{x}[k] \in \R^{n(\tautranmax + 1) \times d}$
\begin{align}
\label{eq:notation}
\begin{split}
    \bm{x}[k] &\defeq 
        \begin{bmatrix}
            \bm{x}^{(0)}[k] \\
            \bm{x}^{(1)}[k] \\
            \vdots \\
            \bm{x}^{(\tautranmax)}[k]
        \end{bmatrix},
\end{split}
\end{align}
where each $\bm{x}^{(r)}[k] \in \R^{n \times d}$ is a block matrix that holds the copy of the variable $x$ at all the delay-$r$ agents in the augmented graph at time-index $k$.\footnote{In keeping with this notation, the block matrix $\bm{x}^{(0)}[k]$ corresponds to the non-virtual agents in the network.} More specifically, $x^{(r)}_i[k] \in \R^{d}$, the $i^{th}$ row of $\bm{x}^{(r)}[k]$, is the local copy of the variable $x$ held locally at agent $v^{(r)}_i$ at time-index $k$; below we generalize this notation for other variables as well.

For ease of exposition, we assume that the reference-graph is static and strongly-connected. The strongly-connected property of the directed graph is necessary to ensure that all agents are capable of influencing each other's values, and in Sec.~\ref{sec:conclusion} we describe how one can extend our analysis to account for time-varying directed communication-topologies.

\section{Asynchronous Perturbed Push-Sum}
\label{sec:asynchronous_push_sum}

Consensus-averaging is a fundamental building block of the proposed AGP algorithm. In this subsection we consider an asynchronous version of the synchronous Perturbed Push-Sum Protocol~\cite{nedich2015distributed}.
If we omit the gradient update in line~\ref{lst:gradient_update} of Algorithm~\ref{alg:agp}, then we recover the pseudocode for an asynchronous formulation of the Push-Sum consensus averaging protocol. Alternatively, if we replace the gradient term in line~\ref{lst:gradient_update} of Algorithm~\ref{alg:agp} with a general perturbation term, then we recover an asynchronous formulation of the Perturbed Push-Sum consensus averaging protocol.

\subsection{Formulation of Asynchronous (Perturbed) Push-Sum}
We describe the Asynchronous Perturbed Push-Sum algorithm in matrix form (which will facilitate analysis below) by stacking all of the agents' parameters at every update time into a parameter matrix using a similar notation to that in~\eqref{eq:notation}. The entire \textbf{Asynchronous Gossip} procedure can then be represented by multiplying the parameter-matrix by a so-called \emph{consensus-matrix} that conforms to the graph structure of the communication topology. The consensus matrices $\Pbar[k] \in \R^{n(\tautranmax + 1) \times n(\tautranmax  + 1)}$ for the augmented state model are defined as
\begin{equation}
\label{eq:augmented_consensus_matrix}
\Pbar[k] \defeq
\left[ \begin{array}{c c c c c}
    \Ptilde{0}[k] & \Eye{n} & \zeros & \cdots & \zeros \\[0.05in]
    \Ptilde{1}[k] & \zeros & \Eye{n}  & \cdots & \zeros \\[0.05in]
    \vdots & \vdots & \vdots  & \ddots & \vdots \\[0.05in]
    \Ptilde{\taumax - 1}[k] & \zeros & \zeros & \cdots & \Eye{n} \\[0.05in]
    \Ptilde{\tautranmax}[k] & \zeros & \zeros & \cdots & \zeros
\end{array} \right],
\end{equation}
where each $\Ptilde{r}[k] \in \R^{n \times n}$ is a block matrix defined as
\begin{equation}
\label{eq:augmented_consensus_matrix_subblock}
\left[ \Ptilde{r}[k] \right]_{ji} \defeq
\begin{cases}
    \frac{1}{\Nout_i}, & v_i \in \Gset{k},\ (j, i) \in \Edges,\ \text{and}\ \tautran{k}{j}{i} = r, \\
    1, & v_i \notin \Gset{k},\ r=0,\ j=i, \\
    0, & \text{otherwise.}
\end{cases}
\end{equation}
In words, when a non-virtual agent is in the activation set, it sends a message to each of its out-neighbours in the reference graph with some arbitrary, but bounded, delay $r$.
When a non-virtual agent is not in the activation set, it keeps its value and does not gossip. Furthermore, since we have chosen a convention in which messages between agents are sent with some effective message delay, $\tautran{k}{j}{i}$, it follows that non-virtual agents do not receive any new messages while outside the activation set. Virtual agents, on the other hand, simply forward all of their messages to the next agent in the delay chain at all time-indices $k$, and so there is no notion of virtual agents belonging to (or not belonging to) the activation set. The activation set is exclusively a construct for the non-virtual agents. Observe that, by definition, the matrices $\Pbar[k]$ are column stochastic at all time-indices $k$.

\begin{figure}[!t]
\begin{algorithm}[H]
\caption{Asynchronous Perturbed Push-Sum Averaging}
\label{eq:asynch_pert}
\begin{algorithmic}
\For{k = 0, 1, 2, \ldots}{\text{termination}}
\begin{align}
        \bm{x}[k +1] &= \Pbar[k] \left( \bm{x}[k] + \bm{\eta}[k] \right) \label{eq:pert_itr_1} \\
        y[k + 1] &= \Pbar[k] y[k] \label{eq:pert_itr_2} \\
       \bm{z}[k + 1] &= \text{diag}(y[k + 1])^{-1}\bm{x}[k + 1] \label{eq:pert_itr_3}
\end{align}
\end{algorithmic}
\end{algorithm}
\end{figure}

To analyze the Asynchronous Perturbed Push-Sum averaging algorithm from a global perspective, we use the matrix-based formulation provided in Algorithm~\ref{eq:asynch_pert}, where $\bm{\eta}[k] \in \R^{n(\tautranmax + 1) \times d}$ is a perturbation term, and the matrices $\Pbar[k]$ are as defined in~\eqref{eq:augmented_consensus_matrix} for the augmented state, and $\bm{x}[k]$, $y[k]$, and $\bm{z}[k]$ are also defined with respect to the augmented state. At all time-indices $k$, each agent $v^{(r)}_i$ locally maintains the variables $x^{(r)}_i[k], z^{(r)}_i[k], \in \R^d$, and $y^{(r)}_i[k] \in \R$. The non-virtual agent initializations are $x^{(0)}_i[0] \in \R^d$, and $y^{(0)}_i[0] = 1$. The virtual agent initializations are $x^{(r)}_i[0] = \zeros$, and $y^{(r)}_i[0] = 0$ (for all $r \neq 0$).\footnote{Note, given the initializations, the virtual agents could potentially have $z^{(r)}_i[k +1] = 0/0$ (division by zero) in update equation~\eqref{eq:asynch_itr_3}, but this is a non-issue since $z^{(r)}_i$ (for all $r \neq 0$) is never used.} This matrix-based formulation describes how the agents' values evolve at those times $t[k + 1] \in T = \{t[1], t[2], t[3], \ldots, \}$ when one or more agents complete an update, which in this case consists of summing received messages. The time-varying consensus-matrices $\Pbar[ \cdot ]$ capture the asynchronous delay-prone communication dynamics.

\subsection{Main Results for Asynchronous (Perturbed) Push-Sum}

In this subsection we present the main convergence results for the Asynchronous (Perturbed) Push-Sum consensus averaging protocol. We briefly describe some notation in order to state the main results. Let $\Nout_{max} \defeq \max_{1 \leq j \leq n} \Nout_j$ represent the maximum number of out-neighbors associated to any non-virtual agent. Let $\xbar[k] \defeq \ones^{\T} \bm{x}[k] / n$ be the mutual time-wise average of the variable $x$ at time-index $k$. Let the scalar $\psi$ represent the number of possible types (zero/non-zero structures) that an $n(\tautranmax + 1) \times n(\tautranmax + 1)$ stochastic, indecomposable, and aperiodic (SIA) matrix can take (hence $\psi < {2^{(n(\tautranmax + 1) )}}^2$).\footnote{See~\cite{wolfowitz1963products} for a definition of SIA matrices.} Let the scalar $\lambda > 0$ represent the maximum Hajnal Coefficient of Ergodicity~\cite{hajnal1958weak} taken over the product of all possible $(\taumax + 1)(\psi + 1)$ consecutive consensus-matrix products:
\[
    \lambda \defeq \max_{\bm{A}} \left( 1 - \min_{j_1, j_2} \sum_{i} \min \left \{ \left[\bm{A} \right]_{i, j_1}, \left[ \bm{A} \right]_{i, j_2} \right \} \right),
\]
such that
\[
	\bm{A} \in \thesetof{ \Pbar[k + (\taumax + 1)(\psi + 1)] \cdots \Pbar[k + 2] \Pbar[k + 1]}{ k \geq 0 },
\]
where $\taumax \defeq \tautranmax + \tauratemax - 1$.
We prove that $\lambda$ is strictly less than $1$ and guaranteed to exist. Let $\delta_{min}$ represent a lower bound on the entries in the first $n$-rows of the product of $n(\taumax + 1)$ or more consecutive consensus-matrices (rows only corresponding to the non-virtual agents):
\[
	\delta_{min} \defeq \min_{\substack{i, j, k, \ell}} \left[ \Pbar[k + \ell] \cdots \Pbar[k + 2] \Pbar[k + 1] \right]_{i, j},
\]
where the $\min$ is taken over all $i = 1, 2, \ldots, n$, $j = 1, 2, \ldots, n(\tauratemax + 1)$, $k \geq 0$, and $\ell \geq n(\tauratemax + 1)$.

% ------------------------------------------------------------------------------------------- %
\begin{assumption}[Communicability]
\label{ass:base_comm}
All agents influence each other's values sufficiently often, in particular:
\begin{enumerate}
    \item The reference graph $\Graph(\Vertices, \Edges)$ is static and strongly connected.
    \item The communication and computation delays are bounded: $\tautranmax < \infty$ and $\tauratemax < \infty$.
\end{enumerate}
\end{assumption}
% ------------------------------------------------------------------------------------------- %

% ------------------------------------------------------------------------------------------- %
\begin{theorem}[Convergence Rate of Asynchronous Perturbed Push-Sum Averaging]
\label{th:avg_rate}
Suppose that Assumption~\ref{ass:base_comm} is satisfied. Then it holds for all $i = 1, 2, \ldots, n$, and $k \geq 0$, that
\begin{align*}
    \norm{ z^{(0)}_i[k] - \xbar[k] }_1 \leq& C q^k \norm{ x^{(0)}_i[0]}_{1} + C \sum^k_{s=0}q^{k - s} \norm{ \eta_i[s] }_{1},
\end{align*}
where $q \in (0,1)$ and $C > 0$ are given by
\begin{align*}
	q = \lambda^\frac{1}{{(\psi + 1)(\taumax + 1)}}, \quad \text{and} \quad C < \frac{2}{\lambda^{\frac{\psi + 2}{\psi + 1}} \delta_{min}} \approx \frac{2}{\lambda \delta_{min}},
\end{align*}
and $\delta_{min} = \left( \frac{1}{\Nout_{max}} \right)^{n(\taumax + 1)}$.
\end{theorem}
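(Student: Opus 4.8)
The plan is to unroll Algorithm~\ref{eq:asynch_pert} into transition (backward product) matrices and then control the de-biased error through two ingredients: a geometric mixing estimate showing that the columns of the products collapse onto a common vector, and a uniform lower bound on the push-sum weights $y^{(0)}_i[k]$ so that the division in~\eqref{eq:pert_itr_3} is well-behaved. Writing $\Phi(k, s) \defeq \Pbar[k-1]\Pbar[k-2]\cdots\Pbar[s]$ for $k > s$ and $\Phi(s,s) \defeq \Eye{n(\tautranmax+1)}$, the recursions~\eqref{eq:pert_itr_1}--\eqref{eq:pert_itr_2} give
\[
\bm{x}[k] = \Phi(k, 0)\,\bm{x}[0] + \sum_{s=0}^{k-1} \Phi(k, s)\,\bm{\eta}[s], \qquad y[k] = \Phi(k, 0)\,y[0].
\]
Because each $\Pbar[k]$ is column stochastic, $\ones^\T \bm{x}[k] = \ones^\T \bm{x}[0] + \sum_{s<k}\ones^\T \bm{\eta}[s]$, so $n\xbar[k]$ equals this total and the average moves only through the perturbations. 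The crux is a mixing estimate producing a stochastic vector $\phi(k)$ (the common limiting column) such that both $x^{(0)}_i[k]$ and $y^{(0)}_i[k]$ factor as $\phi_i(k)$ times their respective totals; then $z^{(0)}_i[k] = x^{(0)}_i[k]/y^{(0)}_i[k]$ reproduces $\xbar[k]$ up to an error that, once divided by a lower bound on $y^{(0)}_i[k]$, yields the claimed geometric bound.

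The combinatorial core, and the step I expect to be the main obstacle, is proving that $\lambda$ exists and satisfies $\lambda < 1$. First I would show that any product of $\taumax + 1$ consecutive consensus matrices is SIA: strong connectivity of the reference graph (Assumption~\ref{ass:base_comm}), together with the daisy-chain edges among the virtual agents and the guarantee established in Sec.~\ref{sec:system-model} that each non-virtual agent processes a message from every in-neighbor within $\taumax$ time-indices, forces the corresponding product to have a fixed zero/non-zero pattern that is stochastic, indecomposable, and aperiodic. Since there are only $\psi$ possible such patterns, among any $\psi + 1$ consecutive SIA blocks two must share a pattern; a Wolfowitz-type argument~\cite{wolfowitz1963products} then certifies that the product of these $\psi + 1$ blocks is scrambling, so its Hajnal coefficient of ergodicity is strictly below $1$. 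Taking the maximum over the finitely many attainable patterns yields $\lambda < 1$ over the window $W \defeq (\psi + 1)(\taumax + 1)$. Carrying the column-stochastic (rather than row-stochastic) convention through this argument and keeping the virtual agents bookkept correctly is the delicate part.

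With $\lambda < 1$ in hand, submultiplicativity of the coefficient of ergodicity, $\tau(\bm{A}\bm{B}) \leq \tau(\bm{A})\tau(\bm{B})$, gives $\tau(\Phi(k,s)) \leq \lambda^{\lfloor (k-s)/W\rfloor} \leq \lambda^{-1} q^{k-s}$ with $q = \lambda^{1/W} \in (0,1)$. Since each column of $\Phi(k,s)$ is a probability vector, $\tau$ equals half the maximum $\ell_1$ difference between columns, so there is a common column $\phi(k)$ with $\abs{[\Phi(k,s)]_{ij} - \phi_i(k)} \leq 2\tau(\Phi(k,s)) \leq 2\lambda^{-1}q^{k-s}$ for all $i,j$. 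Separately I would lower-bound the weights: every entry in the non-virtual rows of a product of at least $n(\taumax + 1)$ consensus matrices is at least $\delta_{min} = (1/\Nout_{max})^{n(\taumax+1)}$, obtained by exhibiting one positive path of that length through the augmented graph and multiplying the per-edge weights $1/\Nout_i$; since only the $n$ non-virtual agents initialize $y$ to $1$, we have $y^{(0)}_i[k] = \sum_{j=1}^n [\Phi(k,0)]_{ij} \geq \delta_{min} > 0$ for all sufficiently large $k$.

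Finally I would assemble the estimate. Substituting the common-column approximation into both the numerator and the denominator gives $x^{(0)}_i[k] = \phi_i(k)\,n\xbar[k] + E^x_i[k]$ and $y^{(0)}_i[k] = \phi_i(k)\,n + E^y_i[k]$, where the error terms are geometric convolutions of the sources against $q^{k-s}$. Forming $z^{(0)}_i[k] - \xbar[k]$ and dividing by $y^{(0)}_i[k] \geq \delta_{min}$, the numerator contributes the factor $2\tau(\Phi(k,s)) \leq 2\lambda^{-1}q^{k-s}$ while the denominator contributes $1/\delta_{min}$; the $s = 0$ source yields the term $q^k\norm{x^{(0)}_i[0]}_1$ and the perturbations yield the convolution $\sum_{s=0}^{k}q^{k-s}\norm{\eta_i[s]}_1$. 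Collecting constants gives $C < 2/(\lambda^{(\psi+2)/(\psi+1)}\delta_{min}) \approx 2/(\lambda\delta_{min})$, where the slightly sharper exponent $(\psi+2)/(\psi+1)$ arises from bounding the fractional part of $(k-s)/W$ carefully rather than through the crude $\lambda^{-1}$. Once the SIA/scrambling argument of the second paragraph secures $\lambda < 1$ and the weight lower bound is in place, the remaining manipulations are standard push-sum bookkeeping.
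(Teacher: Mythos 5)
Your proposal follows essentially the same route as the paper's (sketched) argument: the paper omits the full proof to the thesis reference but describes exactly this strategy — show the asymptotic backward products of the augmented consensus matrices are SIA with the first $n$ rows bounded below, apply Wolfowitz-type tools to get geometric weak convergence of the columns to a stochastic vector sequence, and then carry out the perturbed push-sum bookkeeping as in the synchronous analysis of Nedi\'{c} and Olshevsky. The window $(\psi+1)(\taumax+1)$, the role of $\delta_{min}$ as the lower bound on the push-sum weights, and the final convolution bound all match the quantities as defined in the statement, so your outline is consistent with the paper's intended proof.
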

% ------------------------------------------------------------------------------------------- %

\begin{remarkth*}
To adapt the proof to $B$-strongly connected time-varying directed graphs, one would instead define $\lambda$ as the maximum Hajnal Coefficient of Ergodicity~\cite{hajnal1958weak} taken over the product of all possible $(\taumax + 1 + B)(\psi + 1)$  consecutive matrix products (instead of all $(\taumax + 1)(\psi + 1)$ consecutive matrix products).
A sufficient assumption in order to prove that $\lambda < 1$ is that a message in transit does not get dropped when the graph topology changes.
\end{remarkth*}
% ------------------------------------------------------------------------------------------- %

\begin{corollary}[Convergence to a Neighbourhood for Non-Diminishing Perturbation]
\label{cor:avg_rate_neighbourhood}
Suppose that the perturbation term is bounded for all $i = 1, 2, \ldots, n$; i.e., there exists a positive constant $L < \infty$ such that
\[
    \norm{\eta_i[k]}_1 \leq L, \quad \text{for all } i = 1, 2, \ldots, n .
\]
Then, for all $i = 1, 2, \ldots, n$,
\[
    \lim_{k \to \infty} \norm{z^{(0)}_i[k] - \xbar[k] }_1 \leq \frac{{C} L}{1 - q}.
\]
\end{corollary}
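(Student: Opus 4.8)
The plan is to start directly from the bound established in Theorem~\ref{th:avg_rate} and let $k \to \infty$, controlling each of the two terms on its right-hand side separately. Fixing an agent index $i$, Theorem~\ref{th:avg_rate} supplies the inequality
\[
\norm{z^{(0)}_i[k] - \xbar[k]}_1 \leq C q^k \norm{x^{(0)}_i[0]}_1 + C \sum_{s=0}^k q^{k-s} \norm{\eta_i[s]}_1,
\]
with $q \in (0,1)$ and $C > 0$. The first summand is the transient contribution of the initial condition: since $\norm{x^{(0)}_i[0]}_1$ is a fixed finite constant and $q \in (0,1)$, I expect this term to decay geometrically and vanish as $k \to \infty$.

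For the second summand, the plan is to invoke the uniform perturbation bound $\norm{\eta_i[s]}_1 \leq L$ from the hypothesis, which decouples the summand from the index $s$ and gives
\[
C \sum_{s=0}^k q^{k-s} \norm{\eta_i[s]}_1 \leq C L \sum_{s=0}^k q^{k-s} = C L \sum_{m=0}^k q^{m},
\]
after the reindexing $m = k-s$. This is a finite geometric sum equal to $C L (1 - q^{k+1})/(1-q)$, which increases monotonically to $CL/(1-q)$ as $k \to \infty$ because $0 < q < 1$.

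Combining the two observations and passing to the limit then yields
\[
\limsup_{k \to \infty} \norm{z^{(0)}_i[k] - \xbar[k]}_1 \leq 0 + \frac{CL}{1-q} = \frac{CL}{1-q},
\]
which is the claimed bound (the statement writes $\lim$, to be read as the asymptotic worst-case error, consistent with the framing in the introduction). This holds for every $i = 1,2,\ldots,n$ since the constants $C$, $q$, and $L$ are independent of $i$.

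I do not anticipate any serious obstacle here: the result is a routine geometric-series estimate applied to the convolution term in Theorem~\ref{th:avg_rate}, and all the substantive work (establishing $q<1$ and a finite $C$) has already been done in that theorem. The only point requiring mild care is that the left-hand side is not guaranteed to be convergent, so the rigorous conclusion is an upper bound on the limit supremum, obtained by using that each per-$k$ term on the right-hand side is itself convergent; the uniform bound $L$ is precisely what converts the weighted sum into a pure geometric series.
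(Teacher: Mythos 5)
Your proof is correct and is exactly the argument the corollary is meant to follow from: the paper gives no separate proof, treating it as an immediate consequence of Theorem~\ref{th:avg_rate} via the uniform bound $L$ and the geometric series $\sum_{m=0}^{k} q^m \to 1/(1-q)$. Your remark that the conclusion is rigorously a $\limsup$ bound (since convergence of the left-hand side is not guaranteed) is a fair and correct reading of the statement.
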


\begin{remark}
\label{rem:ram2010distributed}
From~\cite[Lemma 3.1]{ram2010distributed} we know that if $q \in (0, 1)$, and $\lim_{s \to \infty} \alpha[s] = 0$, then
\[
   \lim_{k \to \infty} \sum^{k}_{s = 0} q^{{k} - s} \alpha[s] = 0.
\]
\end{remark}

\begin{corollary}[Exact Convergence for Vanishing Perturbation]
\label{cor:avg_rate_vanish}
Suppose that the perturbation term vanishes as $k$ (the time-index) tends to infinity, i.e.,
\[
    \lim_{k \to \infty} \norm{\bm{\eta[k]}}_1 = 0,
\]
then from the result of Theorem~\ref{th:avg_rate} and Remark~\ref{rem:ram2010distributed}, it holds for all $i = 1, 2, \ldots, n$ that
\[
    \lim_{k \to \infty} \norm{ z^{(0)}_i[k] - \xbar[k] }_1 = 0.
\]
\end{corollary}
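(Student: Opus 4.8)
The plan is to read the result off directly from the geometric-rate bound of Theorem~\ref{th:avg_rate} together with the summability fact recorded in Remark~\ref{rem:ram2010distributed}; no new machinery is required. Fixing an arbitrary non-virtual agent index $i \in \{1, \ldots, n\}$, Theorem~\ref{th:avg_rate} already furnishes, for every $k \geq 0$, the pointwise estimate
\[
    \norm{ z^{(0)}_i[k] - \xbar[k] }_1 \leq C q^k \norm{ x^{(0)}_i[0]}_{1} + C \sum^k_{s=0}q^{k - s} \norm{ \eta_i[s] }_{1},
\]
with fixed constants $q \in (0,1)$ and $C > 0$. The whole argument therefore reduces to showing that each of the two summands on the right tends to $0$ as $k \to \infty$, after which non-negativity of the norm and a squeeze give the claim.

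The first summand is immediate: $\norm{ x^{(0)}_i[0]}_1$ is a fixed finite initial value, so $C q^k \norm{ x^{(0)}_i[0]}_1 \to 0$ geometrically because $q \in (0,1)$. No further work is needed here.

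For the second summand I would apply Remark~\ref{rem:ram2010distributed} to the scalar sequence $\alpha[s] \defeq \norm{ \eta_i[s] }_1$. The only point requiring attention --- and the main (minor) obstacle --- is verifying the remark's hypothesis $\lim_{s \to \infty} \alpha[s] = 0$, since the corollary instead assumes that the \emph{stacked} augmented perturbation vanishes, $\lim_{k \to \infty} \norm{ \bm{\eta}[k] }_1 = 0$. I would bridge this gap with the elementary row-domination bound $\norm{ \eta_i[s] }_1 \leq \norm{ \bm{\eta}[s] }_1$, valid because $\eta_i[s]$ is the row of $\bm{\eta}[s]$ associated with agent $v_i$; hence the per-agent norm is also squeezed to $0$, and the hypothesis of Remark~\ref{rem:ram2010distributed} is met.

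With $\alpha[s] \to 0$ and $q \in (0,1)$ in place, Remark~\ref{rem:ram2010distributed} yields $\lim_{k \to \infty} \sum_{s=0}^{k} q^{k-s} \norm{ \eta_i[s] }_1 = 0$, so the second summand vanishes as well. Combining the two limits and sandwiching the nonnegative quantity $\norm{ z^{(0)}_i[k] - \xbar[k] }_1$ between $0$ and a null sequence delivers the desired conclusion for each $i = 1, 2, \ldots, n$. Everything but the one-line row-domination observation is routine, so I expect the proof to be very short.
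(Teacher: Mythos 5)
Your proposal is correct and follows exactly the route the paper intends: the corollary is stated as an immediate consequence of Theorem~\ref{th:avg_rate} combined with Remark~\ref{rem:ram2010distributed}, which is precisely your argument of letting the geometric term die and applying the remark to $\alpha[s] = \norm{\eta_i[s]}_1$. Your additional row-domination observation $\norm{\eta_i[s]}_1 \leq \norm{\bm{\eta}[s]}_1$ is a harmless (and reasonable) explicit filling-in of a step the paper leaves implicit.
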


\begin{corollary}[Geometric Convergence of Asynchronous (Unperturbed) Push-Sum Averaging]
\label{cor:avg_rate_vanish}
Suppose that for all $i=1,2, \ldots, n$, and $k\geq 0$, it holds that $\eta_i[k] = \zeros$.
Then from the result of Theorem~\ref{th:avg_rate}, it holds for all $i = 1, 2, \ldots, n$, and $k\geq 0$ that
\[
	\norm{ z^{(0)}_i[k] - \xbar[0] }_1 \leq C q^k \norm{x^{(0)}_i[0]}_1.
\]
\end{corollary}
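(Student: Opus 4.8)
The plan is to specialize Theorem~\ref{th:avg_rate} to the unperturbed setting and then observe that the running average $\xbar[k]$ is in fact constant in $k$. First I would substitute $\eta_i[s] = \zeros$ for all $i$ and all $s$ into the bound of Theorem~\ref{th:avg_rate}. The entire perturbation-dependent sum $C\sum_{s=0}^{k} q^{k-s}\norm{\eta_i[s]}_1$ vanishes identically, leaving
\[
\norm{z^{(0)}_i[k] - \xbar[k]}_1 \leq C q^k \norm{x^{(0)}_i[0]}_1 .
\]
All that then remains is to replace $\xbar[k]$ by $\xbar[0]$, which is legitimate precisely because the average is time-invariant in the absence of perturbation.

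To justify $\xbar[k] = \xbar[0]$ I would exploit the column stochasticity of the consensus matrices. With $\bm{\eta}[k] = \zeros$, the update~\eqref{eq:pert_itr_1} reduces to $\bm{x}[k+1] = \Pbar[k]\,\bm{x}[k]$. Since each $\Pbar[k]$ is column stochastic, as noted immediately after~\eqref{eq:augmented_consensus_matrix_subblock}, we have $\ones^{\T}\Pbar[k] = \ones^{\T}$, and therefore
\[
\ones^{\T}\bm{x}[k+1] = \ones^{\T}\Pbar[k]\,\bm{x}[k] = \ones^{\T}\bm{x}[k] .
\]
Hence the aggregate mass $\ones^{\T}\bm{x}[k]$ is preserved across all time-indices, so that $\xbar[k] = \ones^{\T}\bm{x}[k]/n = \ones^{\T}\bm{x}[0]/n = \xbar[0]$ for every $k$. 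Substituting this identity into the specialized bound above yields the claimed inequality.

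There is essentially no analytic obstacle here beyond careful mass-conservation bookkeeping; the result is a direct reading of Theorem~\ref{th:avg_rate} combined with column stochasticity, requiring no new estimates or limiting arguments. The one point that warrants care is that $\ones$ is the all-ones vector of dimension $n(\tautranmax + 1)$, spanning both virtual and non-virtual agents, so the conserved quantity is the aggregate over the full augmented state. Because the virtual agents are initialized with $\bm{x}^{(r)}[0] = \zeros$ for $r \neq 0$, this aggregate equals $\sum_{i} x^{(0)}_i[0]$, confirming that $\xbar[0]$ is indeed the intended average of the non-virtual initial values.
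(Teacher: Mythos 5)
Your proposal is correct and matches the paper's (implicit) argument: the corollary is presented as a direct specialization of Theorem~\ref{th:avg_rate} with the perturbation sum vanishing, and the replacement of $\xbar[k]$ by $\xbar[0]$ rests exactly on the mass-conservation identity $\ones^{\T}\bm{x}[k+1] = \ones^{\T}\Pbar[k]\bm{x}[k] = \ones^{\T}\bm{x}[k]$ from column stochasticity. Your additional remark about the augmented-state dimension of $\ones$ and the zero initialization of the virtual agents is a correct and worthwhile piece of bookkeeping that the paper leaves unstated.
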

% ------------------------------------------------------------------------------------------- %

The proof of Theorem~\ref{th:avg_rate} is omitted and can be found in~\cite{Assran2018thesis}. In brief, the asymptotic product of the asynchronous consensus-matrices, $\Pbar[k] \cdots \Pbar[1] \Pbar[0]$ (for sufficiently large $k$) is SIA, and furthermore, the entries in the first $n$ rows of the asymptotic product (corresponding to the non-virtual agents) are bounded below by a strictly positive quantity. Applying standard tools from the literature concerning SIA matrices~\cite{wolfowitz1963products} we show that the columns of the asymptotic product of consensus-matrices weakly converge to a stochastic vector sequence at a geometric rate. Substituting this geometric bound into the definition of the asynchronous perturbed Push-Sum updates in Algorithm~\ref{eq:asynch_pert}, and after algebraic manipulation similar to that in~\cite{nedich2015distributed} (which analyzes synchronous delay-free Perturbed Push-Sum), we obtain the desired result.

\section{Asynchronous Gradient-Push}
\label{sec:asynchronous_subgradient_push}

In this section we expound the proposed AGP optimization and present our main convergence results. Our model of asynchrony implies that agents may gossip at different rates, may communicate with arbitrary transmission delays, and may perform gradient steps with stale (outdated) information.

\subsection{Formulation of Asynchronous Gradient-Push}
To analyze the AGP optimization algorithm from a global perspective, we use the matrix-based formulation provided in Algorithm~\ref{eq:asynch_opt}. At all time-indices $k$, each agent $v^{(r)}_i$ locally maintains the variables $x^{(r)}_i[k], z^{(r)}_i[k] \in \R^d$, and $y^{(r)}_i[k] \in \R_+$.
The \emph{non-virtual} agents initialize these to $z^{(0)}_i[0] = x^{(0)}_i[0] \in \R^d$, and $y^{(0)}_i[0] = 1$. The \emph{virtual} agents' variables are initialized to $z^{(r)}_i[0] = x^{(r)}_i[0] = \zeros$, and $y^{(r)}_i[0] = 0$ for all $r \neq 0$.
This matrix-based formulation describes how the agents' values evolve at those times $t[k + 1] \in T = \{t[1], t[2], t[3], \ldots, \}$ when one or more agent becomes activated (completes an update).
The asynchronous delay-prone communication dynamics are accounted for in the consensus-matrices $\Pbar[\cdot]$, and the matrix-valued function $ \Grad{k + 1} \in \R^{n(\tautranmax + 1) \times d}$ is defined as
\begin{align*}
    \Grad{k + 1} &\defeq
        \begin{bmatrix}
             \nabla \bm{f}^{(0)}(\bm{z}^{(0)}[k + 1]) \\
             \zeros \\
             \vdots \\
             \zeros
         \end{bmatrix},
\end{align*}
where $\nabla \bm{f}^{(0)}(\bm{z}^{(0)}[k + 1]) \in \R^{n \times d}$ denotes a block matrix with its $i^{th}$ row equal to
\[
	\alpha_i[k + 1] \delta_i[k + 1] \nabla f^{(0)}_i(z^{(0)}_i[k + 1]).
\]
The scalar $\alpha_i[k + 1]$ denotes node $v_i$'s local step-size.
The scalar $\delta_i[ \cdot ]$ is equal to $1$ when agent $v_i$ is activated, and equal to $0$ otherwise.
Recall that agents can only update their local step-sizes when they are activated (i.e., they complete a local gradient step, cf.~Algorithm~\ref{alg:agp}).
Therefore, if agent $v_i$ is \emph{not} activated at time-index $k$, then $\alpha_i[k]$ is equal to $\alpha_i[ \pi_i(k) ]$, the agent's most recently used step-size.\footnote{Note: if an agent is not activated at time-index $k$, then its step-size at that time does have any effect on the execution of the algorithm. We introduce this convention here simply so that the step-size value is well-defined at all times.}

\begin{algorithm}[t]
\caption{Asynchronous Gradient Push Optimization}
\label{eq:asynch_opt}
\begin{algorithmic}
\For{k = 0, 1, 2, \ldots}{\text{termination}}
\begin{align}
        \bm{x}[k +1] &= \Pbar[k] \left( \bm{x}[k] - \Grad{k} \right) \label{eq:asynch_itr_1} \\
        y[k + 1] &= \Pbar[k] y[k] \label{eq:asynch_itr_2} \\
        \bm{z}[k + 1] &= \text{diag}(y[k + 1])^{-1}\bm{x}[k + 1] \label{eq:asynch_itr_3}
\end{align}
\end{algorithmic}
\end{algorithm}

\subsection{Main results for Asynchronous Gradient-Push}
In this subsection we present the main convergence results for the AGP algorithm.

\begin{assumption}[Existence, Convexity, and Smoothness]
\label{ass:base_obj}
Assume that:
\begin{enumerate}
    \item A minimizer of \eqref{eq:problem} exists; i.e., $\argmin_x F(x) \neq \emptyset$.
    \item Each function $f_i(x): \R^d \rightarrow \R$ is $\mu_i$-strongly convex, and has $M_i$-Lipschitz continuous gradients.
\end{enumerate}
\end{assumption}
% ------------------------------------------------------------------------------------------- %

Let $M \defeq \max_i M_i$ and $\mu \defeq \min_i \mu_i$ denote the global Lipschitz constant and modulus of strong convexity, respectively.
Let $\xstar \defeq \argmin \overline{F}(x)$ denote the global minimizer, and let $\xstar_i \defeq \argmin f_i(x)$ denote the minimizer of node $v_i$'s local objective.

\begin{assumption}[Step-Size Bound]
\label{ass:step_size_bound}
Assume that for all agents $v_i$, the terms in the step-size sequence $\{ \alpha_i[k] \}$ satisfy
\[
	\alpha_i[k] \leq \frac{\mu}{2 M^2} \left( \frac{1}{\Nout_{max}} \right)^{n(\taumax + 1)} \quad \quad \forall k \in \mathbb{N}.
\]
\end{assumption}

\begin{theorem}[Bounded Iterates and Gradients]
\label{th:bounded_iterates_gradients}
Suppose Assumptions~\ref{ass:base_obj} and~\ref{ass:step_size_bound} are satisfied. Then there exist finite constants $L, D > 0$ such that,
\begin{align*}
    \sup_{k} \norm{\nabla f_i(z_i[k])} \leq L, \quad \sup_k \norm{ \xbar[k] } < D.
\end{align*}
\end{theorem}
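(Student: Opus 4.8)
The plan is to reduce both bounds to controlling the distance between each de-biased estimate $z_i[k]$ and a fixed, bounded reference set, and then to break the circular dependence between the gradient magnitude and the iterate magnitude — the gradients control the Push-Sum perturbation, which (through Theorem~\ref{th:avg_rate}) controls the iterates, which control the gradients — by exploiting the calibrated step-size bound of Assumption~\ref{ass:step_size_bound}. First I would reduce the gradient bound to an iterate bound: since $\nabla f_i(\xstar_i) = \zeros$ and $\nabla f_i$ is $M$-Lipschitz,
\[
\norm{\nabla f_i(z_i[k])} \leq M \norm{z_i[k] - \xstar_i} \leq M \left( \norm{\xbar[k]} + \norm{\xstar_i} + \norm{z_i[k] - \xbar[k]} \right),
\]
so it suffices to bound the network average $\norm{\xbar[k]}$ (giving $D$) and the consensus error $\norm{z_i[k] - \xbar[k]}$, the local minimizers $\xstar_i$ being fixed constants.

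The consensus error is already handled by Corollary~\ref{cor:avg_rate_neighbourhood} applied with the perturbation $\eta_i[k] = -\alpha_i[k]\delta_i[k]\nabla f_i(z_i[k])$: if the gradients are bounded by some candidate $L$, then $\norm{\eta_i[k]}_1 \leq \alpha_i[k] L$ and the error is bounded by a geometric sum proportional to the step-size times $L$. The heart of the argument is therefore bounding $\xbar[k]$. Because each $\Pbar[k]$ is column stochastic, $\ones^{\T}\Pbar[k] = \ones^{\T}$, so the average obeys the clean recursion
\[
\xbar[k+1] = \xbar[k] - \frac1n \sum_{i=1}^n \alpha_i[k]\delta_i[k]\nabla f_i(z_i[k]),
\]
which is an inexact, reweighted gradient step on $\overline{F}$. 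The reweighted aggregate $\sum_{i \in \Gset{k}} \alpha_i[k]\nabla f_i(\cdot)$ is the gradient of a strongly convex function whose minimizer lies in the (bounded) convex hull of $\{\xstar_i\}$; hence whenever $\xbar[k]$ is far from this hull the step points inward. I would run a Lyapunov argument on $\norm{\xbar[k] - \xstar}^2$: strong convexity supplies $\langle \nabla f_i(z_i[k]), z_i[k] - \xstar_i\rangle \geq \mu\norm{z_i[k] - \xstar_i}^2$, and expanding the square while using $\alpha_i[k] \leq \frac{\mu}{2M^2}\delta_{min}$ to dominate the $\norm{\cdot}^2$ overshoot term yields a one-step inequality of the form $\norm{\xbar[k+1]-\xstar}^2 \leq (1-c)\norm{\xbar[k]-\xstar}^2 + (\text{bounded terms})$ with $c > 0$, from which $\sup_k\norm{\xbar[k]} < D$ follows.

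Two mismatches must be absorbed into the bounded terms. The gradients are evaluated at $z_i[k]$ rather than at $\xbar[k]$, so I would write $\nabla f_i(z_i[k]) = \nabla f_i(\xbar[k]) + (\text{$M$-Lipschitz error})$ with the error controlled by the consensus bound, and the global minimizer $\xstar$ differs from the local $\xstar_i$ by fixed constants. Since only the activated agents ($\delta_i[k]=1$) contribute at each time-index, the contraction is cleanest when established over a window of $\taumax + 1$ indices, over which Assumption~\ref{ass:base_comm} guarantees every agent is activated at least once. The Push-Sum normalization enters through a lower bound $y_i[k] \geq \delta_{min}$ on the weights — a consequence of the $\delta_{min}$ bound on products of $\Pbar[\cdot]$ — which keeps $z_i[k] = x_i[k]/y_i[k]$ from blowing up relative to $x_i[k]$; this is precisely why $\delta_{min}$ appears in the step-size bound.

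The main obstacle is the circular coupling itself: the consensus-error bound requires a gradient bound $L$, while the gradient bound requires the consensus error to be bounded, and both feed the average recursion. I would close the loop with a single joint induction — positing candidate constants $L$ and $D$, and verifying that the descent inequality for $\xbar[k]$ together with the Corollary~\ref{cor:avg_rate_neighbourhood} perturbation bound reproduce the same $L$ and $D$ at the next window — where the smallness of $\frac{\mu}{2M^2}\delta_{min}$ guarantees that a self-consistent fixed point of these coupled bounds exists. I expect the delicate step to be quantifying this fixed point: showing that the contraction factor $c$ obtained from strong convexity strictly dominates the expansion introduced by the consensus error and heterogeneous, time-varying activation, so that the posited bounds are genuinely self-reproducing rather than growing with $k$.
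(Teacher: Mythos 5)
First, a caveat: the paper does not actually prove Theorem~\ref{th:bounded_iterates_gradients}; it only states that the proof appears in~\cite{Assran2018thesis}, so there is no in-paper argument to compare against line by line. That said, your architecture --- reduce the gradient bound to an iterate bound via $\nabla f_i(\xstar_i)=\zeros$ and $M$-Lipschitz gradients, use column-stochasticity of $\Pbar[k]$ to get the exact recursion for $\xbar[k]$, control $\norm{z_i[k]-\xbar[k]}$ through Theorem~\ref{th:avg_rate} with the scaled gradients as the perturbation, and break the circularity by a joint induction --- is the natural one and is almost certainly the shape of the deferred proof; your reading of why Assumption~\ref{ass:step_size_bound} carries the factor $\left(1/\Nout_{max}\right)^{n(\taumax+1)}$ (to offset the push-sum weights and the constant $C$) is also the right one.

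The gap is that the decisive step is asserted rather than proved. You posit $\norm{\xbar[k+1]-\xstar}^2 \le (1-c)\norm{\xbar[k]-\xstar}^2 + (\text{bounded terms})$ and then claim that the smallness of $\frac{\mu}{2M^2}\left(1/\Nout_{max}\right)^{n(\taumax+1)}$ guarantees a self-consistent fixed point of the coupled $(L,D)$ bounds. That claim is exactly the content of the theorem, and it does not follow from smallness alone. Concretely, the consensus error feeds back into the gradient bound with coefficient roughly $MC\alpha/(1-q)$; substituting $C \approx 2/(\lambda\,\delta_{min})$ and $\alpha \le \frac{\mu}{2M^2}\delta_{min}$ gives roughly $\frac{1}{\kappa\lambda(1-q)}$, which is \emph{not} automatically below one, because $1-q = 1-\lambda^{1/((\psi+1)(\taumax+1))}$ can be very small for large delays while the $\delta_{min}$ in the step-size has already been consumed cancelling the $\delta_{min}$ in $C$. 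The same issue appears when you compare the per-window contraction $c=\Theta(\alpha\mu/n)$ against the $\Theta(\alpha^2 M^2 C/(1-q))$ perturbation injected by the consensus error: both scale linearly in the step-size budget, so the comparison is between fixed constants and is not settled by taking $\alpha$ small. Closing the argument therefore requires either a more careful accounting (e.g., summing the geometric consensus tail against the contraction over a $\taumax+1$ window, or bounding $\norm{x_i[k]}$ directly and only then dividing by $y_i[k]$, using a lower bound on the push-sum weights) or a restructured induction; a secondary unverified point is converting per-step contraction toward the \emph{moving} target $\xstar_j$ (only activated agents contribute) into a uniform bound on the distance to the convex hull of $\{\xstar_i\}$. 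Until the self-consistency inequality is exhibited with explicit constants and shown to be satisfiable under Assumption~\ref{ass:step_size_bound}, what you have is a plan rather than a proof.
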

The proof of Theorem~\ref{th:bounded_iterates_gradients} appears in~\cite{Assran2018thesis}.
Next we state our main results, the proofs of which all appear in Sec.~\ref{sec:analysis}. When nodes run asynchronously and at different rates, AGP may not converge precisely to the solution $\xstar$ of \eqref{eq:problem}.

\begin{definition}[Re-weighted objective]
\label{def:reweighted_obj}
Suppose Algorithm~\ref{alg:agp} is run from time $t[0]$ up to time $t[K]$ for some integer $K > 0$.
For all $i \in [n]$, let
\begin{equation}
	p^{(K)}_i \defeq \sum^{K - 1}_{k=0} \alpha_i[k] \delta_i[k], \quad \text{and} \quad \overline{p}^{(K)}_i \defeq \frac{p^{(K)}_i}{ \sum^n_{i=1} p^{(K)}_i }.
\end{equation}
Define the re-weighted objective
\begin{equation}
	F_K( \cdot ) \defeq \sum^n_{i=1}  \overline{p}^{(K)}_i  f_i( \cdot ), \label{eqn:F_K}
\end{equation}
and let $\xstar_K$ denote the minimizer of $F_K( \cdot )$.
\end{definition}

We can characterize how far $\xstar_K$ may be from $\xstar$. Let $\kappa \defeq M/\mu$ denote the condition number of the global objective $F(x)$, let $\xstar_i$ denote the minimizer of $f_i(x)$, let $S_i \defeq \norm{ \xstar_i - \xstar}$, let $S_{i,j} \defeq \norm{ \xstar_i - \xstar_j }$ denote the pairwise distance of agent $v_i$'s minimizer to agent $v_j$'s minimizer, and let $\overline{S} \defeq \max_{i \in [n]} \min_{j \in [n]} ( S_{i,j} + S_{j} )$.

\begin{theorem}[Bound on Distance of Minimizers]
\label{th:bound_dist_minimziers}
Suppose Algorithm~\ref{alg:agp} is run from time $t[0]$ up to time $t[K]$, for some integer $K > 0$.
Let 
\[
\Delta^{(K)} \defeq \sqrt{ \sum^n_{i=1} \abs{\frac{1}{n} - \overline{p}^{(K)}_i} }.
\]
If Assumption~\ref{ass:base_obj} holds, then
\[
	\norm{ \xstar_K - \xstar } \leq \frac{\overline{S} \sqrt{\kappa} \;\Delta^{(K)}}{\sqrt{2}} ,
\]
where $\overline{p}^{(K)}_i \in (0,1)$ and $\xstar_K$ are defined in Definition~\ref{def:reweighted_obj}, and $\xstar$ is the minimizer of~\eqref{eq:problem}.
\end{theorem}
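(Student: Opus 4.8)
The plan is to control $\norm{\xstar_K - \xstar}$ through the optimality gaps of each minimizer with respect to the \emph{other} objective, exploiting that $F_K$ and $\overline{F}$ differ only through the weight perturbation $a_i \defeq \overline{p}^{(K)}_i - 1/n$, which satisfies $\sum_{i=1}^n a_i = 0$. First I would record the two first-order optimality conditions $\sum_i \nabla f_i(\xstar) = 0$ (since $\xstar$ minimizes $\overline{F} = \sum_i f_i$) and $\sum_i \overline{p}^{(K)}_i \nabla f_i(\xstar_K) = 0$, and observe that, being convex combinations of the $f_i$, both $F_K$ and $\overline{F}$ are $\mu$-strongly convex and $M$-smooth under Assumption~\ref{ass:base_obj}.

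The next step is to turn the distance into an objective gap. Summing the two strong-convexity lower bounds --- $\overline{F}(\xstar_K) - \overline{F}(\xstar) \ge \tfrac{\mu}{2}\norm{\xstar_K - \xstar}^2$ (from optimality of $\xstar$ for $\overline{F}$) and $F_K(\xstar) - F_K(\xstar_K) \ge \tfrac{\mu}{2}\norm{\xstar_K - \xstar}^2$ (from optimality of $\xstar_K$ for $F_K$) --- and regrouping the four function values, the shared terms combine into $F_K - \overline{F} = \sum_i a_i f_i$, leaving
\[
    \mu \norm{\xstar_K - \xstar}^2 \le \sum_{i=1}^n a_i \bigl( f_i(\xstar) - f_i(\xstar_K) \bigr).
\]
Summing \emph{both} inequalities (rather than using just one) is what produces the coefficient $\mu$ instead of $\mu/2$, and ultimately the factor $1/\sqrt{2}$ in the claimed bound. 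This isolates the effect of the asynchrony-induced weights entirely inside the perturbation coefficients $a_i$.

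The crux is then to bound the right-hand side by $\tfrac{M \overline{S}^2}{2}\sum_i \abs{a_i}$. Using $\nabla f_i(\xstar_i) = 0$ together with $M$-smoothness gives $f_i(\xstar) - f_i(\xstar_i) \le \tfrac{M}{2}\norm{\xstar - \xstar_i}^2 = \tfrac{M}{2} S_i^2$, and strong convexity of $f_i$ gives $f_i(\xstar_K) \ge f_i(\xstar_i)$, so a summand with $a_i > 0$ is controlled by $\tfrac{M}{2} S_i^2 \le \tfrac{M}{2}\overline{S}^2$, where $S_i \le \overline{S}$ follows from the triangle-inequality bound $S_i = \norm{\xstar_i - \xstar} \le S_{i,j} + S_j$ for every $j$, so that $S_i \le \min_j (S_{i,j} + S_j) \le \overline{S}$. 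Substituting this ceiling, dividing by $\mu$, recognising $\kappa = M/\mu$ and $\Delta^{(K)} = \sqrt{\sum_i \abs{1/n - \overline{p}^{(K)}_i}}$, and taking square roots then yields exactly $\norm{\xstar_K - \xstar} \le \overline{S}\sqrt{\kappa}\,\Delta^{(K)}/\sqrt{2}$.

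The main obstacle I anticipate is the summands with $a_i < 0$: for those indices the quantity that must be bounded is $f_i(\xstar_K) - f_i(\xstar)$, and the smoothness estimate above only controls $f_i(\xstar) - f_i(\xstar_i)$ --- it does not on its own prevent $f_i(\xstar_K) - f_i(\xstar_i)$ from growing as $\xstar_K$ drifts away from the local minimizer $\xstar_i$. Handling these mixed-sign terms without reintroducing a dependence on $\norm{\xstar_K - \xstar}$ (which would collapse the argument into a self-referential, condition-number-\emph{linear} bound rather than the stated square-root one) is the delicate point. I expect this is precisely where the refined quantity $\overline{S} = \max_i \min_j (S_{i,j} + S_j)$ is meant to be used --- re-centering each local objective $f_i$ around the nearest conveniently placed minimizer $\xstar_j$ rather than around $\xstar_i$ itself --- so that the uniform ceiling $\tfrac{M}{2}\overline{S}^2$ can be applied to every term regardless of the sign of $a_i$.
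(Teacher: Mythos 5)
Your derivation tracks the paper's own proof almost step for step: the paper likewise sums the two strong-convexity inequalities (one for $\tfrac{1}{n}\sum_i f_i$ at its minimizer $\xstar$, one for $F_K$ at $\xstar_K$) to arrive at $\mu\norm{\xstar_K-\xstar}^2 \leq \sum_i \bigl(f_i(\xstar_K)-f_i(\xstar)\bigr)\bigl(\tfrac{1}{n}-\overline{p}^{(K)}_i\bigr)$, then adds and subtracts $f_i(\xstar_i)$, splits the indices by the sign of $\tfrac{1}{n}-\overline{p}^{(K)}_i$, discards the nonpositive summands, and applies $M$-smoothness about each $\xstar_i$ to what remains. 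Your handling of the $a_i>0$ indices --- where only $f_i(\xstar)-f_i(\xstar_i)\leq\tfrac{M}{2}S_i^2$ is needed, and $S_i\leq\min_j(S_{i,j}+S_j)\leq\overline{S}$ --- is exactly the paper's argument for its set $\mathcal{I}^C$.

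The gap is the one you flag and then leave open. For the indices with $a_i<0$ you must bound $f_i(\xstar_K)-f_i(\xstar_i)\leq\tfrac{M}{2}\norm{\xstar_K-\xstar_i}^2$, and nothing in your write-up controls $\norm{\xstar_K-\xstar_i}$. The missing ingredient is an anchoring argument that uses the \emph{global} optimality of $\xstar_K$ for $F_K$: since $F_K$ is a positive combination of the $f_i$, it cannot be that $f_j(\xstar_K)>f_j(\xstar)$ for every $j$ (that would force $F_K(\xstar_K)>F_K(\xstar)$), so there is at least one index $j$ at which $\xstar_K$ does no worse than $\xstar$; the paper converts this into $\norm{\xstar_K-\xstar_j}\leq\norm{\xstar-\xstar_j}=S_j$, and then the triangle inequality gives $\norm{\xstar_K-\xstar_i}\leq S_{i,j}+S_j$ for every $i$, which is the quantity $\overline{S}$ is built to dominate. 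Your intuition about ``re-centering around a conveniently placed $\xstar_j$'' is the right one, but the existence of such a $j$ is a claim that must be \emph{derived} from the optimality of $\xstar_K$, not read off the definition of $\overline{S}$; without it the $a_i<0$ terms are uncontrolled and the argument does not close. (As an aside, even the paper's justification here is informal: what follows cleanly from optimality is $f_j(\xstar_K)\leq f_j(\xstar)$ for some $j$, and turning that into the distance comparison $\norm{\xstar_K-\xstar_j}\leq\norm{\xstar-\xstar_j}$ costs an extra $\sqrt{\kappa}$ unless the level sets of $f_j$ are balls --- so you correctly located the one genuinely delicate step of this proof.)
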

Theorem~\ref{th:bound_dist_minimziers} bounds the distance between the minimizer of the re-weighted objective (Definition~\ref{def:reweighted_obj}) and the minimizer of the original (unbiased) objective~\eqref{eq:problem}.
The bound depends on the condition number of the global objective, the pairwise distance between agents' local minimizers, the distance between agents' local minimizers and the global (unbiased) minimizer, and the degree of asynchrony in the network.
In particular, the quantity $\Delta^{(K)}$ denotes the bias introduced from the processing delays. If agents work at roughly the same rate, then $\Delta^{(K)}$ is close to $0$. On the other hand, if there is a large disparity between agents'~update rates, then $\Delta^{(K)}$ is close to $\sqrt{2}$.

\begin{assumption}[Constant Step-Size]
\label{ass:step_size_const}
Suppose Algorithm~\ref{alg:agp} is run from time $t[0]$ up to time $t[K]$, for some integer $K > 0$.
For a given $\theta \in (0,1)$, assume that there exist constants $B > 0$ and $w_i \geq 1$, for all $i \in [n]$, such that each agent $v_i$ sets its local step-size as
\[
	\alpha_i[k] \defeq \alpha_i = \frac{w_i B}{ K^\theta}.
\]
\end{assumption}
Note that Assumption~\ref{ass:step_size_const} prescribes a constant step-size. It reads: first fix the total number of iterations $K$, and then use $K$ to inform the choice of a constant step-size.\footnote{In practice it may be difficult to determine $K$ ahead of time, since $K$ is the total number of iterations/updates performed \emph{across the entire network}. However in some implementations it may be possible to maintain a (possibly approximate) global count of the number of iterations performed (e.g., by running a separate consensus algorithm in parallel) and use this as a stopping criterion.}

\begin{theorem}[Convergence of Asynchronous Gradient Push for Constant Step-Size]
\label{th:const_convergence_opt}
Suppose Algorithm~\ref{alg:agp} is run from time $t[0]$ up to time $t[K]$, for some integer $K > 0$, and suppose that Assumptions~\ref{ass:base_comm},~\ref{ass:base_obj},~\ref{ass:step_size_bound}, and~\ref{ass:step_size_const} hold. 
Then there exist finite positive constants $A_1$, $A_2$, and $A_3$ such that
\begin{align*}
	\frac{1}{K} \sum^{K -1}_{k=0} \norm{\xbar[k] - \xstar_K}^2 \leq& \frac{1}{ K^\theta} \left( \frac{n (A_1 + A_3)}{2 \mu B} \right) +  \frac{1}{ K} \left( \frac{n A_2}{2 \mu B} \right) \\
	& + \frac{1}{K^{1 - \theta}} \left(\frac{n \left( \norm{ \xbar[0] - \xstar_K}^2 \right)}{2 \mu B } \right),
\end{align*}
where $\theta \in (0, 1)$ is defined in Assumption~\ref{ass:step_size_const}, and $\xstar_K$ is the minimizer of the re-weighted objective defined in Definition~\ref{def:reweighted_obj}.
\end{theorem}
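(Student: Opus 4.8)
The plan is to track the network-wide average $\xbar[k]=\ones^{\T}\bm{x}[k]/n$ and show that it performs approximate descent toward $\xstar_K$. Because every $\Pbar[k]$ is column stochastic ($\ones^{\T}\Pbar[k]=\ones^{\T}$), multiplying~\eqref{eq:asynch_itr_1} on the left by $\ones^{\T}/n$ collapses the consensus mixing and all virtual-agent bookkeeping, leaving the clean recursion
\[
\xbar[k+1]=\xbar[k]-\frac{1}{n}\sum_{i=1}^{n}\alpha_i\delta_i[k]\,\nabla f_i(z_i[k]).
\]
First I would expand $\norm{\xbar[k+1]-\xstar_K}^2$ to get the usual descent identity, consisting of $\norm{\xbar[k]-\xstar_K}^2$, a cross term $-\tfrac{2}{n}\sum_i\alpha_i\delta_i[k]\langle\nabla f_i(z_i[k]),\xbar[k]-\xstar_K\rangle$, and a squared-step term $\tfrac{1}{n^2}\norm{\sum_i\alpha_i\delta_i[k]\nabla f_i(z_i[k])}^2$. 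Theorem~\ref{th:bounded_iterates_gradients} bounds each $\norm{\nabla f_i(z_i[k])}\le L$, so under the constant step-sizes of Assumption~\ref{ass:step_size_const} the squared-step term is $O(K^{-2\theta})$ per iteration.

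The heart of the argument is the cross term. Writing $\beta_i[k]\defeq\alpha_i\delta_i[k]$ and $w[k]\defeq\sum_i\beta_i[k]$, I would split $\xbar[k]-\xstar_K=(\xbar[k]-z_i[k])+(z_i[k]-\xstar_K)$ and apply $\mu_i$-strong convexity of $f_i$ together with the gradient/Lipschitz bounds to produce, for each $i$,
\[
\langle\nabla f_i(z_i[k]),\xbar[k]-\xstar_K\rangle\ge f_i(\xbar[k])-f_i(\xstar_K)+c\,\mu\norm{\xbar[k]-\xstar_K}^2-e_i[k],
\]
where $c>0$ is an absolute constant and $e_i[k]$ collects terms proportional to the consensus error $\norm{z_i[k]-\xbar[k]}$ and its square. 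Summing over $i$ and using $w_i\ge1$ together with the fact that at least one agent is activated at every time-index gives the uniform lower bound $w[k]\ge B/K^{\theta}$. This is precisely what converts the strong-convexity quadratic $\sum_k w[k]\norm{\xbar[k]-\xstar_K}^2$ into the uniform time-average $\tfrac1K\sum_k\norm{\xbar[k]-\xstar_K}^2$ on the left of the theorem, and it is where the factor $K^{\theta}$ and the prefactor $n/(2\mu B)$ enter.

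Here the re-weighted objective does its essential work, and where I expect the main obstacle. The accumulated function-value contribution is $\sum_{k,i}\beta_i[k]\bigl(f_i(\xbar[k])-f_i(\xstar_K)\bigr)$, which I would decompose as $\sum_k w[k]\bigl(F_K(\xbar[k])-F_K(\xstar_K)\bigr)$ plus the discrepancy $\sum_{k,i}\bigl(\beta_i[k]-w[k]\overline{p}^{(K)}_i\bigr)\bigl(f_i(\xbar[k])-f_i(\xstar_K)\bigr)$. The first part is nonnegative exactly because $\xstar_K$ minimizes $F_K$ (Definition~\ref{def:reweighted_obj}), so it can be dropped; without the re-weighting this term could be a nonvanishing negative constant, which is the whole reason $\xstar_K$, rather than $\xstar$, appears in the statement. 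The delicate piece is the discrepancy: since $\sum_k\beta_i[k]=p^{(K)}_i=\overline{p}^{(K)}_i\sum_k w[k]$, each inner coefficient sums to zero over $k$, so I would control it by summation-by-parts, bounding the partial sums of $\beta_i[k]-w[k]\overline{p}^{(K)}_i$ using the bounded computation delay $\tauratemax$ from Assumption~\ref{ass:base_comm} and the increments $\norm{\xbar[k+1]-\xbar[k]}=O(K^{-\theta})$. Making the per-iteration activation weights reconcile with the aggregate weights defining $F_K$ is the genuine price of asynchrony, and the bounded-delay assumption is what keeps it lower order (an $O(K^{-\theta})$ contribution folded into $A_1+A_3$).

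Finally I would bound the accumulated consensus error $\sum_k\norm{z_i[k]-\xbar[k]}$ by invoking Theorem~\ref{th:avg_rate} with perturbation $\eta_i[k]$ equal to the local gradient step, whose norm is at most $\alpha_iL=O(K^{-\theta})$. The geometric factor $q\in(0,1)$ makes the transient term (driven by $\norm{x^{(0)}_i[0]}$) summable to an $O(1)$ constant, which after normalization yields the $A_2/K$ term, while the perturbation-driven steady state is $O(K^{-\theta})$, yielding the $A_1/K^{\theta}$ term. Telescoping the descent identity over $k=0,\dots,K-1$ leaves the initialization $\norm{\xbar[0]-\xstar_K}^2$, which after dividing through by the factor $\sim\mu B K^{1-\theta}$ becomes the $K^{-(1-\theta)}$ term; collecting the consensus, squared-step, and discrepancy contributions into the finite constants $A_1,A_2,A_3$ then gives the stated bound.
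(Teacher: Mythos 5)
Your setup---collapsing the consensus dynamics with $\ones^{\T}/n$, expanding the squared distance, bounding the squared-step term by $O(K^{-2\theta})$ via Theorem~\ref{th:bounded_iterates_gradients}, controlling the consensus error through Theorem~\ref{th:avg_rate}, and converting $\sum_k w[k]\norm{\xbar[k]-\xstar_K}^2$ into the uniform time average via $w[k]\geq B/K^{\theta}$---matches the paper's proof, which routes these steps through Lemmas~\ref{lem:inner_prod} and~\ref{lem:bounded_summations_const}. The divergence, and the gap, is in how you use the optimality of $\xstar_K$. The paper keeps the cross term in gradient form: strong convexity yields $\mu\norm{\xbar[k]-\xstar_K}^2$ plus the residual $\langle\nabla f_i(\xstar_K),\xbar[k]-\xstar_K\rangle$, and Lemma~\ref{lem:contraction} shows the $\alpha_i[k]\delta_i[k]$-weighted sum of these residuals is nonnegative because the residual is \emph{linear} in $\xbar[k]$ and the first-order optimality condition $\sum_i p^{(K)}_i\nabla f_i(\xstar_K)=0$ cancels it exactly, with no remainder. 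You instead pass to function values and invoke only zeroth-order optimality ($F_K(\xbar[k])\geq F_K(\xstar_K)$), which leaves the discrepancy $\sum_{k,i}\bigl(\beta_i[k]-w[k]\overline{p}^{(K)}_i\bigr)\bigl(f_i(\xbar[k])-f_i(\xstar_K)\bigr)$ to be controlled.

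That discrepancy is where your argument breaks. The partial sums $S_k=\sum_{\ell\leq k}\bigl(\beta_i[\ell]-w[\ell]\overline{p}^{(K)}_i\bigr)=\alpha_i c_i[k]-\overline{p}^{(K)}_i\sum_j\alpha_j c_j[k]$ are guaranteed to vanish only at $k=K-1$; the bounded-delay assumption constrains $c_i[k]$ to lie roughly between $k/\tauratemax$ and $k+1$ but does not prevent agents' update rates from drifting over time, so $\abs{S_k}$ can be of order $\alpha_i k=\Theta(K^{1-\theta})$ for a constant fraction of the indices (for instance, an agent that updates at every index during the first half of the run and only once every $\tauratemax$ indices thereafter). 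With $\abs{f_i(\xbar[k+1])-f_i(\xbar[k])}=O(K^{-\theta})$, summation by parts then yields only $O(K^{2-2\theta})$ for the accumulated discrepancy, while the crude bound $\max_k\abs{f_i(\xbar[k])-f_i(\xstar_K)}\cdot\sum_k\abs{\beta_i[k]-w[k]\overline{p}^{(K)}_i}\leq O(1)\cdot 2p^{(K)}_i$ yields $O(K^{1-\theta})$. A term $b[K]$ at this stage contributes $\tfrac{n}{2\mu B}K^{\theta-1}b[K]$ to the final bound, so these two estimates give contributions of order $K^{1-\theta}$ and $O(1)$ respectively---neither folds into the vanishing $A_1/K^{\theta}$ term you claim, and the second alone would leave a non-decaying additive constant absent from the theorem. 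To close the gap, replace the function-value decomposition with the gradient-residual argument of Lemma~\ref{lem:contraction}, where the exact cancellation $\sum_i p^{(K)}_i\nabla f_i(\xstar_K)=0$ makes the analogous term nonnegative outright.
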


Explicit expressions for $A_1$, $A_2$, and $A_3$ are given in Lemma~\ref{lem:bounded_summations_const} below. Both $A_2$ and $A_3$ depend on $C$ and $q$, and hence on the delay bound $\taumax$.

\begin{corollary}[Convergence of Semi-Synchronous Gradient Push for Constant Step-Size]
\label{cor:const_semi_synch_convergence_opt}
Suppose the assumptions made in Theorem~\ref{th:const_convergence_opt} hold, and suppose that $\tauratemax = 1$ and each agent $v_i$ sets its local step-size scaling factor $w_i = 1$.
Then
\begin{align*}
	\frac{1}{K} \sum^{K -1}_{k=0} \norm{\xbar[k] - \xstar}^2 \leq& \frac{1}{ K^\theta} \left( \frac{n (A_1 + A_3)}{2 \mu B} \right) +  \frac{1}{ K} \left( \frac{n A_2}{2 \mu B} \right)\\
	& +  \frac{1}{K^{1 - \theta}} \left(\frac{n \left( \norm{ \xbar[0] - \xstar }^2 \right)}{2 \mu B } \right),
\end{align*}
where $\xstar$ is the minimizer of~\eqref{eq:problem}.
\end{corollary}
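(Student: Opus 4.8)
The plan is to obtain this corollary as a direct specialization of Theorem~\ref{th:const_convergence_opt}. The only difference between the two statements is that the theorem's bound is phrased in terms of the re-weighted minimizer $\xstar_K$, whereas the corollary is phrased in terms of the true minimizer $\xstar$ of~\eqref{eq:problem}. Hence it suffices to show that, under the two extra hypotheses $\tauratemax = 1$ and $w_i = 1$, we have $\xstar_K = \xstar$; once this identity is established, both the summand $\norm{\xbar[k] - \xstar_K}^2$ and the initial-condition term $\norm{\xbar[0] - \xstar_K}^2$ in the theorem's conclusion become the corresponding quantities with $\xstar$, and the stated bound follows verbatim with the same constants $A_1$, $A_2$, $A_3$.

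First I would argue that $\tauratemax = 1$ forces every non-virtual agent to be activated at every time-index. By the definition of the processing delay, whenever $t[k] \in T_i$ we have $1 \leq \taurate{k}{i} \leq \tauratemax = 1$, so $\taurate{k}{i} = 1$ and $\pi_i(k) = k-1$; thus each agent's consecutive activations are exactly one time-index apart and $\Gset{k} = \Vertices$ for every $k$. In terms of the activation indicator this is precisely the statement $\delta_i[k] = 1$ for all $i$ and all $k$ — the ``semi-synchronous'' regime in which computation is synchronized even though communication may still be delayed (the delay bound $\tautranmax$ is left arbitrary).

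Next I would evaluate the re-weighting coefficients of Definition~\ref{def:reweighted_obj}. Under the constant step-size of Assumption~\ref{ass:step_size_const} with $w_i = 1$, every agent uses the common step-size $\alpha_i = B/K^\theta$. Combining this with $\delta_i[k] = 1$ yields $p^{(K)}_i = \sum_{k=0}^{K-1} \alpha_i \delta_i[k] = BK/K^\theta$, which is independent of $i$. Therefore $\overline{p}^{(K)}_i = 1/n$ for every $i$, and the re-weighted objective collapses to $F_K(\cdot) = \tfrac{1}{n}\sum_{i=1}^n f_i(\cdot) = \tfrac{1}{n}F(\cdot)$. Since multiplication by a positive scalar does not move the location of a minimizer, $\xstar_K = \argmin F_K = \argmin F = \xstar$, which closes the gap and completes the substitution.

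There is no real obstacle here: the argument is a bookkeeping specialization of the main theorem, and the single point needing care is the interpretation of $\tauratemax = 1$ as synchronized activation ($\delta_i[k] \equiv 1$). It is worth emphasizing that this regime eliminates the re-weighting bias $\Delta^{(K)}$ (equivalently, the distance $\norm{\xstar_K - \xstar}$ controlled in Theorem~\ref{th:bound_dist_minimziers} is exactly zero), while the communication delays still leave $\tautranmax$, and hence the delay-dependent constants $C$, $q$ entering $A_2$ and $A_3$, fully present in the bound.
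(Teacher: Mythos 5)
Your proposal is correct and follows essentially the same route as the paper's own proof: it uses $\tauratemax = 1$ to conclude $\delta_i[k] = 1$ for all $i$ and $k$, combines this with $w_i = 1$ to show $p^{(K)}_i$ is independent of $i$ so that $\xstar_K = \xstar$, and then substitutes into Theorem~\ref{th:const_convergence_opt}. The additional detail you supply (e.g., explicitly computing $\overline{p}^{(K)}_i = 1/n$ and noting that rescaling the objective does not move its minimizer) is a faithful elaboration of the paper's more terse argument.
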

Corollary~\ref{cor:const_semi_synch_convergence_opt} states that if the agents perform gradient updates at the same rate, then they converge to the unbiased global minimizer, even in the presence of persistent, but bounded, message delays.

\begin{definition}[Local iteration counter]
For each agent $v_i$, and all integers $k \geq 0$, define the local iteration counter
\[
	c_i[k] \defeq \sum^{k}_{\ell = 0} \delta_i[\ell]
\]
to be the number of updates performed by agent $v_i$ in the time-interval $(t[0], t[k]]$.
By convention, for all $i \in [n]$, we take $\delta_i[0] \defeq 1$, and thus $c_i[0]= 1$.

\begin{corollary}[Convergence of Asynchronous Gradient Push for Known Update Rates]
\label{cor:const_congergence_opt_known_udpate_rates}
Suppose the assumptions made in Theorem~\ref{th:const_convergence_opt} hold, and suppose that each agent $v_i$ has prior knowledge of $c_i[K - 1]$, the number of local iterations it will have completed before time $t[K]$.
If each agent $v_i$ sets its local step-size scaling factor
\[
	w_i \defeq \frac{K}{c_i[K - 1]} \geq 1,
\]
then
\begin{align*}
	\frac{1}{K} \sum^{K -1}_{k=0} \norm{\xbar[k] - \xstar}^2 \leq& + \frac{1}{ K^\theta} \left( \frac{n (A_1 + A_3)}{2 \mu B} \right) +  \frac{1}{ K} \left( \frac{n A_2}{2 \mu B} \right) \\
	& + \frac{1}{K^{1 - \theta}} \left(\frac{n \left( \norm{ \xbar[0] - \xstar }^2 \right)}{2 \mu B } \right),
\end{align*}
where $\xstar$ is the minimizer of~\eqref{eq:problem}.
\end{corollary}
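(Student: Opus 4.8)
The plan is to show that the prescribed choice of step-size scaling factors forces the re-weighted objective $F_K$ to coincide (up to a harmless scalar) with the original objective $F$, so that $\xstar_K = \xstar$, and then to invoke Theorem~\ref{th:const_convergence_opt} verbatim with $\xstar_K$ replaced by $\xstar$.

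First I would compute the aggregate step-size mass $p^{(K)}_i$ from Definition~\ref{def:reweighted_obj}. Under Assumption~\ref{ass:step_size_const} the step-size is constant in $k$, so $\alpha_i[k] = \alpha_i$ and the sum over $k$ factors:
\[
	p^{(K)}_i = \sum_{k=0}^{K-1}\alpha_i[k]\delta_i[k] = \alpha_i \sum_{k=0}^{K-1}\delta_i[k] = \alpha_i\, c_i[K-1],
\]
where the last equality is exactly the definition of the local iteration counter $c_i[K-1] = \sum_{k=0}^{K-1}\delta_i[k]$. Substituting $\alpha_i = w_i B / K^\theta$ with $w_i = K / c_i[K-1]$ then gives $p^{(K)}_i = (B/K^\theta)\cdot K = B K^{1-\theta}$, which is \emph{independent of $i$}. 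Consequently $\overline{p}^{(K)}_i = p^{(K)}_i / \sum_{j=1}^n p^{(K)}_j = 1/n$ for every agent.

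Next I would observe that with $\overline{p}^{(K)}_i = 1/n$ the re-weighted objective of~\eqref{eqn:F_K} collapses to $F_K(\cdot) = \frac{1}{n}\sum_{i=1}^n f_i(\cdot) = \frac{1}{n} F(\cdot)$, which has the same minimizer as $F$; hence $\xstar_K = \xstar$. The claimed bound then follows immediately from Theorem~\ref{th:const_convergence_opt} by substituting $\xstar_K = \xstar$ on both sides.

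There is essentially no analytic obstacle: the statement is a corollary in the strict sense, obtained by specializing the step-sizes precisely so that the data-dependent re-weighting degenerates to the uniform average. The one point worth a brief check is that the prescribed $w_i = K/c_i[K-1]$ respects the standing requirement $w_i \geq 1$ of Assumption~\ref{ass:step_size_const}: since $c_i[K-1]$ counts activations over the $K$ indices $k = 0, \ldots, K-1$ and each $\delta_i[k] \in \{0,1\}$, we have $1 \leq c_i[K-1] \leq K$, so indeed $w_i \geq 1$. All remaining hypotheses, including the step-size bound of Assumption~\ref{ass:step_size_bound}, are inherited directly from the assumptions of Theorem~\ref{th:const_convergence_opt} and need not be re-verified.
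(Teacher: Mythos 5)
Your proposal is correct and follows essentially the same route as the paper's own proof: compute $p^{(K)}_i = \alpha_i\, c_i[K-1] = B K^{1-\theta}$, note it is independent of $i$ so that $\xstar_K = \xstar$, and substitute into Theorem~\ref{th:const_convergence_opt}. The only addition is your (valid) sanity check that $w_i \geq 1$, which the paper states but does not verify.
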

Corollary~\ref{cor:const_congergence_opt_known_udpate_rates} states that if the agents know one another's update rates, then they can set their step-sizes to guarantee convergence to the unbiased global minimizer, even in the presence of persistent, but bounded, processing and message delays.
In particular, slower agents can simply scale up their step-size to compensate for their slower update rates.

We also provide guarantees for a version of the algorithm using diminishing step sizes.

\end{definition}
\begin{assumption}[Step-Size Decay]
\label{ass:step_size_decay}
For a given $\theta \in (0.5,1)$, assume that there exist constants $B > 0$ and $w_i \geq 1$, for all $i \in [n]$, such that each agent $v_i$ sets its local step-size as
\[
	\alpha_i[k] \defeq \frac{w_i B}{ (c_i[k])^\theta}.
\]
\end{assumption}
\begin{remark}
\label{rem:step_size_decay}
Note that if Assumption~\ref{ass:step_size_decay} holds, then
\[
	\frac{B}{n (k + 1)^\theta} \leq \frac{1}{n} \sum^n_{i=1} \alpha_i[k] \delta_i[k] \leq  \frac{( \frac{1}{n} \sum^n_{i=1} w_i ) B (\tauratemax)^\theta }{ (k + \tauratemax)^\theta },
\]
where $\theta \in (0.5, 1)$ is defined in Assumption~\ref{ass:step_size_decay}
\end{remark}

\begin{theorem}[Convergence of Asynchronous Gradient Push for Diminishing Step-Size]
\label{th:convergence_opt}
Suppose Algorithm~\ref{alg:agp} is run from time $t[0]$ up to time $t[K]$, for some integer $K > 0$.
If Assumptions~\ref{ass:base_comm},~\ref{ass:base_obj},~\ref{ass:step_size_bound}, and~\ref{ass:step_size_decay} hold, then there exists a finite positive constant $A$ such that
\[
	\frac{1}{K} \sum^{K -1}_{k=0} \norm{\xbar[k] - \xstar_K}^2 \leq \frac{1}{K^{1 - \theta}} \left(\frac{n \left( \norm{ \xbar[0] - \xstar_K }^2 + A \right)}{2 \mu B } \right),
\]
where $\theta \in (0.5, 1)$ is defined in Assumption~\ref{ass:step_size_decay}, and $\xstar_K$ is the minimizer of the re-weighted objective defined in Definition~\ref{def:reweighted_obj}.
\end{theorem}

Theorem~\ref{th:convergence_opt} states that in the presence of persistent, but bounded, message and processing delays, the agents converge to the minimizer of a re-weighted version of the original problem, where the re-weighting values are completely determined by the agents' respective~cumulative step-sizes during the execution of the algorithm. The constant $A$ depends on the delay bound $\taumax$; see Lemma~\ref{lem:bounded_summations} below for more details.

\begin{corollary}[Exact Consensus for Asynchronous Gradient Push]
\label{cor:exact_consensus_dim}
Suppose the assumptions made in Theorem~\ref{th:convergence_opt} hold. Then, for all $i \in [n]$,
\[
	\lim_{k \to \infty} \norm{z_i[k] - \xbar[k]}  = 0.
\]
\end{corollary}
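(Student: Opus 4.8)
The plan is to exhibit the AGP iteration (Algorithm~\ref{eq:asynch_opt}) as a special case of Asynchronous Perturbed Push-Sum (Algorithm~\ref{eq:asynch_pert}) in which the perturbation is the activated, step-scaled gradient term, and then to invoke the vanishing-perturbation machinery already established in Theorem~\ref{th:avg_rate}. Comparing~\eqref{eq:asynch_itr_1} with~\eqref{eq:pert_itr_1}, the two updates coincide upon setting $\bm{\eta}[k] = -\Grad{k}$; the $y$- and $z$-recursions~\eqref{eq:asynch_itr_2}--\eqref{eq:asynch_itr_3} are identical to~\eqref{eq:pert_itr_2}--\eqref{eq:pert_itr_3}. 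Since $z_i[k]$ in the statement refers to the non-virtual copy $z^{(0)}_i[k]$, it therefore suffices to show that the hypotheses needed to drive $\norm{z^{(0)}_i[k] - \xbar[k]}_1$ to zero are met, and then to pass from the $\ell_1$ norm to the Euclidean norm by equivalence of norms on $\R^d$.

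The key step is to verify that the perturbation vanishes, i.e. $\norm{\eta_i[k]}_1 \to 0$ for every $i$. The $i$-th (non-virtual) row of $\Grad{k}$ equals $\alpha_i[k]\delta_i[k]\nabla f_i(z_i[k])$, so $\norm{\eta_i[k]}_1 = \alpha_i[k]\delta_i[k]\norm{\nabla f_i(z_i[k])}_1$. By Theorem~\ref{th:bounded_iterates_gradients} the gradients are uniformly bounded, $\sup_k \norm{\nabla f_i(z_i[k])} \le L$, so after converting to the $\ell_1$ norm it remains only to argue that the step-sizes decay. Under Assumption~\ref{ass:step_size_decay} we have $\alpha_i[k] = w_i B / (c_i[k])^{\theta}$ with $\theta \in (0.5,1)$, and because the processing delays are bounded ($\taurate{k}{i}\le\tauratemax<\infty$, Assumption~\ref{ass:base_comm}) each agent is activated at least once every $\tauratemax$ time-indices; hence the local iteration counter $c_i[k]$ grows at least linearly in $k$ and diverges, forcing $\alpha_i[k]\to 0$. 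Equivalently, the explicit upper bound in Remark~\ref{rem:step_size_decay} shows that the activated step-sizes decay like $k^{-\theta}$. Combining with the gradient bound gives $\norm{\eta_i[k]}_1 \le \sqrt{d}\,L\,\alpha_i[k] \to 0$.

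Finally, I would substitute this into the bound of Theorem~\ref{th:avg_rate},
\[
\norm{z^{(0)}_i[k] - \xbar[k]}_1 \le C q^k \norm{x^{(0)}_i[0]}_1 + C\sum_{s=0}^{k} q^{k-s}\norm{\eta_i[s]}_1 .
\]
The first term tends to $0$ since $q\in(0,1)$; for the second, Remark~\ref{rem:ram2010distributed} (applied with $\alpha[s] = \norm{\eta_i[s]}_1 \to 0$ and the same $q \in (0,1)$) gives $\lim_{k\to\infty}\sum_{s=0}^k q^{k-s}\norm{\eta_i[s]}_1 = 0$. Hence $\norm{z^{(0)}_i[k]-\xbar[k]}_1\to 0$, and equivalence of norms on $\R^d$ yields $\norm{z_i[k]-\xbar[k]}\to 0$, as claimed.

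I do not expect a genuine obstacle here: the result is a clean chaining of the perturbed push-sum convergence theorem, the boundedness of the gradients, and the decay of the step-sizes. The only point requiring care is the divergence of $c_i[k]$ --- that is, that persistent but bounded processing delays still guarantee each agent activates regularly (hence infinitely) often, so that its diminishing step-size genuinely tends to zero rather than stalling at a strictly positive value.
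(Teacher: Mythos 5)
Your proposal is correct and follows essentially the same route as the paper: it identifies the AGP recursion as Asynchronous Perturbed Push-Sum with $\bm{\eta}[k] = -\Grad{k}$, uses Theorem~\ref{th:bounded_iterates_gradients} for gradient boundedness and Assumption~\ref{ass:step_size_decay} for step-size decay to conclude the perturbation vanishes, and then applies the vanishing-perturbation corollary of Theorem~\ref{th:avg_rate}. The extra care you take in justifying that $c_i[k]\to\infty$ via the bounded processing delays is a detail the paper leaves implicit, but it is the same argument.
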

\begin{proof}
Notice that the Asynchronous Gradient Push updates in Algorithm~\eqref{eq:asynch_opt} can be regarded as Asynchronous Perturbed Push-Sum updates, with perturbation $\bm{\eta}[k]$ given by $- \Grad{k}$. Since the gradients remain bounded by Theorem~\ref{th:bounded_iterates_gradients}, and the local step-sizes go to zero by Assumption~\ref{ass:step_size_decay}, the conditions for Corollary~\ref{cor:avg_rate_vanish} are satisfied, and it follows that $\lim_{k \to \infty} \norm{ z_i[k] - \xbar[k] } = 0$.
\end{proof}
Corollary~\ref{cor:exact_consensus_dim} states that if all agents use a diminishing step-size, then they will achieve consensus, even in the presence of persistent, but bounded, processing and message delays.

\begin{corollary}[Convergence of Semi-Synchronous Gradient Push for Diminishing Step-Size]
\label{cor:semi_synch_convergence_opt}
Suppose the assumptions made in Theorem~\ref{th:convergence_opt} hold. If $\tauratemax = 1$ and each agent $v_i$ sets its local step-size scaling factor $w_i = 1$, then
\[
	\frac{1}{K} \sum^{K -1}_{k=0} \norm{\xbar[k] - \xstar}^2 \leq \frac{1}{K^{1 - \theta}} \left(\frac{n \left( \norm{ \xbar[0] - \xstar }^2 + A \right)}{2 \mu B } \right),
\]
where $\xstar$ is the minimizer of~\eqref{eq:problem}.
\end{corollary}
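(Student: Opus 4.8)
The plan is to deduce this corollary directly from Theorem~\ref{th:convergence_opt} by showing that, under the extra hypotheses $\tauratemax = 1$ and $w_i = 1$ for all $i$, the re-weighted objective $F_K$ coincides with the original objective $F$ up to a positive scalar, so that its minimizer $\xstar_K$ equals $\xstar$. Once $\xstar_K = \xstar$ is established, the claimed inequality is obtained by substituting $\xstar_K = \xstar$ into the conclusion of Theorem~\ref{th:convergence_opt}, and no further analysis is required.

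First I would observe that $\tauratemax = 1$ forces fully synchronous activation. Since $1 \le \taurate{k}{i} \le \tauratemax = 1$, every processing delay equals $1$, which means each agent enters the activation set at every time-index; equivalently $\delta_i[k] = 1$ for all $i$ and all $k \ge 0$ (using also the convention $\delta_i[0] \defeq 1$). Consequently the local iteration counters satisfy $c_i[k] = \sum_{\ell=0}^{k} \delta_i[\ell] = k+1$ for every agent $v_i$, so the counters are identical across all agents.

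Next I would compute the cumulative step-sizes. Under Assumption~\ref{ass:step_size_decay} with $w_i = 1$, each agent uses $\alpha_i[k] = B/(c_i[k])^\theta = B/(k+1)^\theta$, which is now the \emph{same} sequence for every $i$. Hence, by Definition~\ref{def:reweighted_obj},
\[
    p^{(K)}_i = \sum_{k=0}^{K-1} \alpha_i[k]\,\delta_i[k] = B \sum_{k=0}^{K-1} \frac{1}{(k+1)^\theta}
\]
is independent of $i$, so $\overline{p}^{(K)}_i = p^{(K)}_i / \sum_{j=1}^{n} p^{(K)}_j = 1/n$ for all $i$. Plugging this into~\eqref{eqn:F_K} gives $F_K(\cdot) = \tfrac{1}{n}\sum_{i=1}^n f_i(\cdot) = \tfrac{1}{n} F(\cdot)$, and since scaling by the positive constant $1/n$ does not move the $\argmin$, we conclude $\xstar_K = \xstar$.

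The remaining step is pure bookkeeping: substitute $\xstar_K = \xstar$ into Theorem~\ref{th:convergence_opt}. I expect \emph{no} genuine analytical obstacle here, since Theorem~\ref{th:convergence_opt} already carries all the heavy lifting (the push-sum rate of Theorem~\ref{th:avg_rate}, the bounded-iterate guarantee of Theorem~\ref{th:bounded_iterates_gradients}, and the descent argument behind the constant $A$). The only point that warrants care is verifying that $\tauratemax = 1$ genuinely forces $\delta_i[k] = 1$ at every step, and that equal step-size sequences across agents—requiring both $w_i = 1$ and the coincidence $c_i[k] = k+1$—yield exactly uniform weights $\overline{p}^{(K)}_i = 1/n$. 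This uniformity is precisely what drives the bias $\Delta^{(K)}$ of Theorem~\ref{th:bound_dist_minimziers} to zero and identifies $\xstar_K$ with $\xstar$.
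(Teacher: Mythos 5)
Your proposal is correct and follows essentially the same route as the paper's own proof: $\tauratemax = 1$ forces $\delta_i[k]=1$, which together with $w_i=1$ makes the cumulative step-sizes $p^{(K)}_i$ identical across agents, so $\xstar_K = \xstar$ and the bound of Theorem~\ref{th:convergence_opt} applies verbatim. Your version merely spells out the intermediate step $c_i[k]=k+1$ that the paper leaves implicit.
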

Corollary~\ref{cor:semi_synch_convergence_opt} states that if the agents perform gradient updates at the same rate, then they converge to the (unbiased) global minimizer, even in the presence of persistent, but bounded, message delays.

\begin{corollary}[Convergence of Asynchronous Gradient Push for Known Update Rates]
\label{cor:congergence_opt_known_udpate_rates}
Suppose the assumptions made in Theorem~\ref{th:convergence_opt} hold, and suppose that each agent $v_i$ has prior knowledge of $c_i[K - 1]$, the number of local iterations it will have completed before time $t[K]$.
If each agent $v_i$ sets its local step-size scaling factor
\[
	w_i \defeq \frac{ \left( \sum^{K - 1}_{k=0} \frac{1}{(k + 1)^\theta} \right) }{ \left( \sum^{c_i[K - 1] - 1}_{k=0} \frac{1}{(k + 1)^\theta} \right)} \geq 1
\]
for some $\theta \in (0.5, 1)$ (as per Assumption~\ref{ass:step_size_decay}), then
\[
	\frac{1}{K} \sum^{K -1}_{k=0} \norm{\xbar[k] - \xstar}^2 \leq \frac{1}{K^{1 - \theta}} \left(\frac{n \left( \norm{ \xbar[0] - \xstar }^2 + A \right)}{2 \mu B } \right),
\]
where $\xstar$ is the minimizer of~\eqref{eq:problem}.
\end{corollary}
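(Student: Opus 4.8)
The plan is to observe that Theorem~\ref{th:convergence_opt} already delivers a bound of exactly the stated form, but with $\xstar_K$, the minimizer of the re-weighted objective $F_K$, appearing in place of $\xstar$. Hence the entire task reduces to showing that the prescribed choice of scaling factors $w_i$ forces $\xstar_K = \xstar$. Since $\xstar$ minimizes $F(x) = \sum_{i=1}^n f_i(x)$ — equivalently $\frac{1}{n}\sum_{i=1}^n f_i(x)$ — while $F_K = \sum_{i=1}^n \overline{p}^{(K)}_i f_i$, it suffices to prove that the normalized weights are uniform, i.e. $\overline{p}^{(K)}_i = 1/n$ for every $i$. By the definition of $\overline{p}^{(K)}_i$ in Definition~\ref{def:reweighted_obj}, this is equivalent to showing that $p^{(K)}_i$ takes a common value independent of $i$.

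First I would evaluate $p^{(K)}_i = \sum_{k=0}^{K-1} \alpha_i[k]\,\delta_i[k]$ under the diminishing step-size rule of Assumption~\ref{ass:step_size_decay}, namely $\alpha_i[k] = w_i B / (c_i[k])^\theta$. The key observation is that $\delta_i[k] = 1$ precisely at the time-indices where agent $v_i$ is activated, and at each such activation the local counter $c_i[k]$ advances by exactly one; consequently, as $k$ sweeps the activation times in $\{0, \dots, K-1\}$, the value $c_i[k]$ enumerates the consecutive integers $1, 2, \dots, c_i[K-1]$ without gaps or repetitions. Re-indexing the sum over these counter values gives
\[
p^{(K)}_i = w_i B \sum_{m=0}^{c_i[K-1]-1} \frac{1}{(m+1)^\theta}.
\]

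With this closed form in hand, I would substitute the prescribed scaling factor, whose denominator is exactly the sum appearing above, so that it cancels and leaves $p^{(K)}_i = B \sum_{k=0}^{K-1}(k+1)^{-\theta}$, a quantity that does not depend on $i$. The same comparison confirms $w_i \geq 1$, since $c_i[K-1] \leq K$ means the denominator of $w_i$ has no more terms than its numerator. Therefore all $p^{(K)}_i$ coincide, whence $\overline{p}^{(K)}_i = 1/n$ and $F_K = \frac{1}{n}F$, so that $\xstar_K = \xstar$. Plugging this identity into the conclusion of Theorem~\ref{th:convergence_opt} yields the claimed bound.

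I do not expect a serious obstacle, as the result is essentially a calibration argument layered on top of the already-established diminishing-step-size theorem. The one step requiring genuine care is the re-indexing of the step-size sum: one must verify that the multiset $\{\, c_i[k] : \delta_i[k] = 1,\ 0 \leq k \leq K-1 \,\}$ equals $\{1, \dots, c_i[K-1]\}$ exactly, which follows because $c_i[\cdot]$ is constant between activations and increments by precisely one at each activation, together with the convention $\delta_i[0] = 1$, $c_i[0] = 1$. Everything else is routine substitution.
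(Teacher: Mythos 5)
Your proposal is correct and follows essentially the same route as the paper: evaluate $p^{(K)}_i$ in closed form as $w_i B \sum_{m=0}^{c_i[K-1]-1}(m+1)^{-\theta}$, observe that the prescribed $w_i$ cancels the agent-dependent sum so that all $p^{(K)}_i$ coincide, conclude $\xstar_K = \xstar$, and substitute into Theorem~\ref{th:convergence_opt}. The only difference is that you spell out the re-indexing of the activation-time sum, which the paper leaves implicit.
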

Corollary~\ref{cor:congergence_opt_known_udpate_rates} states that if the agents know one another's update rates, then they can set their step-sizes to guarantee convergence to the ubiased global minimizer, even in the presence of persistent, but bounded, processing and message delays.
In particular, slower agents can simply scale up their step-size to compensate for their slower update rates.

\section{Analysis}
\label{sec:analysis}

\subsection{Proof of Theorem~\ref{th:bound_dist_minimziers}}

\begin{proof}[\unskip\nopunct]
Using the strong convexity of the global objective, we have
\begin{equation}
\label{eq:bound_dist_minimizer_eq1}
	\norm{ \xstar_K - \xstar }^2 \leq \frac{2}{\mu} \sum^n_{i=1} \frac{1}{n} (f_i(\xstar_K) - f_i(\xstar)),
\end{equation}
and
\begin{equation}
\label{eq:bound_dist_minimizer_eq2}
	\norm{ \xstar_K - \xstar }^2 \leq \frac{2}{\mu} \sum^n_{i=1} \overline{p}^{(K)}_i (f_i(\xstar) - f_i(\xstar_K)).
\end{equation}
Summing~\eqref{eq:bound_dist_minimizer_eq1} and~\eqref{eq:bound_dist_minimizer_eq2} and multiplying through by $1/2$, we obtain that
\begin{equation*}
	\norm{ \xstar_K - \xstar }^2 \leq \frac{1}{\mu} \sum^n_{i=1} \left( ( f_i(\xstar_K) - f_i(\xstar) ) \left(\frac{1}{n} -\overline{p}^{(K)}_i \right) \right).
\end{equation*}
Adding and subtracting $\frac{1}{\mu} f_i(\xstar_i)$, we have
\begin{align}
\label{eq:bound_dist_minimizer_eq3}
\begin{split}
	\norm{ \xstar_K - \xstar }^2 \leq& \frac{1}{\mu} \sum^n_{i=1} (f_i(\xstar_K) - f_i(\xstar_i)) \left(\frac{1}{n} - \overline{p}^{(K)}_i\right) \\
	&- \frac{1}{\mu} \sum^n_{i=1} (f_i(\xstar) - f_i(\xstar_i)) \left(\frac{1}{n} -\overline{p}^{(K)}_i \right).
\end{split}
\end{align}
Define the index set $\mathcal{I} \defeq \{ i \in [n] | \frac{1}{n} - \overline{p}^{(K)}_i  \geq 0 \}$, and its complement $\mathcal{I}^C \defeq \{ i \in [n] | \frac{1}{n} - \overline{p}^{(K)}_i  < 0 \}$.
We can further bound~\eqref{eq:bound_dist_minimizer_eq3} as
\begin{align}
\label{eq:bound_dist_minimizer_eq4}
\begin{split}
	\norm{ \xstar_K - \xstar }^2 \leq& \frac{1}{\mu} \sum_{i \in \mathcal{I}} (f_i(\xstar_K) - f_i(\xstar_i)) \abs{\frac{1}{n} - \overline{p}^{(K)}_i}\\
	&+ \frac{1}{\mu} \sum_{i \in \mathcal{I}^C} (f_i(\xstar) - f_i(\xstar_i)) \abs{\frac{1}{n} -\overline{p}^{(K)}_i }.
\end{split}
\end{align}
Using the smoothness of the global objective, we can bound the terms in the first summation in~\eqref{eq:bound_dist_minimizer_eq4},
\begin{align}
\label{eq:bound_dist_minimizer_eq5}
\begin{split}
	\frac{1}{\mu} (f_i(\xstar_K) - f_i(\xstar_i)) \abs{\frac{1}{n} - \overline{p}^{(K)}_i} \leq \frac{\kappa}{2} \norm{ \xstar_K - \xstar_i }^2 \abs{\frac{1}{n} - \overline{p}^{(K)}_i},
\end{split}
\end{align}
and similarly for the terms in the second summation in~\eqref{eq:bound_dist_minimizer_eq4},
\begin{align}
\label{eq:bound_dist_minimizer_eq6}
\begin{split}
	\frac{1}{\mu} (f_i(\xstar) - f_i(\xstar_i)) \abs{\frac{1}{n} - \overline{p}^{(K)}_i} \leq \frac{\kappa}{2} \norm{ \xstar - \xstar_i }^2 \abs{\frac{1}{n} - \overline{p}^{(K)}_i}.
\end{split}
\end{align}
Substituting~\eqref{eq:bound_dist_minimizer_eq5} and~\eqref{eq:bound_dist_minimizer_eq6} back into~\eqref{eq:bound_dist_minimizer_eq4}, we have
\begin{align}
\label{eq:bound_dist_minimizer_eq7}
\begin{split}
	\norm{ \xstar_K - \xstar }^2 \leq& \frac{\kappa}{2} \sum_{i \in \mathcal{I}} \norm{ \xstar_K - \xstar_i }^2 \abs{\frac{1}{n} - \overline{p}^{(K)}_i}\\
	&+ \frac{\kappa}{2} \sum_{i \in \mathcal{I}^C} \norm{ \xstar - \xstar_i }^2 \abs{\frac{1}{n} -\overline{p}^{(K)}_i }.
\end{split}
\end{align}
Note that there exists an index $j \in [n]$ such that $\norm{ \xstar_K - \xstar_j } \leq \norm{ \xstar- \xstar_j}$. To see this, suppose for the sake of a contradiction that $\norm{ \xstar_K - \xstar_j} > \norm{ \xstar - \xstar_j }$ for all $j \in [n]$. 
Since the local objectives are strongly convex, this implies that there exists a point $\xstar$ such that $f_j(\xstar) < f_j(\xstar_K)$ for all $j \in [n]$.
Therefore, $F_K(\xstar) < F_K(\xstar_K)$, which contradicts the definition of $\xstar_K$. Hence there exists $j \in [n]$ such that
\begin{equation}
\label{eq:bound_dist_minimizer_eq8}
	\norm{ \xstar_K - \xstar_j } \leq \norm{ \xstar- \xstar_j}.
\end{equation}
Using the triangle inequality and~\eqref{eq:bound_dist_minimizer_eq8}
\[
	\norm{ \xstar_K - \xstar_i} \leq \overline{S}.
\]
Similarly, using the triangle inequality
\[
	\norm{ \xstar_i - \xstar} \leq  \overline{S}.
\]
Therefore, we can simplify~\eqref{eq:bound_dist_minimizer_eq7} as
\begin{align}
\label{eq:bound_dist_minimizer_eq9}
\begin{split}
	\norm{ \xstar_K - \xstar }^2 \leq& \frac{\overline{S}^2 \kappa}{2} \sum^n_{i =1}\abs{\frac{1}{n} - \overline{p}^{(K)}_i}.
\end{split}
\end{align}
Taking the square-root on each side of~\eqref{eq:bound_dist_minimizer_eq9} gives the desired result.
\end{proof}

\subsection{Preliminaries}
Before proceeding to the proofs of Theorems~\ref{th:const_convergence_opt} and~\ref{th:convergence_opt}, we derive some preliminary results here. Then we give the proof of Theorem~\ref{th:const_convergence_opt} followed by the proof of Theorem~\ref{th:convergence_opt} in the remainder of this section.

\begin{lemma}
\label{lem:bounded_suboptimality}
Suppose Assumptions~\ref{ass:base_obj} and \ref{ass:step_size_bound} are satisfied.
Then for all $k \geq 0$,
\[
	\norm{ \xbar[k] - \xstar_K } \leq \frac{L}{\mu},
\]
where $L$ is defined in Theorem~\ref{th:bounded_iterates_gradients}, and $\xstar_K$ is the minimizer of the re-weighted objective defined in Definition~\ref{def:reweighted_obj}.
\end{lemma}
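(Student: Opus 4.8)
Lemma~\ref{lem:bounded_suboptimality} claims that the running average $\xbar[k] = \ones^\T \bm{x}[k]/n$ stays within distance $L/\mu$ of the re-weighted minimizer $\xstar_K$, uniformly in $k$. The natural quantities to invoke are: (i) Theorem~\ref{th:bounded_iterates_gradients}, which gives $\sup_k \norm{\xbar[k]} < D$ and bounded gradients $\norm{\nabla f_i(z_i[k])}\le L$; and (ii) strong convexity of $F_K$, whose modulus is at least $\mu$ since it is a convex combination of the $\mu_i$-strongly convex $f_i$ with $\overline{p}^{(K)}_i$ summing to one. My plan is to tie the evolution of $\xbar[k]$ directly to the gradient of $F_K$ evaluated along the iterates, then use the strong-convexity inequality to turn a gradient bound into a distance bound.

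**The plan.**

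The key algebraic fact is that, because each $\Pbar[k]$ is column stochastic, left-multiplying the numerator update~\eqref{eq:asynch_itr_1} by $\ones^\T/n$ telescopes the consensus matrix away: $\ones^\T \Pbar[k] = \ones^\T$ on the non-virtual block, so $\xbar[k+1] = \xbar[k] - \tfrac{1}{n}\ones^\T \Grad{k}$. Writing out $\Grad{k}$, the only nonzero rows are $\alpha_i[k]\delta_i[k]\nabla f_i(z^{(0)}_i[k])$, so the increment is $-\tfrac{1}{n}\sum_{i=1}^n \alpha_i[k]\delta_i[k]\nabla f_i(z_i[k])$. This exhibits $\xbar$ as driven by a weighted sum of \emph{bounded} gradients, where the weights are exactly the per-step contributions $\alpha_i[k]\delta_i[k]$ that define $p^{(K)}_i$ and hence $\overline{p}^{(K)}_i$ and the objective $F_K$. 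First I would make this connection precise, so that the direction $\xbar$ moves in is governed by $\nabla F_K$ up to the consensus error $z_i[k]-\xbar[k]$.

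**Turning the gradient bound into a distance bound.**

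Given strong convexity of $F_K$, I would use the standard inequality $\mu\norm{\xbar[k]-\xstar_K} \le \norm{\nabla F_K(\xbar[k])}$ (since $\nabla F_K(\xstar_K)=\zeros$), which immediately yields $\norm{\xbar[k]-\xstar_K}\le \tfrac{1}{\mu}\norm{\nabla F_K(\xbar[k])}$. It then remains to bound $\norm{\nabla F_K(\xbar[k])} = \norm{\sum_i \overline{p}^{(K)}_i \nabla f_i(\xbar[k])}$ by $L$. By convexity of the norm and $\sum_i \overline{p}^{(K)}_i = 1$, this is at most $\max_i \norm{\nabla f_i(\xbar[k])}$; the $L$ supplied by Theorem~\ref{th:bounded_iterates_gradients} bounds $\norm{\nabla f_i(z_i[k])}$, so I would argue (either by re-deriving $L$ as a uniform gradient bound over the relevant bounded region guaranteed by $\sup_k\norm{\xbar[k]}<D$, or by noting the gradient bound holds at every point the iterates visit) that $\norm{\nabla f_i(\xbar[k])}\le L$ as well. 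Combining gives $\norm{\xbar[k]-\xstar_K}\le L/\mu$.

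**The main obstacle.**

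The delicate point is the \emph{compatibility} of the constant $L$: Theorem~\ref{th:bounded_iterates_gradients} bounds $\norm{\nabla f_i(z_i[k])}$ at the de-biased consensus estimates $z_i[k]$, whereas the strong-convexity argument above wants a bound on $\norm{\nabla f_i(\xbar[k])}$. The cleanest resolution is to observe that $\sup_k\norm{\xbar[k]}<D$ confines $\xbar[k]$ to a bounded set on which each continuous $\nabla f_i$ is bounded, so enlarging $L$ (a finite constant, as in the theorem) to also dominate $\sup_{\norm{x}<D}\norm{\nabla f_i(x)}$ costs nothing. I expect this reconciliation — ensuring $L$ simultaneously bounds the gradients at both $z_i[k]$ and $\xbar[k]$ — to be the only substantive step; everything else is the column-stochastic telescoping and the one-line strong-convexity estimate.
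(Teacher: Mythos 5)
Your proposal matches the paper's proof: the paper likewise applies the strong-convexity inequality $\mu\norm{\xbar[k]-\xstar_K}\le\norm{\sum_i \overline{p}^{(K)}_i\nabla f_i(\xbar[k])}$ (valid since $\nabla F_K(\xstar_K)=\zeros$ and $F_K$ is $\mu$-strongly convex) and then bounds the right-hand side by $L$ via convexity of the norm and the gradient bound of Theorem~\ref{th:bounded_iterates_gradients}; the column-stochastic telescoping you describe is not actually needed for this lemma. Your observation that Theorem~\ref{th:bounded_iterates_gradients} formally bounds $\norm{\nabla f_i(z_i[k])}$ rather than $\norm{\nabla f_i(\xbar[k])}$ is a fair point that the paper's two-line proof glosses over, and your resolution (the constant $L$ can be taken to dominate the gradients on the bounded region containing the $\xbar[k]$) is the right one.
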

\begin{proof}
Using the strong convexity of the global objective and the fact that $\xstar_K$ is the minimizer of the re-weighted objective $ \sum^n_{i=1} \overline{p}^{(K)}_i f_i( \cdot)$, we have that
\[
	\norm{ \xbar[k] - \xstar_K } \leq \frac{1}{\mu} \norm{ \sum^n_{i=1} \overline{p}^{(K)}_i \nabla f_i ( \xbar[k])}.
\]
Using the convexity of the norm and substituting the gradient upper bound from Theorem~\ref{th:bounded_iterates_gradients} gives the desired result.
\end{proof}

\begin{lemma}
\label{lem:inner_prod}
Suppose Assumptions~\ref{ass:base_obj} and~\ref{ass:step_size_bound} are satisfied. Define
\begin{align*}
	\gamma_i[k] \defeq&	\kappa L C \norm{x_i[0]}_1 q^k \\
	\chi_i[k] \defeq& \kappa L^2 C \sum^k_{s=0} q^{k - s} \alpha_i[s] \delta_i[s]
\end{align*}
where $q \in (0, 1)$ and $C > 0$ are defined in Theorem~\ref{th:avg_rate}.
Then for all $i=1, \ldots, n$ it holds that
\begin{align*}
	\langle \nabla f_i(z_i[k]), \xbar[k] - \xstar_K \rangle \geq&\ \mu \norm{\xbar[k] - \xstar_K}^2 \\
	& - \gamma_i[k] - \chi_i[k] \\
	& + \langle \nabla f_i(\xstar_K), \xbar[k] - \xstar_K \rangle.
\end{align*}
\end{lemma}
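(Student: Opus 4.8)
The plan is to compare $\nabla f_i(z_i[k])$ against the ``ideal'' gradient $\nabla f_i(\xbar[k])$ evaluated at the network average, using strong convexity to generate the quadratic term and controlling the consensus error through smoothness of $f_i$ together with the Push-Sum convergence bound of Theorem~\ref{th:avg_rate}. Concretely, I would first add and subtract $\nabla f_i(\xbar[k])$ inside the inner product:
\[
	\langle \nabla f_i(z_i[k]), \xbar[k] - \xstar_K \rangle = \langle \nabla f_i(\xbar[k]), \xbar[k] - \xstar_K \rangle + \langle \nabla f_i(z_i[k]) - \nabla f_i(\xbar[k]), \xbar[k] - \xstar_K \rangle .
\]
The first inner product is handled by strong convexity; the second (the consensus-error term) is lower-bounded via Cauchy--Schwarz, Lipschitz continuity, Lemma~\ref{lem:bounded_suboptimality}, and Theorem~\ref{th:avg_rate}.

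For the first term, I would invoke $\mu_i$-strong convexity of $f_i$ (Assumption~\ref{ass:base_obj}), which renders $\nabla f_i$ strongly monotone, so that $\langle \nabla f_i(\xbar[k]) - \nabla f_i(\xstar_K), \xbar[k] - \xstar_K \rangle \geq \mu_i \norm{\xbar[k] - \xstar_K}^2 \geq \mu \norm{\xbar[k] - \xstar_K}^2$ since $\mu = \min_i \mu_i$. Rearranging gives $\langle \nabla f_i(\xbar[k]), \xbar[k] - \xstar_K \rangle \geq \mu \norm{\xbar[k] - \xstar_K}^2 + \langle \nabla f_i(\xstar_K), \xbar[k] - \xstar_K \rangle$, which already reproduces the $\mu \norm{\xbar[k] - \xstar_K}^2$ term and the $\langle \nabla f_i(\xstar_K), \xbar[k] - \xstar_K \rangle$ term appearing in the claim.

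For the consensus-error term, I would apply Cauchy--Schwarz and then $M_i$-Lipschitz continuity of $\nabla f_i$ (with $M = \max_i M_i$) to obtain $\langle \nabla f_i(z_i[k]) - \nabla f_i(\xbar[k]), \xbar[k] - \xstar_K \rangle \geq - M \norm{z_i[k] - \xbar[k]}\,\norm{\xbar[k] - \xstar_K}$. Lemma~\ref{lem:bounded_suboptimality} bounds the last factor by $L/\mu$, collapsing the prefactor to $\kappa L$. It then remains to control $\norm{z_i[k] - \xbar[k]}$, which is where the averaging analysis enters: regarding AGP as Asynchronous Perturbed Push-Sum with perturbation $\bm{\eta}[k] = -\Grad{k}$, Theorem~\ref{th:avg_rate} yields $\norm{z_i[k] - \xbar[k]}_1 \leq C q^k \norm{x_i[0]}_1 + C \sum_{s=0}^{k} q^{k-s} \norm{\eta_i[s]}_1$. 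Since the $i$-th row of $\Grad{s}$ equals $\alpha_i[s]\delta_i[s]\nabla f_i(z_i[s])$ and the gradients are uniformly bounded by $L$ (Theorem~\ref{th:bounded_iterates_gradients}), we get $\norm{\eta_i[s]}_1 \leq \alpha_i[s]\delta_i[s] L$. Multiplying the resulting bound on $\norm{z_i[k] - \xbar[k]}$ by $\kappa L$ reproduces exactly $\gamma_i[k] + \chi_i[k]$, and substituting into the decomposition finishes the argument.

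The obstacles here are mostly bookkeeping: reconciling the $1$-norm in Theorem~\ref{th:avg_rate} with the unsubscripted Euclidean norm in the statement, which I would handle through the equivalence $\norm{\cdot} \leq \norm{\cdot}_1$ and by absorbing the constant into $L$; and confirming that the identification $\bm{\eta}[k] = -\Grad{k}$ satisfies the hypotheses of Theorem~\ref{th:avg_rate}. The only genuinely substantive step is recognizing that the consensus error $\norm{z_i[k] - \xbar[k]}$ must be split into a transient geometric part (yielding $\gamma_i[k]$) and an accumulated gradient-perturbation part (yielding $\chi_i[k]$); once that structure is identified, every remaining estimate is a direct application of Cauchy--Schwarz, Lipschitz continuity, and the previously established bounds.
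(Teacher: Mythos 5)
Your proposal is correct and follows essentially the same route as the paper's proof: the identical decomposition by adding and subtracting $\nabla f_i(\xbar[k])$, strong convexity for the quadratic term, and Cauchy--Schwarz plus Lipschitz continuity, Lemma~\ref{lem:bounded_suboptimality}, Theorem~\ref{th:avg_rate}, and the gradient bound $L$ for the consensus-error term. Your explicit handling of the $\norm{\cdot}\leq\norm{\cdot}_1$ reconciliation is a small point the paper leaves implicit, but otherwise the two arguments coincide.
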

\begin{proof}
Begin by re-writing the inner product
\begin{align}
\label{eq:lem_inner_prod_pf1}
\begin{split}
	\langle \nabla f_i(z_i[k])&, \xbar[k] - \xstar_K \rangle \\
	=& \langle \nabla f_i(z_i[k]) - \nabla f_i(\xbar[k]), \xbar[k] - \xstar_K \rangle \\
	& + \langle \nabla f_i(\xbar[k]), \xbar[k] - \xstar_K \rangle.
\end{split}
\end{align}
Using the Lipschitz-smoothness of the objectives, we have
\begin{align}
\label{eq:lem_inner_prod_pf2}
\begin{split}
	\langle \nabla f_i(z_i[k])& - \nabla f_i(\xbar[k]), \xbar[k] - \xstar_K \rangle \\
	\geq& -M \norm{z_i[k] - \xbar[k]} \norm{\xbar[k] - \xstar_K }.
\end{split}
\end{align}
Making use of Lemma~\ref{lem:bounded_suboptimality}, we can simplify~\eqref{eq:lem_inner_prod_pf2} as
\begin{align}
\label{eq:lem_inner_prod_pf3}
\begin{split}
	\langle \nabla f_i(z_i[k])& - \nabla f_i(\xbar[k]), \xbar[k] - \xstar_K \rangle \\
	\geq& - \kappa L \norm{z_i[k] - \xbar[k]}.
\end{split}
\end{align}
Applying the result of Theorem~\ref{th:avg_rate} in~\eqref{eq:lem_inner_prod_pf3}, and substituting the gradient bounds from Theorem~\ref{th:bounded_iterates_gradients}, we have
\begin{align*}
\begin{split}
	\langle \nabla& f_i(z_i[k]) - \nabla f_i(\xbar[k]), \xbar[k] - \xstar_K \rangle \\
	& \geq - \left( \kappa L C \right) \left( \norm{ x_i[0]}_{1} q^{k} +  L \sum^{k}_{s=0} q^{k - s} \alpha_i[s] \delta_i[s] \right),
\end{split}
\end{align*}
thereby bounding the first term in~\eqref{eq:lem_inner_prod_pf1}.
Using the strong-convexity of the objectives, we can bound the second term in~\eqref{eq:lem_inner_prod_pf1} as
\begin{align}
\label{eq:lem_inner_prod_pf4}
\begin{split}
	\langle \nabla f_i(\xbar[k]), \xbar[k] - \xstar_K \rangle \geq&\ \langle \nabla f_i(\xstar_K), \xbar[k] - \xstar_K \rangle \\
	&\ + \mu \norm{ \xbar[k] - \xstar_K }^2.
\end{split}
\end{align}
\end{proof}

\begin{lemma}
\label{lem:contraction}
Suppose Assumptions~\ref{ass:base_obj} and~\ref{ass:step_size_bound} are satisfied.
For any integer $K > 0$, it holds that
\begin{align*}
	\frac{1}{n K} \sum^{K - 1}_{k=0} & \sum^n_{i=1} \alpha_i[k] \delta_i[k] \langle \nabla f_i(\xstar_K), \xbar[k] - \xstar_K \rangle \geq 0,
\end{align*}
where $\xstar_K$ is the minimizer of the re-weighted objective defined in Definition~\ref{def:reweighted_obj}.
\end{lemma}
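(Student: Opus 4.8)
The plan is to reduce the claim to the first-order optimality of $\xstar_K$. Since $\xstar_K$ minimizes $F_K(\cdot)=\sum_{i=1}^n\overline{p}^{(K)}_i f_i(\cdot)$ and each $f_i$ is differentiable, stationarity gives $\sum_{i=1}^n\overline{p}^{(K)}_i\nabla f_i(\xstar_K)=\zeros$, or equivalently $\sum_{i=1}^n p^{(K)}_i\nabla f_i(\xstar_K)=\zeros$ after clearing the normalization $\sum_i p^{(K)}_i$. First I would interchange the two sums and, for each fixed $i$, pull the time-independent gradient $\nabla f_i(\xstar_K)$ out of the sum over $k$, obtaining
\[
\sum_{k=0}^{K-1}\alpha_i[k]\delta_i[k]\langle\nabla f_i(\xstar_K),\xbar[k]-\xstar_K\rangle = p^{(K)}_i\langle\nabla f_i(\xstar_K),\widetilde{x}_i-\xstar_K\rangle,
\]
where $\widetilde{x}_i\defeq\frac{1}{p^{(K)}_i}\sum_{k=0}^{K-1}\alpha_i[k]\delta_i[k]\,\xbar[k]$ is the step-size-weighted time-average of the network mean sampled over agent $v_i$'s activations.

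Summing over $i$ and using $\sum_i p^{(K)}_i\nabla f_i(\xstar_K)=\zeros$ to discard the $\xstar_K$ contribution, the double sum collapses to $\sum_{i=1}^n p^{(K)}_i\langle\nabla f_i(\xstar_K),\widetilde{x}_i\rangle$. The key structural observation is that if the $\widetilde{x}_i$ all coincided with a common vector $\widetilde{x}$ --- which is exactly what occurs when the activation-and-step-size profiles are proportional across agents, e.g.\ the semi-synchronous regime $\tauratemax=1$, $w_i=1$ of Corollary~\ref{cor:const_semi_synch_convergence_opt} --- then the sum would equal $\langle\sum_i p^{(K)}_i\nabla f_i(\xstar_K),\widetilde{x}\rangle=0$, and nonnegativity would be immediate. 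I would therefore dispatch this proportional case first, where the inequality in fact holds with equality.

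The hard part is the general asynchronous case, where the per-agent weighted averages $\widetilde{x}_i$ genuinely differ, so the single-point optimality condition no longer kills the whole sum. The obstacle is precisely to control the residual $\sum_i p^{(K)}_i\langle\nabla f_i(\xstar_K),\widetilde{x}_i-\widetilde{x}\rangle$ for a suitable common reference $\widetilde{x}$. My plan is to expose this residual by summation-by-parts, writing it in terms of the increments $\xbar[s+1]-\xbar[s]=-\tfrac1n\sum_i\alpha_i[s]\delta_i[s]\nabla f_i(z_i[s])$ and the partial sums of the zero-sum weighted optimality gradients; each gap $\widetilde{x}_i-\widetilde{x}_j$ is then a convex combination of these increments and is bounded by the total variation of $\xbar[\cdot]$ over the horizon, which Theorem~\ref{th:bounded_iterates_gradients} controls through the uniformly bounded gradients and the step-sizes. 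I would then try to certify the sign using monotonicity of the map $\sum_i p^{(K)}_i\nabla f_i$ at $\xstar_K$, rewriting $\nabla f_i(\xstar_K)=\nabla f_i(\xstar_K)-\nabla f_i(\xstar_i)$ to reintroduce the local minimizers and the distances $S_i,\,S_{i,j}$ appearing in Theorem~\ref{th:bound_dist_minimziers}. I expect extracting an exact nonnegative sign here to be the delicate step; should it resist, the fallback is to bound the residual and show it is of strictly lower order in the step-size than the leading $\order{1/K^\theta}$ term of Theorem~\ref{th:const_convergence_opt}, so that dropping it does not affect the stated convergence rate.
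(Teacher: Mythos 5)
Your reduction is sound and rests on the same key ingredient as the paper's proof: stationarity of $\xstar_K$ for the re-weighted objective, $\sum_{i=1}^n p^{(K)}_i \nabla f_i(\xstar_K) = \zeros$, combined with pulling the time-independent gradients $\nabla f_i(\xstar_K)$ out of the sum over $k$. The proportional/semi-synchronous case you dispatch is correct. But the general asynchronous case is exactly the content of the lemma, and your plan does not close it: you explicitly concede that you cannot certify the sign of $\sum_i p^{(K)}_i \langle \nabla f_i(\xstar_K), \widetilde{x}_i - \widetilde{x}\rangle$, and your fallback (showing the residual is of lower order than the leading term of Theorem~\ref{th:const_convergence_opt}) would establish a different, weaker statement than the lemma as stated. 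Indeed, knowing only that $\sum_i p^{(K)}_i \nabla f_i(\xstar_K)=\zeros$ and that the $\widetilde{x}_i$ differ, the quantity $\sum_i p^{(K)}_i\langle\nabla f_i(\xstar_K),\widetilde{x}_i\rangle$ has no sign in general, so some additional device is unavoidable. That is a genuine gap.

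The paper's route diverges at precisely this point. It splits $\xbar[k]-\xstar_K = (\xbar[K]-\xstar_K) + (\xbar[k]-\xbar[K])$. The first piece contributes $\tfrac{1}{nK}\bigl\langle\sum_i p^{(K)}_i\nabla f_i(\xstar_K),\, \xbar[K]-\xstar_K\bigr\rangle$, which is handled exactly by stationarity (after a Cauchy--Schwarz bound using Lemma~\ref{lem:bounded_suboptimality}, the norm of the zero vector appears and the bound vanishes). For the second piece, each $\xbar[k]-\xbar[K]$ is expanded as a sum of algorithm increments, and the inner product is lower-bounded by its minimum over $k\in\{0,\dots,K-1\}$; writing $v_K$ for the minimizing partial sum, the weighted sum over $i$ collapses to $\tfrac{1}{nK}\sum_i p^{(K)}_i\langle\nabla f_i(\xstar_K), v_K\rangle$, which again vanishes by stationarity. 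You should be aware that this step requires the minimizing index (hence $v_K$) to be common to all $i$, whereas the argmin defining it involves $\nabla f_i(\xstar_K)$ and therefore a priori depends on $i$; if $v_K$ were $i$-dependent, the factorization against $\sum_i p^{(K)}_i\nabla f_i(\xstar_K)$ would not go through. So your instinct that extracting an exact nonnegative sign is delicate is pointing at the very spot where the paper's own argument is thinnest --- but as submitted, your proposal does not contain a proof of the lemma.
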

\begin{proof}
Begin by re-writing the inner product
\begin{align}
\label{eq:lem_contraction_pf1}
\begin{split}
	\langle \nabla f_i(\xstar_K), \xbar[k] - \xstar_K \rangle =& \langle \nabla f_i(\xstar_K), \xbar[K] - \xstar_K \rangle \\
	& + \langle \nabla f_i(\xstar_K), \xbar[k] - \xbar[K] \rangle.
\end{split}
\end{align}
From Lemma~\ref{lem:bounded_suboptimality}, we have
\begin{align}
\label{eq:lem_contraction_pf2}
\begin{split}
	\frac{1}{n K} \sum^{K 	- 1}_{k=0} & \sum^n_{i=1} \alpha_i[k] \delta_i[k] \langle \nabla f_i(\xstar_K), \xbar[K] - \xstar_K \rangle \\
	&\geq - \frac{L}{\mu} \norm{ \frac{1}{nK} \sum^n_{i=1} \nabla f_i(\xstar_K) \left( \sum^{K- 1}_{k=0} \alpha_i[k] \delta_i[k] \right)}.
\end{split}
\end{align}
Recalling that $p^{(K)}_i \defeq \sum^{K- 1}_{k=0} \alpha_i[k] \delta_i[k]$, and that $\xstar_K$ is the minimizer of the re-weighted objective $\sum^n_{i=1} f_i( \cdot) p^{(K)}_i$, it follows that the right-hand-side of~\eqref{eq:lem_contraction_pf2} vanishes, and
\begin{align}
\label{eq:lem_contraction_pf3}
\begin{split}
	\frac{1}{n K} \sum^{K- 1}_{k=0} & \sum^n_{i=1} \alpha_i[k] \delta_i[k] \langle \nabla f_i(\xstar_K), \xbar[K] - \xstar_K \rangle \geq 0.
\end{split}
\end{align}
Now turning our attention to the second term on the right-hand side of~\eqref{eq:lem_contraction_pf1}, we have
\begin{align*}
\begin{split}
	\langle \nabla f_i(\xstar_K)&, \xbar[k] - \xbar[K] \rangle \\
	& = \left\langle \nabla f_i(\xstar_K), \sum^{K- 1}_{\ell = k} \frac{1}{n} \sum^n_{i=1} \alpha_i[\ell] \delta_i[\ell] \nabla f_i (z_i[\ell]) \right\rangle.
\end{split}
\end{align*}
Define the positive integer $k^\prime$ as
\[
	k^\prime  \defeq \argmin_{k \in \{ 0, 1, \ldots, K - 1\}} \left\langle \nabla f_i(\xstar_K), \sum^{K- 1}_{\ell = k} \frac{1}{n} \sum^n_{i=1} \alpha_i[\ell] \delta_i[\ell] \nabla f_i (z_i[\ell]) \right\rangle,
\]
and the corresponding vector, $v_K \in \R^d$,
\[
	v_K \defeq \sum^{K - 1}_{\ell = k^\prime} \frac{1}{n} \sum^n_{i=1} \alpha_i[\ell] \delta_i[\ell] \nabla f_i (z_i[\ell]).
\]
It holds for all $k = 0, 1, \dots, K- 1$ that
\begin{align*}
\begin{split}
	\langle \nabla f_i(\xstar_K)&, \xbar[k] - \xbar[K] \rangle \geq \langle \nabla f_i(\xstar_K), v_K \rangle.
\end{split}
\end{align*}
Therefore,
\begin{align}
\label{eq:lem_contraction_pf4}
\begin{split}
	\frac{1}{n K} \sum^{K- 1}_{k=0} & \sum^n_{i=1} \alpha_i[k] \delta_i[k] \langle \nabla f_i(\xstar_K), \xbar[k] - \xstar[K] \rangle \\
	&\geq - \frac{\norm{v_K}}{K} \norm{ \frac{1}{n} \sum^n_{i=1} \nabla f_i(\xstar_K) \left( \sum^{K- 1}_{k=0} \alpha_i[k] \delta_i[k] \right)}.
\end{split}
\end{align}
Note that, from Theorem~\ref{th:bounded_iterates_gradients}, we have
\begin{align}
\label{eq:lem_contraction_pf5}
	\norm{v_K} \leq K L \frac{1}{n} \sum^n_{i=1} \alpha_i[0] .
\end{align}
Substituting~\eqref{eq:lem_contraction_pf5} into~\eqref{eq:lem_contraction_pf4}, gives
\begin{align}
\label{eq:lem_contraction_pf6}
\begin{split}
	\frac{1}{n K} &\sum^{K- 1}_{k=0} \sum^n_{i=1} \alpha_i[k] \delta_i[k] \langle \nabla f_i(\xstar_K), \xbar[k] - \xstar[K] \rangle \\
	&\geq - \norm{ \frac{1}{n} \sum^n_{i=1} \nabla f_i(\xstar_K) \left( \sum^{K- 1}_{k=0} \alpha_i[k] \delta_i[k] \right)} \frac{L}{n} \sum^n_{i=1} \alpha_i[0] .
\end{split}
\end{align}
Recalling that $p^{(K)}_i \defeq \sum^{K- 1}_{k=0} \alpha_i[k] \delta_i[k]$, and that $\xstar_K$ is the minimizer of the re-weighted objective $\sum^n_{i=1} f_i( \cdot) p^{(K)}_i$, it follows that the right-hand side of~\eqref{eq:lem_contraction_pf6} vanishes, and
\begin{align}
\label{eq:lem_contraction_pf7}
\begin{split}
	\frac{1}{n K} \sum^{K- 1}_{k=0} & \sum^n_{i=1} \alpha_i[k] \delta_i[k] \langle \nabla f_i(\xstar_K), \xbar[k] - \xbar[K] \rangle \geq 0.
\end{split}
\end{align}
Summing~\eqref{eq:lem_contraction_pf7} and~\eqref{eq:lem_contraction_pf3} together gives the desired result.
\end{proof}

\begin{lemma}
\label{lem:bounded_summations_const}
Suppose Assumptions~\ref{ass:base_obj},~\ref{ass:step_size_bound}, and~\ref{ass:step_size_const} are satisfied.
Define
\begin{align*}
	b_1[K] &\defeq L \sum^{K - 1}_{k=0} \left( \frac{1}{n} \sum^n_{i=1} \alpha_i[k] \delta_i[k] \right)^2 \\
	b_2[K] &\defeq 2L\sum^{K - 1}_{k=0} \left( \frac{1}{n} \sum^n_{i=1} \alpha_i[k] \delta_i[k] \gamma_i[k] \right) \\
	b_3[K] &\defeq 2L\sum^{K - 1}_{k=0} \left( \frac{1}{n} \sum^n_{i=1} \alpha_i[k] \delta_i[k] \chi_i[k] \right),
\end{align*}
where $\gamma_i[k]$ and $\chi_i[k]$ are given in Lemma~\ref{lem:inner_prod}.
There exist finite constants $A_1, A_2, A_3 > 0$, such that,
\[
	b_1[K] \leq \frac{A_1}{K^{2 \theta - 1}}, \quad b_2[K] \leq \frac{A_2}{K^\theta}, \quad b_3[K] \leq \frac{A_3}{K^{2\theta - 1}}.
\]
\end{lemma}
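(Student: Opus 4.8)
The plan is to exploit the fact that Assumption~\ref{ass:step_size_const} makes every step-size a constant, $\alpha_i[k] = \alpha_i = w_i B / K^\theta$, so that each step-size can be pulled outside the sums and the only $k$-dependence remaining inside a summand comes from the activation indicators $\delta_i[k] \in \{0,1\}$ and the geometric weights $q^k$, with $q \in (0,1)$ and $C > 0$ supplied by Theorem~\ref{th:avg_rate} and the gradient bound $L$ by Theorem~\ref{th:bounded_iterates_gradients}. Since $\kappa$, $L$, $C$, $q$ and the $w_i$ are all independent of $K$, the claimed rates follow by isolating the powers of $K$ carried by the $\alpha_i$'s and by the number of summands. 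Throughout I would use the crude estimate $\delta_i[k] \le 1$ together with the summable-geometric estimate $\sum_{k=0}^{\infty} q^k = 1/(1-q)$.

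For $b_1[K]$, I would bound $\frac{1}{n}\sum_i \alpha_i\delta_i[k] \le \frac{B}{n K^\theta}\sum_i w_i = B\overline{w}/K^\theta$, where $\overline{w} \defeq \frac{1}{n}\sum_i w_i$, so each of the $K$ summands is at most $(B\overline{w}/K^\theta)^2$. Summing over the $K$ values of $k$ gives $b_1[K] \le L B^2 \overline{w}^2\, K^{1-2\theta}$, which is exactly $A_1/K^{2\theta-1}$ with $A_1 \defeq L B^2 \overline{w}^2$. For $b_2[K]$, I would substitute $\gamma_i[k] = \kappa L C \norm{x_i[0]}_1 q^k$, pull the constant $\alpha_i = w_i B/K^\theta$ out of the sum over $k$, and use $\sum_{k=0}^{K-1}\delta_i[k] q^k \le 1/(1-q)$; a single factor of $\alpha_i \sim K^{-\theta}$ survives, yielding $b_2[K] \le \frac{2\kappa L^2 C B}{(1-q)n}\bigl(\sum_i w_i \norm{x_i[0]}_1\bigr) K^{-\theta}$, so $A_2$ is this $K$-independent prefactor.

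For $b_3[K]$ I would substitute $\chi_i[k] = \kappa L^2 C \sum_{s=0}^{k} q^{k-s}\alpha_i[s]\delta_i[s]$ and observe that, because the step-size is constant, both occurrences of the step-size equal $\alpha_i$, producing an overall factor $\alpha_i^2 \sim K^{-2\theta}$. The remaining combinatorial factor is the double sum $\sum_{k=0}^{K-1}\delta_i[k]\sum_{s=0}^{k} q^{k-s}\delta_i[s]$, which I would bound by dropping the indicators and using $\sum_{s=0}^{k} q^{k-s} \le 1/(1-q)$ uniformly in $k$, leaving at most $K/(1-q)$. The extra factor of $K$ combines with $K^{-2\theta}$ to give $K^{1-2\theta}$, so $b_3[K] \le \frac{2\kappa L^3 C B^2}{(1-q)n}\bigl(\sum_i w_i^2\bigr) K^{1-2\theta}$ and $A_3 \defeq \frac{2\kappa L^3 C B^2}{(1-q)n}\sum_i w_i^2$.

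The main obstacle is the double summation in $b_3[K]$: one must check that bounding the inner geometric sum uniformly in $k$ costs nothing beyond the single factor of $K$ coming from the number of outer terms, and in particular that the two powers $K^{-\theta}$ from the two constant step-size factors together with this single $K$ reproduce the rate $K^{1-2\theta}$ (rather than, say, $K^{-2\theta}$ or $K^{2-2\theta}$). Once this bookkeeping is done, finiteness of all three constants is immediate, since $L$, $C$, $q\in(0,1)$, $B$, $\overline{w}$, and the $w_i$ are all finite under Assumptions~\ref{ass:base_obj},~\ref{ass:step_size_bound}, and~\ref{ass:step_size_const} (with $q,C,L$ guaranteed finite by Theorems~\ref{th:avg_rate} and~\ref{th:bounded_iterates_gradients}).
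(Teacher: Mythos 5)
Your proposal is correct and follows essentially the same route as the paper's proof: pull the constant step-sizes $\alpha_i = w_i B/K^\theta$ out of the sums, bound $\delta_i[k]$ by $1$, and control the remaining geometric factors by $\sum_k q^k \le 1/(1-q)$ (including the double-sum bookkeeping for $b_3$, which you resolve exactly as the paper does, obtaining the factor $K/(1-q)$ and hence the rate $K^{1-2\theta}$). Your explicit constants differ from the paper's only cosmetically (e.g.\ $\frac{1}{n}\sum_i w_i^2$ versus $\bigl(\frac{1}{n}\sum_i w_i\bigr)^2$ in $A_3$), which is immaterial since the lemma only asserts existence of finite constants.
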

\begin{proof}
From Assumption~\ref{ass:step_size_const}, we have
\begin{align*}
	b_1[K] \leq L \left( \frac{B}{n} \sum^n_{i=1} w_i \right)^2 \frac{1}{K^{2 \theta -1 }}.
\end{align*}
Letting $A_1 \defeq \left( \frac{ \sqrt{L} B}{n} \sum^n_{i=1}w_i \right)^2$, we have $b_1[K] \leq \frac{A_1}{K^{2 \theta - 1}}$.
Now to bound $b_2[K]$, note that, given Assumption~\ref{ass:step_size_const}, we have
\[
	\sum^{K - 1}_{k=0} \left( \alpha_i[k] \delta_i[k] \right) q^k \leq \frac{\alpha_i}{1 - q}.
\]
Letting $A_2 \defeq \frac{2 \kappa L^2 C \norm{x_i[0]}  \left( \frac{B}{n} \sum^n_{i=1} w_i \right)}{(1-q)}$, we have $b_2[K] \leq \frac{A_2}{K^\theta}$.
Lastly, to bound $b_3[K]$,  it follows from Assumption~\ref{ass:step_size_const}, that
\begin{align*}
	\sum^{K-1}_{k=0} \chi_i[k] \left( \alpha_i[k] \delta_i[k] \right) \leq \alpha_i^2 \kappa L^2 C \sum^{K - 1}_{k=0} \sum^{k}_{s=0} q^{k - s} \leq \frac{\alpha_i^2 \kappa L^2 C K}{1- q}.
\end{align*}
Letting $A_3 \defeq \frac{2 \kappa L^3 C \left( \frac{B}{n} \sum^n_{i=1} w_i \right)^2 }{(1-q)}$, we have $b_3[K] \leq \frac{A_3}{K^{2 \theta - 1}}$.
\end{proof}

\begin{lemma}
\label{lem:bounded_summations}
Suppose Assumptions~\ref{ass:base_obj},~\ref{ass:step_size_bound}, and~\ref{ass:step_size_decay} are satisfied.
Define
\begin{align*}
	b_1[K] &\defeq L \sum^{K - 1}_{k=0} \left( \frac{1}{n} \sum^n_{i=1} \alpha_i[k] \delta_i[k] \right)^2 \\
	b_2[K] &\defeq 2L\sum^{K - 1}_{k=0} \left( \frac{1}{n} \sum^n_{i=1} \alpha_i[k] \delta_i[k] \gamma_i[k] \right) \\
	b_3[K] &\defeq 2L\sum^{K - 1}_{k=0} \left( \frac{1}{n} \sum^n_{i=1} \alpha_i[k] \delta_i[k] \chi_i[k] \right),
\end{align*}
where $\gamma_i[k]$ and $\chi_i[k]$ are given in Lemma~\ref{lem:inner_prod}.
There exists a finite constant $A > 0$, such that for all $K \geq 0$,
\[
	b_1[K] + b_2[K] + b_3[K] \leq A.
\]
\end{lemma}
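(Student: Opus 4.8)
The plan is to show that each of $b_1[K]$, $b_2[K]$, $b_3[K]$ is a nondecreasing sequence in $K$ (every summand is nonnegative, since $\gamma_i[k], \chi_i[k] \geq 0$ and the squares are nonnegative), so it suffices to prove that each converges to a finite limit as $K \to \infty$; the constant $A$ is then the sum of the three limits. The workhorse for all three bounds is the summability afforded by $\theta > 1/2$ in Assumption~\ref{ass:step_size_decay}. First I would establish a per-agent step-size decay bound: because each agent is guaranteed to update at least once every $\tauratemax$ time-indices, the local iteration counter satisfies $c_i[k] \geq (k+1)/\tauratemax$, so by Assumption~\ref{ass:step_size_decay},
\[
	\alpha_i[k]\delta_i[k] \leq \frac{w_i B (\tauratemax)^\theta}{(k+1)^\theta},
\]
which is essentially the upper bound recorded in Remark~\ref{rem:step_size_decay}. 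Since $2\theta > 1$, both $\sum_{k=0}^\infty \left(\frac{1}{n}\sum_{i=1}^n \alpha_i[k]\delta_i[k]\right)^2$ and $\sum_{k=0}^\infty (\alpha_i[k]\delta_i[k])^2$ converge by comparison with a convergent $p$-series. The bound on $b_1[K]$ is then immediate from the first of these.

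For $b_2[K]$, I would substitute $\gamma_i[k] = \kappa L C \norm{x_i[0]}_1 q^k$, bound $\alpha_i[k]\delta_i[k] \leq w_i B$ by its initial value, and use the geometric series $\sum_{k=0}^\infty q^k = 1/(1-q)$ (valid since $q \in (0,1)$), which bounds $b_2[K]$ by a finite constant uniformly in $K$.

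The main obstacle is $b_3[K]$, because of the nested convolution inside $\chi_i[k] = \kappa L^2 C \sum_{s=0}^k q^{k-s}\alpha_i[s]\delta_i[s]$. The crude move of bounding one factor by its initial value fails here: it would leave $\sum_k \alpha_i[k]\delta_i[k] \sim \sum_k (k+1)^{-\theta}$, which diverges for $\theta < 1$, so both factors must be allowed to decay. Writing $a_i[k] \defeq \alpha_i[k]\delta_i[k]$ and exchanging the order of summation, $b_3[K]$ reduces (up to the constant $2\kappa L^3 C$) to $\frac{1}{n}\sum_{i=1}^n \sum_{0 \leq s \leq k \leq K-1} q^{k-s} a_i[k] a_i[s]$. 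I would then decouple the product with the elementary inequality $a_i[k]a_i[s] \leq \tfrac12\left(a_i[k]^2 + a_i[s]^2\right)$, so that each resulting term factors into a square times a one-sided geometric sum, $\sum_{s \leq k} q^{k-s} \leq 1/(1-q)$ and $\sum_{k \geq s} q^{k-s} \leq 1/(1-q)$. This collapses the double sum to $\frac{1}{1-q}\,\frac{1}{n}\sum_{i=1}^n \sum_{k} a_i[k]^2$, which is finite by the per-agent summability established above.

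Finally, collecting the three finite limits and taking $A$ to be their sum yields the claimed uniform bound $b_1[K] + b_2[K] + b_3[K] \leq A$ for all $K \geq 0$.
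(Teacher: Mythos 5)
Your proof is correct and follows essentially the same route as the paper's: establish that each $b_j[K]$ is monotonically nondecreasing and then bound it uniformly using the summability that $\theta > 1/2$ provides, absorbing the geometric factors via $\sum_k q^k \le 1/(1-q)$. The only cosmetic difference is in $b_3$: you decouple the cross term with $a_i[k]a_i[s] \le \tfrac12\left(a_i[k]^2 + a_i[s]^2\right)$, whereas the paper uses the monotone decay of the step-size to write $a_i[k]a_i[s] \le (a_i[s]\delta_i[s])^2$ for $s \le k$; both collapse the double sum to a constant multiple of $\frac{1}{1-q}\sum_{k} a_i[k]^2$, which is finite by the same $p$-series comparison you invoke.
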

\begin{proof}
First note that the sequences $b_1[K]$, $b_2[K]$, and $b_3[K]$ are all monotonically increasing with $K$.
Therefore, if we can show that the sequences are bounded, then it follows that they are also convergent, and their respective limits serve as upper bounds.
From Assumption~\ref{ass:step_size_decay} and Remark~\ref{rem:step_size_decay}, it immediately follows that the sequence $b_1[K]$ is bounded, and therefore convergent.
Let $A'_1 \defeq \lim_{K \to \infty} b_1[K]$. Consequently, $b_1[K] \leq A'_1$ for all $K \geq 0$.
Now to bound $b_2[K]$, note that, given Assumption~\ref{ass:step_size_decay}, it holds that
\[
	\sum^{\infty}_{k=0} \left( \alpha_i[k] \delta_i[k] \right) q^k \leq \frac{\alpha_i[0]}{1 - q} < \infty.
\]
Let $A'_2 \defeq \frac{2 \kappa L^2 C \norm{ x_i[0]} }{1-q} \frac{1}{n} \sum^n_{i=1} \alpha_i[0]$. It follows that $b_2[K] \leq A'_2$ for all $K \geq 0$.
Lastly,  to bound $b_3[K]$, it follows from~\cite[Lemma 3.1]{ram2010distributed} and Assumption~\ref{ass:step_size_decay}, that
\begin{align*}
	\sum^\infty_{k=0} \chi_i[k] \left( \alpha_i[k] \delta_i[k] \right) \leq \kappa L^2 C \sum^\infty_{k=0} \sum^{k}_{s=0} q^{k - s} \left( \alpha_i[s] \delta_i[s] \right)^2 < \infty.
\end{align*}
Therefore, $b_3[K]$ is bounded and convergent. Let $A'_3 \defeq \lim_{K \to \infty} b_3[K]$. Then $b_3[K] \leq A'_3 < \infty$ for all $K \geq 0$.
Defining $A \defeq A'_1 + A'_2 + A'_3$ gives the desired result.
\end{proof}

\subsection{Proof of Theorem~\ref{th:const_convergence_opt}}
\begin{proof}[\unskip\nopunct]
Recall the update equation~\eqref{eq:asynch_itr_1} given by
\begin{align*}
    \bm{x}[k + 1] &= \Pbar[k] \left( \bm{x}[k] - \Grad{k} \right).
\end{align*}
Since the matrices $\Pbar[k]$ are column stochastic, we can multiply each side of~\eqref{eq:asynch_itr_1} by $\ones^T / n$ to get
\begin{align}
\label{eq:avg_asynch_itr_1}
\begin{split}
    \xbar[k + 1] =&\ \xbar[k] - \sum^{n}_{i = 1} \frac{ \alpha_i[k] \delta_i[k]}{n} \nabla f_i(z_i[k]).
\end{split}
\end{align}
Subtracting $\xstar_K$ from each side of~\eqref{eq:avg_asynch_itr_1} and taking the squared norm
\begin{align}
\label{eq:avg_asynch_itr_1}
\begin{split}
	\norm{ \xbar[k + 1] - \xstar_K }^2& \leq \norm{ \xbar[k] - \xstar_K }^2 \\
	& - \frac{2}{n} \sum^n_{i=1} \alpha_i[k] \delta_i[k] \langle \nabla f_i(z_i[k]), \xbar[k] - \xstar_K \rangle \\
	& + \norm{ \frac{1}{n} \sum^n_{i=1} \alpha_i[k] \delta_i[k] \nabla f_i(z_i[k]) }^2.
\end{split}
\end{align}
Note that, from Theorem~\ref{th:bounded_iterates_gradients}, we have
\[
	\norm{ \frac{1}{n} \sum^n_{i=1} \alpha_i[k] \delta_i[k] \nabla f_i(z_i[k]) }^2 \leq \left( \frac{L}{n} \sum^n_{i=1} \alpha_i[k] \delta_i[k] \right)^2,
\]
thereby bounding the last term in~\eqref{eq:avg_asynch_itr_1}.
Additionally, making use of Lemma~\ref{lem:inner_prod}, it follows that
\begin{align}
\label{eq:avg_asynch_itr_2}
\begin{split}
	\norm{ \xbar[k + 1] - \xstar_K }^2 \leq& \norm{ \xbar[k] - \xstar_K }^2 + \left( \frac{L}{n} \sum^n_{i=1} \alpha_i[k] \delta_i[k] \right)^2 \\
	& - 2 \mu \norm{\xbar[k] - \xstar_K}^2 \left( \frac{1}{n} \sum^n_{i=1} \alpha_i[k] \delta_i[k] \right) \\
	& - \frac{2}{n} \sum^n_{i=1} \alpha_i[k] \delta_i[k] \langle \nabla f_i(\xstar_K), \xbar[k] - \xstar_K \rangle \\
	& + \frac{2}{n} \sum^n_{i=1} \alpha_i[k] \delta_i[k] (\gamma_i[k] + \chi_i[k]).
\end{split}
\end{align}
Rearranging terms, averaging each side of~\eqref{eq:avg_asynch_itr_2} across time indices, and making use of Lemma~\ref{lem:contraction} gives
\begin{align}
\label{eq:avg_asynch_itr_3}
\begin{split}
	\frac{2 \mu}{K} &\sum^{K -1}_{k=0} \norm{\xbar[k] - \xstar_K}^2 \left( \frac{1}{n} \sum^n_{i=1} \alpha_i[k] \delta_i[k] \right) \\
	\leq& \frac{1}{K} \sum^{K -1}_{k=0} \left( \norm{ \xbar[k] - \xstar_K }^2 - \norm{\xbar[k + 1] - \xstar_K}^2 \right) \\
	& + \frac{1}{K} \sum^{K - 1}_{k=0} \left( \frac{2}{n} \sum^n_{i=1} \alpha_i[k] \delta_i[k] \left( \gamma_i[k] + \chi_i[k] \right) \right) \\
	& + \frac{1}{K} \sum^{K - 1}_{k=0} \left( \frac{L}{n} \sum^n_{i=1} \alpha_i[k] \delta_i[k] \right)^2.
\end{split}
\end{align}
Noticing that we have a telescoping sum on the right hand side of~\eqref{eq:avg_asynch_itr_3}, and making use of Lemma~\ref{lem:bounded_summations_const} and Assumption~\ref{ass:step_size_const}, it follows that
\begin{align*}
	\frac{1}{K} \sum^{K -1}_{k=0} \norm{\xbar[k] - \xstar_K}^2 \leq& \frac{1}{K^{1 - \theta}} \left(\frac{n \left( \norm{ \xbar[0] - \xstar_K }^2 \right)}{2 \mu B } \right) \\
	& + \frac{1}{ K^\theta} \left( \frac{n (A_1 + A_3)}{2 \mu B} \right) +  \frac{1}{ K} \left( \frac{n A_2}{2 \mu B} \right)
\end{align*}
where $\theta \in (0, 1)$ is defined in Assumption~\ref{ass:step_size_const}.
\end{proof}

\subsection{Proof of Corollary~\ref{cor:const_semi_synch_convergence_opt}}
\begin{proof}[\unskip\nopunct]
If $\tauratemax = 1$, then each agent performs a gradient update in each iteration. In particular, $\delta_i[k] = 1$ for all $k \geq 0$ and $i=1,\ldots,n$.
Using the fact that $w_i = 1$ for all $i=1,\ldots,n$ (agents use the same factor in their local step-sizes), it follows that $p^{(K)}_i = p^{(K)}_j$ for all $i,j=1,\ldots,n$.
Hence, the minimizer of the re-weighted objective reduces to that of the original (unbiased) objective, \ie, $\xstar_K = \xstar$.
Substituting into the result of Theorem~\ref{th:const_convergence_opt} gives the desired result.
\end{proof}

\subsection{Proof of Corollary~\ref{cor:const_congergence_opt_known_udpate_rates}}
\begin{proof}[\unskip\nopunct]
Note that
\[
	p^{(K)}_i \defeq \sum^{K - 1}_{k=0} \alpha_i[k] \delta_i[k] = \frac{w_i B}{K^\theta} c_i[K - 1].
\]
Given the choice of $w_i$, it follows that
\[
	p^{(K)}_i = \frac{B}{K^{\theta - 1}},
\]
and is agnostic of the index $i$.
Therefore, $p^{(K)}_i = p^{(K)}_j$ for all $i,j=1,\ldots,n$.
Hence, the minimizer of the re-weighted objective reduces to that of the original (unbiased) objective, \ie, $\xstar_K = \xstar$.
Substituting into the result of Theorem~\ref{th:const_convergence_opt} gives the desired result.
\end{proof}

\subsection{Proof of Theorem~\ref{th:convergence_opt}}

\begin{proof}[\unskip\nopunct]
The proof of Theorem~\ref{th:convergence_opt} is identical to that of Theorem~\ref{th:const_convergence_opt} up to~\eqref{eq:avg_asynch_itr_3}.
Noticing that we have a telescoping sum on the right hand side of~\eqref{eq:avg_asynch_itr_3}, and making use of Lemma~\ref{lem:bounded_summations} and Remark~\ref{rem:step_size_decay}, it follows that
\begin{align*}
	\frac{1}{K} & \sum^{K -1}_{k=0} \norm{\xbar[k] - \xstar_K}^2 \leq& \frac{1}{K^{1 - \theta}} \left(\frac{n \left( \norm{ \xbar[0] - \xstar_K }^2 + A \right)}{2 \mu B } \right),
\end{align*}
where $\theta \in (0.5, 1)$ is defined in Assumption~\ref{ass:step_size_decay}.
\end{proof}

\subsection{Proof of Corollary~\ref{cor:semi_synch_convergence_opt}}
\begin{proof}[\unskip\nopunct]
If $\tauratemax = 1$, then each agent performs a gradient update in each iteration. In particular, $\delta_i[k] = 1$ for all $k \geq 0$ and $i=1,\ldots,n$.
Using the fact that $w_i = 1$ for all $i=1,\ldots,n$ (agents use the same factor in their local step-sizes), it follows that $p^{(K)}_i = p^{(K)}_j$ for all $i,j=1,\ldots,n$.
Hence, the minimizer of the re-weighted objective reduces to that of the original (unbiased) objective, \ie, $\xstar_K = \xstar$.
Substituting into the result of Theorem~\ref{th:convergence_opt} gives the desired result.
\end{proof}

\subsection{Proof of Corollary~\ref{cor:congergence_opt_known_udpate_rates}}
\begin{proof}[\unskip\nopunct]
Note that
\[
	p^{(K)}_i \defeq \sum^{K - 1}_{k=0} \alpha_i[k] \delta_i[k] = \sum^{c_i[K - 1] - 1}_{k=0} \frac{w_i B}{(k + 1)^\theta}.
\]
Given the choice of $w_i$, it follows that
\[
	p^{(K)}_i = \sum^{K - 1}_{k=0} \frac{B}{(k + 1)^\theta},
\]
and is agnostic of the index $i$.
Therefore, $p^{(K)}_i = p^{(K)}_j$ for all $i,j=1,\ldots,n$.
Hence, the minimizer of the re-weighted objective reduces to that of the original (unbiased) objective, \ie, $\xstar_K = \xstar$.
Substituting into the result of Theorem~\ref{th:convergence_opt} gives the desired result.
\end{proof}

\section{Experiments}
\label{sec:experiments}

Next, we report experiments on a high performance computing cluster. In these experiments, each agent is implemented as a process running on a dedicated CPU core, and each agent runs on a different server. Communication between servers happens over an InfiniBand network.
The code to reproduce these experiments is available online;\footnote{https://github.com/MidoAssran/maopy} all code is written in Python, and the Open-MPI distribution is used with Python bindings (mpi4py) for message passing.

We report two sets of experiments. The first set involves solving a least-squares regression problem using synthetic data. The aim of these experiments is to validate the theory developed in the sections above for AGP. The second set of experiments involves solving a regularized multinomial logistic regression problem on a real dataset. In these experiments we compare AGP with three synchronous methods: Push DIGing (PD)~\cite{nedic2017achieving}, Extra Push (EP)~\cite{zeng2015extrapush}, and Synchronous (Sub)Gradient-Push (SGP)~\cite{nedich2015distributed}. Both PD and EP use gradient tracking to achieve stronger theoretical convergence guarantees at the cost of additional communication overhead. We also compare with Asy-SONATA~\cite{tian2018achieving}, an asynchronous method that incorporates gradient tracking and which appeared online during the review process of this paper. Note that all methods that use gradient tracking (PD, EP, and Asy-SONATA) require additional memory at each agent and also have a communication overhead per-iteration which is twice that of SGP and AGP.

\subsection{Synthetic Dataset}
To validate some of the theory developed in previous sections, we first report experiments on a linear least-squares regression problem using synthetic data.
The objective is to minimize, over parameters $\bm{w}$, the function:
\begin{equation}
	F(\bm{w}) \defeq \frac{1}{D} \sum^D_{\ell=1}(w^T_j x^\ell - y^\ell)^2, \label{eqn:linear-regression}
\end{equation}
where $D = \num[group-separator={,}]{2560000}$ is the number of training instances in the dataset,  $x^l \in \R^{50}$ and $y^l \in \R^{1}$ correspond to the $l^{th}$ training instance feature and label vectors respectively, and $\bm{w} \in \R^{50}$ are the model parameters.
We generate the data $\{(x^\ell, y^\ell)\}_{\ell=1}^D$ using the technique suggested in~\cite{lenard1984randomly}.

\begin{figure}[!t]
\centering
\includegraphics[width=.6\columnwidth]{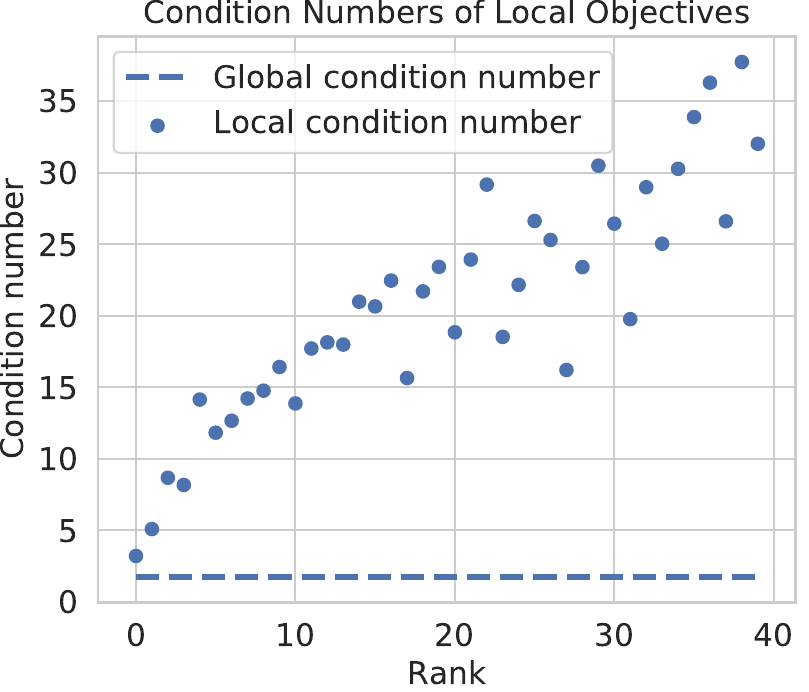}
\caption{Condition numbers of local objective functions for a 40-agent partition of the synthetic dataset. The dashed line shows the condition number of the global objective.}
\label{fig:condition_numbers}
\end{figure}

The $D$ data samples are partitioned among the $n$ agents. The local objective function $f_i$ at agent $v_i$ is similar to that in \eqref{eqn:linear-regression} but the sum over $l$ only involves those training instances assigned to agent $v_i$.
The condition number of the global objective is approximately $2$.
The condition number of individual agents' local objectives is diverse and depends on the data-partition.
Figure~\ref{fig:condition_numbers} shows the local objective conditioning for a $40$-agent partition of the dataset.
The condition numbers of the local objectives are approximately uniformly spaced in the interval $(3, 37)$.

\begin{figure}[!t]
\centering
\includegraphics[width=\columnwidth]{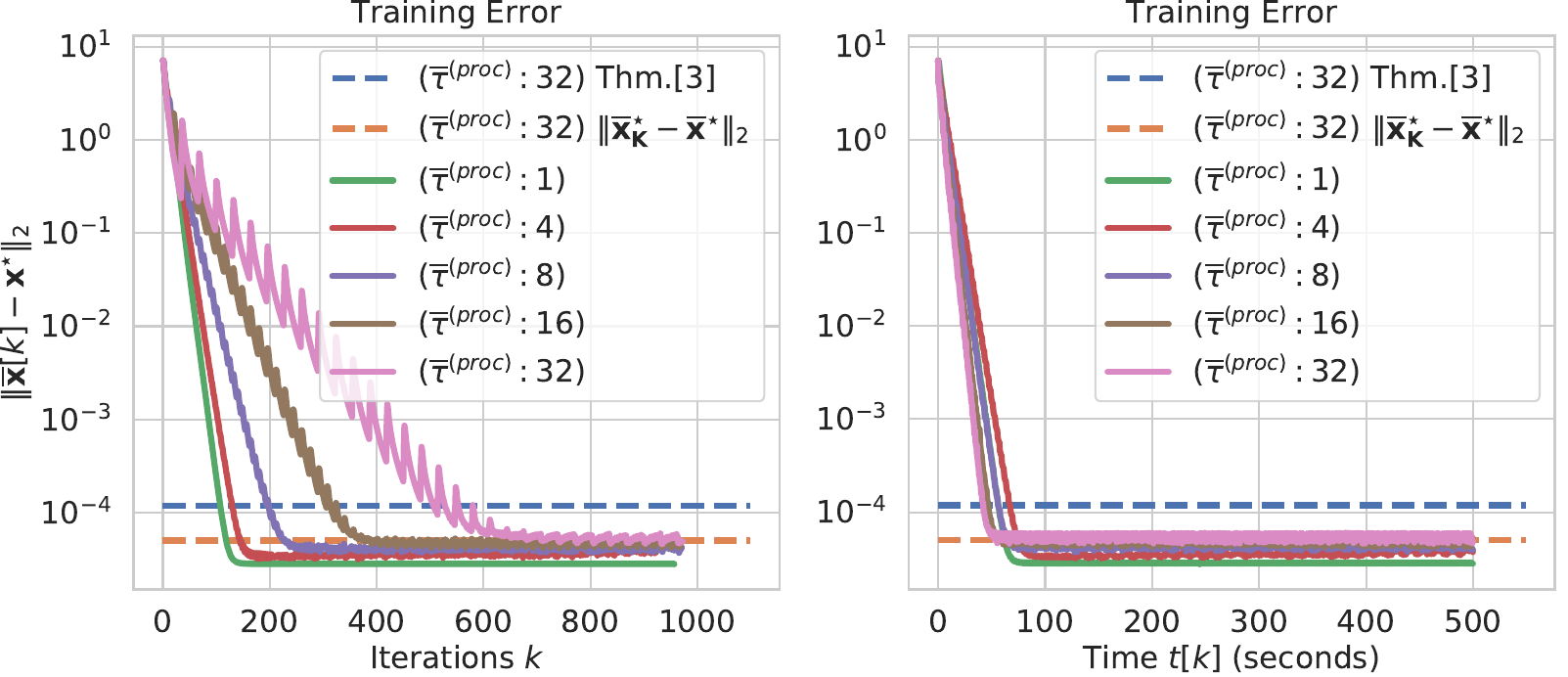}
\caption{Convergence of Asynchronous Gradient Push for a 40-agent ring-network with various degrees of asynchrony (quantified by $\tauratemax$).
The dashed blue bar corresponds to the $\norm{ \xstar_K - \xstar}$ bound from Theorem~\ref{th:bound_dist_minimziers}, where the reweighing values $\{\overline{p}^{(K)}_i\}$ are computed from the experiment corresponding to $\tauratemax=32$.
The dashed orange bar corresponds to the true value of $\norm{ \xstar_K - \xstar}$ for the experiments corresponding to $\tauratemax=32$.}
\label{fig:agp_train_tau}
\end{figure}
\begin{figure}[!t]
\centering
\includegraphics[width=.6\columnwidth]{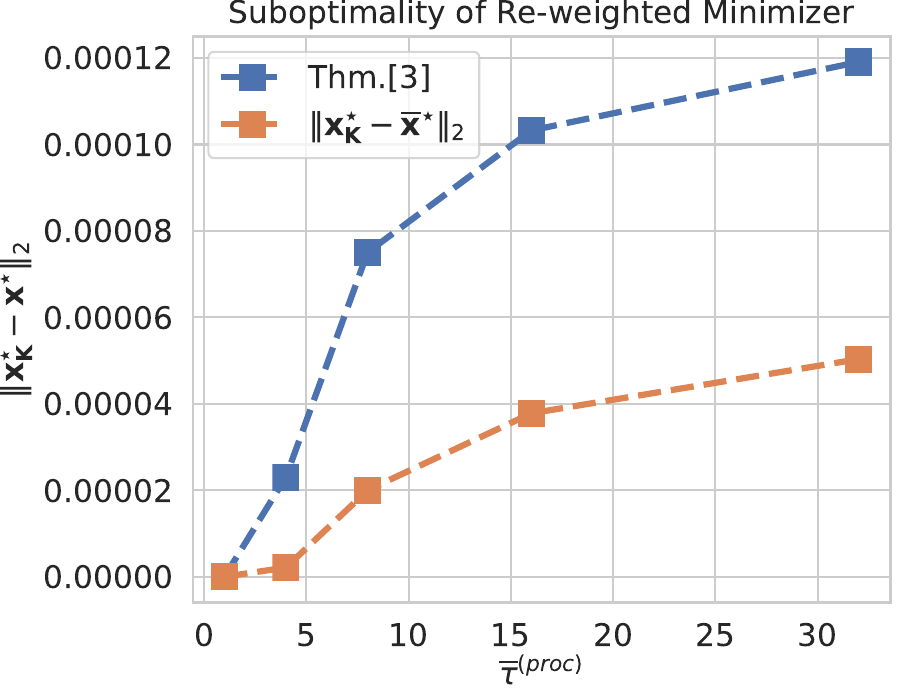}
\caption{Distance between the minimizer of the re-weighted objective $\xstar_K$ and the original (unbiased) objective for different choices of $\tauratemax$.
The blue points depict the bound in Theorem~\ref{th:bound_dist_minimziers}, and the red points depict the true quantity.}
\label{fig:x_K_sweep_tau}
\end{figure}

During training, agent $v_i$ logs the values of $z_i$ and the time after every update. Post training, we analytically compute the minimizer of the re-weighted objective defined in Definition~\ref{def:reweighted_obj}. 
To validate the bound on the distance between the minimizer of the re-weighted objective and the original unbiased objective (cf.~Theorem~\ref{th:bound_dist_minimziers}), we run AGP for different choices of $\tauratemax$.
We control $\tauratemax$ by forcing an agent to block if it completes $\tauratemax$ iterations while another agent still hasn't completed a single iteration in the same time interval; thus, in the worst case scenario, a fast agent can complete $\tauratemax$ iterations for every iteration completed by a slow agent.\footnote{For the purpose of this experiment, we artificially delay half of the agents in the network by $500$ ms each iteration, and implement $\tauratemax$ programmatically using non-blocking barrier operations (which are a part of the MPI-3 standard).
In particular, each agent tests a non-blocking barrier request at each local iteration.
If the test is passed, then a new non-blocking barrier request object is created.
If the test is not passed and more than $\tauratemax$ local iterations have gone by since the last test was passed, then the agent blocks and waits for the barrier-test to pass.
In this way, no more than $\tauratemax$ iterations can be performed by the network in the time it takes any single agent to complete one local iteration.}
In Fig.~\ref{fig:agp_train_tau} we show the convergence of AGP for different values of $\tauratemax$.
We use a directed ring network in this example to examine the worst-case scenario.

\begin{figure*}[!t]
\centering
\subfloat[\label{fig:0ms-world-sweep} No artificial delay]{\includegraphics[width=0.24\textwidth]{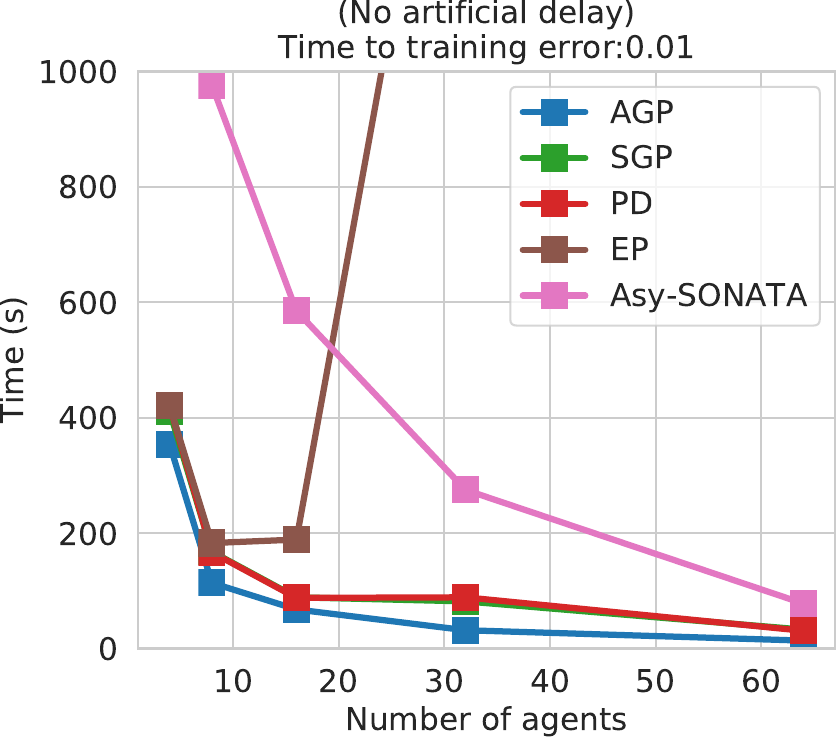}} \hspace*{0.005\textwidth}
\subfloat[\label{fig:125ms-world-sweep} $125$ms delay injected at agent $v_{2}$]{\includegraphics[width=0.24\textwidth]{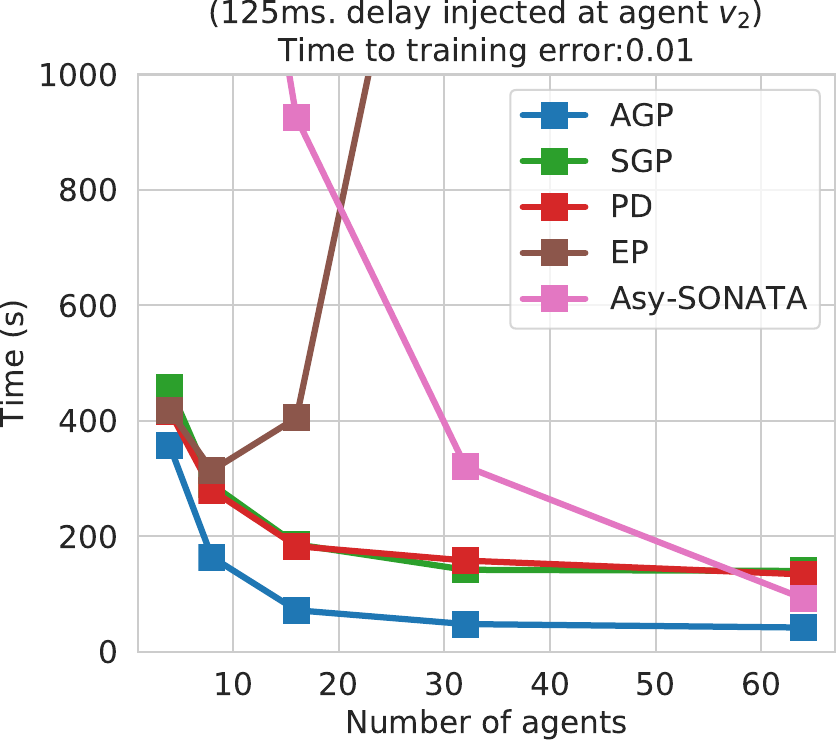}} \hspace*{0.005\textwidth}
\subfloat[\label{fig:250ms-world-sweep} $250$ms delay injected at agent $v_{2}$]{\includegraphics[width=0.24\textwidth]{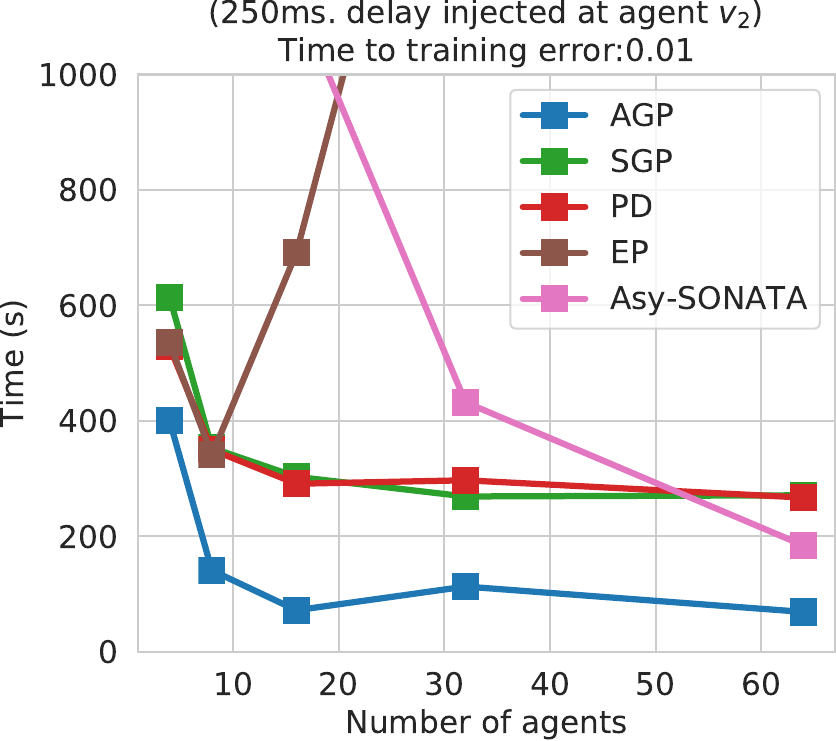}} \hspace*{0.005\textwidth}
\subfloat[\label{fig:500ms-world-sweep} $500$ms delay injected at agent $v_{2}$]{\includegraphics[width=0.24\textwidth]{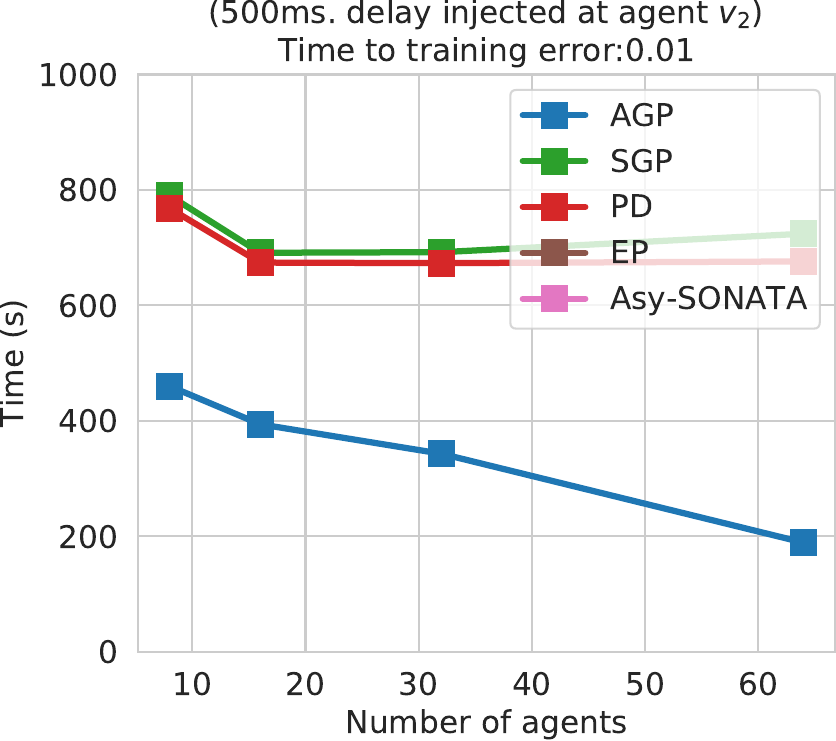}} 
\caption{Time $t[k]$ (seconds) at which $F( \xbar[k] ) - F(\xstar) < 0.01$ is satisfied for the first time in the Covertype experiments.
(a) Experiment run under normal operating conditions.
(b) An artificial $125$ms delay is injected at agent $v_2$ after every local iteration.
(b) An artificial $250$ms delay is injected at agent $v_2$ after every local iteration.
(c) An artificial $500$ms delay is injected at agent $v_2$ after every local iteration; neither EP nor Asy-SONATA obtained a residual error of $10^{-2}$ or below after $1000$s for this delay with any network size.
AGP reaches the threshold residual error $10^{-2}$ faster than all other methods.}
\label{fig:time_to_error}
\end{figure*}

Increasing $\tauratemax$ leads to a reduction in the iteration-wise convergence rate, as expected. However, increasing $\tauratemax$ also reduces the idling time, and thereby leads to an improvement in the time-wise convergence rate.
The dashed blue line in Fig.~\ref{fig:agp_train_tau} corresponds to the upper bound on $\norm{ \xstar_K - \xstar}$ from Theorem~\ref{th:bound_dist_minimziers}, where the values $\overline{p}^{(K)}_i$ are computed from the experiment corresponding to $\tauratemax=32$.
The dashed orange line corresponds to the true value of $\norm{ \xstar_K - \xstar}$, where the values $\overline{p}^{(K)}_i$ are also computed from the experiment corresponding to $\tauratemax=32$.

In Fig.~\ref{fig:x_K_sweep_tau} we plot the distance between the minimizer of the re-weighted objective and the original (unbiased) objective for each of the different choices of $\tauratemax$ used in this experiment.
As predicted from Theorem~\ref{th:bound_dist_minimziers}, the distance between minimizers decreases as the disparity in agent update rates decreases.

\subsection{Non-Synthetic Dataset}

To facilitate comparisons with existing methods in the literature, a regularized multinomial logistic regression classifier is trained on the \textit{Covertype} dataset~\cite{Lichman:2013} from the UCI~repository~\cite{uci}. Here the objective is to minimize, over model parameters $\bm{w}$, the negative log-likelihood loss function:
\begin{equation}
	F(\bm{w}) \defeq - \sum_{l=1}^D \sum^{K}_{j=1} \log \left( \frac{\exp(w_j^T x^l)}{\sum^K_{j^\prime=1} \exp(w_{j^\prime}^T  x^l) } \right)^{y^l_j} + \frac{\lambda}{2} \norm{\bm{w}}_{\bm{F}}^2, \label{eqn:multinomial-logistic}
\end{equation}
where $D = \num[group-separator={,}]{581012}$ is the number of training instances in the dataset, $K = 7$ is the number of classes, $x^l \in \R^{54}$ and $y^l \in \R^{7}$ correspond to the $l^{th}$ training instance feature and label vectors respectively (the label vectors are represented using a $1$-hot encoding), $\bm{w} \in \R^{7 \times 54}$ are the model parameters, and $\lambda > 0$ is a regularization parameter.
We take $\lambda = 10^{-4}$ in the experiments.
The $54$ features consist of a mix of categorical (binary $1$ or $0$) features and real numbers. We whiten the non-categorical features by subtracting the mean and dividing by the standard deviation. 

\begin{figure}[!t]
\centering
\subfloat[\label{fig:n16_delay} 16-agent Erd\H{o}s-R\'{e}nyi graph]{\includegraphics[width=0.48\textwidth]{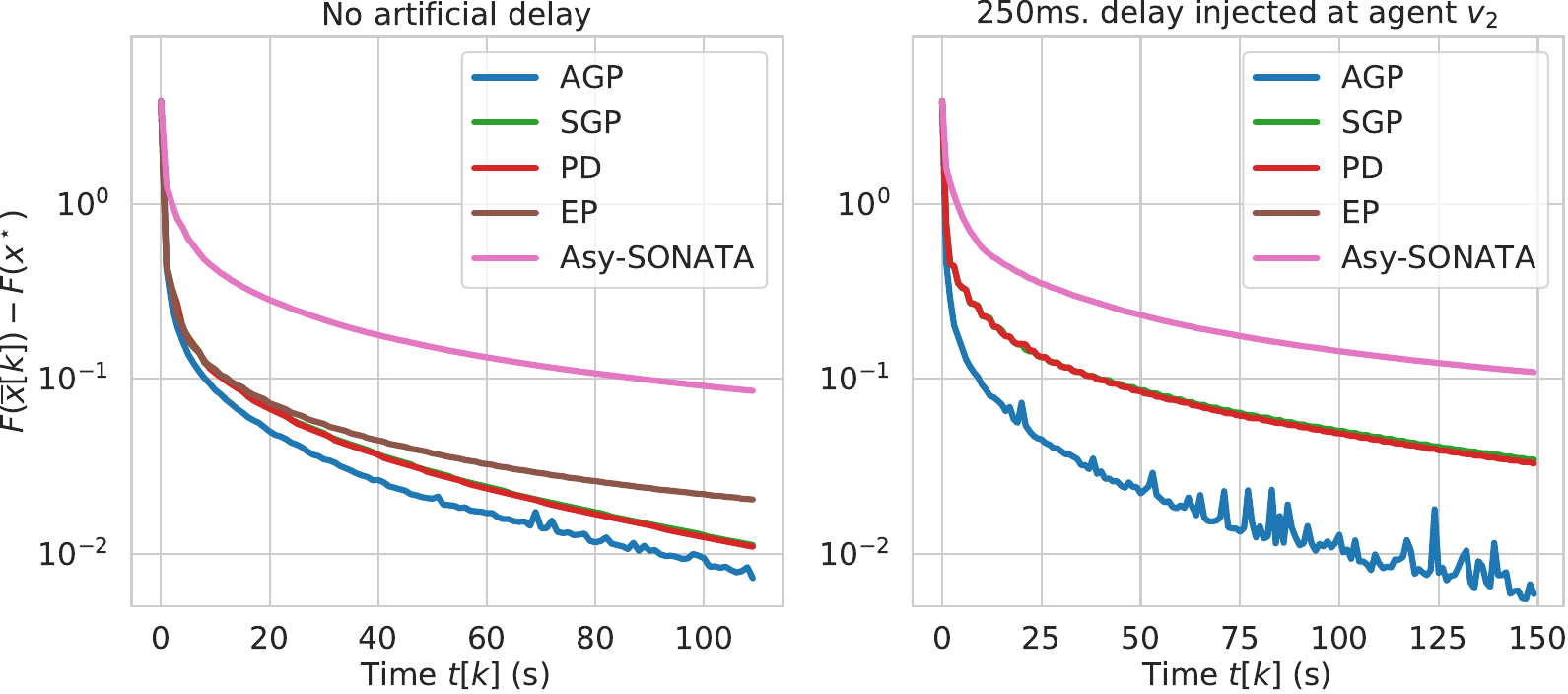}} \\
\subfloat[\label{fig:n64_delay} 64-agent Erd\H{o}s-R\'{e}nyi graph]{\includegraphics[width=0.48\textwidth]{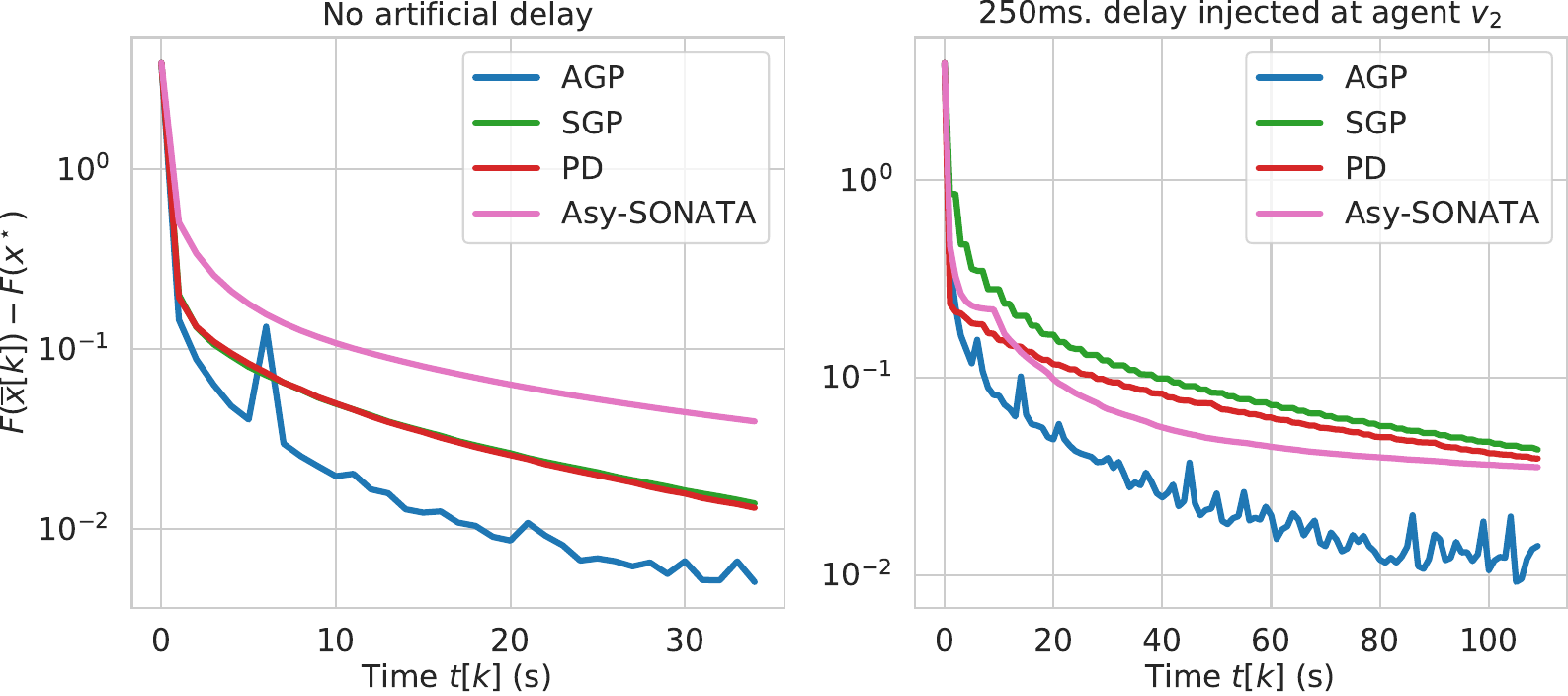}}
\caption{Multinomial logistic regression training error on the Covertype dataset using large multi-agent networks.
Left subplots in each figure correspond to normal operating conditions.
Right subplots correspond to experiments with an artificial $250$ms delay induced at agent $v_2$ at each local iteration.
(EP did not converge over the 64-agent network topology).
AGP is more robust than the synchronous algorithms to failing or stalling nodes.}
\label{fig:synch_asynch_delay_large}
\end{figure}

All network topologies are randomly generated using the Erd\H{o}s-R\'{e}nyi model where the expected out-degree of each agent is $4$, independent of $n$; \ie, with an edge probability of $\min\{4/(n - 1), 1\}$.
To investigate how the algorithms scale with the number of nodes, we consider different values of $n \in \{4, 8, 16, 32, 64\}$. In each case, we randomly partition the $D$ training instances evenly across the $n$ agents. 
All algorithms use a constant step-size, and we tuned the step-sizes separately for each algorithm using a simple grid-search over the range $\alpha \in [10^{-3}, 10^1]$. For all algorithms, the (constant) step-size $\alpha = 1.0$ gave the best performance.
Since the total number of samples $D$ is fixed, this problem has a fixed computational workload; as we increase the size of the network, the number of samples (and hence, the computational load) per agent decreases. 
The local objective function $f_i$ at agent $v_i$ is similar to that in \eqref{eqn:multinomial-logistic} but the sum over $l$ only involves those training instances assigned to agent $v_i$.

Fig.~\ref{fig:time_to_error} shows the first time $t[k]$ when the residual error satisfies $F(\xbar[k] ) - F(\xstar) < 0.01$, as a function of network size. 
Fig.~\ref{fig:0ms-world-sweep} shows that, under normal operating conditions, AGP decreases the residual error for both small and large network sizes faster than the state-of-the art methods and its synchronous counterpart.
To study robustness of the methods to delays, we run experiments where we inject an artificial delay at agent $v_2$ after every local iteration; the results are shown in Fig.~\ref{fig:125ms-world-sweep}, Fig.~\ref{fig:250ms-world-sweep}, and Fig.~\ref{fig:500ms-world-sweep} for $125$~ms, $250$~ms, and $500$ms delays, respectively.
To put the magnitude of these delays in context, Table~\ref{tb:update_times} reports the average agent update time for various network sizes.
As expected, we observe that asynchronous algorithms (AGP and Asy-SONATA) are more robust than the synchronous algorithms to slow nodes. However, for the $500$~ms delay case, Asy-SONATA did not achieve a residual error below $0.01$ after $1000$ seconds.
Fig.~\ref{fig:500ms-world-sweep} demonstrates that AGP is robust to such a large delay.

Fig.~\ref{fig:synch_asynch_delay_large} shows the residual error curves with respect to wall clock time for different network sizes, with and without an artificial $250$ms delay induced at agent $v_2$ at each iteration.
AGP is faster than the other methods under normal operating conditions (left subplots Fig.~\ref{fig:synch_asynch_delay_large}), and this performance improvement is especially pronounced when an artificial $250$ms delay is injected in the network (right subplots Fig.~\ref{fig:synch_asynch_delay_large}).
In the smaller multi-agent networks, a $250$ms delay is a relatively plausible occurrence.
In larger multi-agent networks a $250$ms delay is quite extreme since there could be over $\num[group-separator={,}]{2000}$ updates performed by the network in the time it takes the artificially delayed agent to compute a single update.
The fact that AGP is still able to converge in this scenario is a testament to its robustness.

\begin{table}[!t]
\caption{Average time taken by an agent to perform a gradient-based update for the Covertype experiments.}
\label{tb:update_times}
\vspace*{1em}
\centering
\begin{tabular}{ c l l l }
\hline
\textbf{\# agents} & Mean time (s) & Max.~time (s) & Min.~time (s) \\
\hline
4 & ${0.362}$ {\scriptsize $\pm 0.00649$} & 0.507 & 0.348 \\
8 & ${0.0993}$  {\scriptsize $\pm 0.0107$} & 0.139 & 0.0859 \\
16 & ${0.0488}$ {\scriptsize $\pm 0.00339$} & 0.0598 & 0.0430 \\
32 & ${0.0207}$ {\scriptsize $\pm 0.00166$} & 0.0284 & 0.0175 \\
64 & ${0.00849}$ {\scriptsize $\pm 0.000246$} & 0.0123 & 0.00797 \\ \hline
\end{tabular}
\end{table}

\section{Conclusion}
\label{sec:conclusion}

Our analysis of asynchronous Gradient-Push handles communication and computation delays. We believe our results could be extended to also deal with dropped messages using the approach described in~\cite{hadjicostis2016robust}, in which dropped messages appear as additional communication delays, which are easily addressed in our analysis framework.

Corollary~\ref{cor:congergence_opt_known_udpate_rates} showed that when agents know their relative update rates, then asynchronous Gradient-Push can be made to converge to the minimizer of $f$ rather than that of the reweighted objective~\eqref{eqn:F_K} by appropriately scaling the step-size. After the initial preprint of this work appeared online~\cite{assran2018preprint}, a related method was proposed in~\cite{Zhang2018asyspa} to estimate and track the update rates in a decentralized manner at the cost of additional communication overhead. Another related method was proposed in~\cite{tian2018achieving} that uses gradient tracking in combination with two sets of robust, asynchronous averaging updates --- one row stochastic, the other column stochastic --- to achieve provably geometric convergence rates at the cost of additional communication overhead and storage at each agent.

While extending synchronous Gradient-Push to an asynchronous implementation has produced considerable performance improvements, it remains the case that Gradient-Push is simply a multi-agent analog of gradient descent, and it would be interesting to explore the possibility of extending other algorithms to asynchronous operation using singly-stochastic consensus matrices; \eg, exploring methods that use an extrapolation between iterates to accelerate convergence; or quasi-Newton methods that approximate the Hessian using only first-order information; or Lagrangian-dual methods that formulate the consensus constrained optimization problems using the Lagrangian, or Augmented Lagrangian, and simultaneously solve for both primal and dual variables. Furthermore, it would be interesting to establish convergence rates for asynchronous versions of these algorithms.

Lastly, we find that, in practice, agents can asynchronously and independently control the upper bound on their relative processing delays, $\tauratemax$, by using non-blocking barrier primitives, such as those available as part of the MPI-3 standard.
It may be interesting to explore treating this as an algorithm parameter, rather than something dictated by the environment, and decreasing the delay bound according to some local iteration schedule so that one can realize the speed advantages of asynchronous methods at the start of training, and obtain the benefits of synchronous methods as one approaches the minimizer.
For example, from Definition~\ref{def:reweighted_obj}, it is clear that $\norm{\xstar_{K}- \xstar} \rightarrow 0$ when $\tauratemax \to 0$.
We believe that this is another interesting direction of future work.

\bibliographystyle{abbrv}
\bibliography{../../review.bib}

\end{document}